\title{Born's Rule for Arbitrary Cauchy Surfaces}
\author{
Matthias Lienert\footnote{Fachbereich Mathematik, Eberhard-Karls-Universit\"at, Auf der Morgenstelle 10, 72076 T\"ubingen, Germany}~\footnote{E-mail: matthias.lienert@uni-tuebingen.de}\ \ and
Roderich Tumulka$^*$\footnote{E-mail: roderich.tumulka@uni-tuebingen.de}
}
\date{September 17, 2019}
\newcommand{\be}{\begin{equation}}
\newcommand{\ee}{\end{equation}}
\newcommand{\Hilbert}{\mathscr{H}}
\newcommand{\Kilbert}{\mathscr{K}}
\newcommand{\PPP}{\mathbb{P}}
\newcommand{\RRR}{\mathbb{R}}
\newcommand{\CCC}{\mathbb{C}}
\newcommand{\M}{\mathbb{M}}
\newcommand{\QQQ}{\mathbb{Q}}
\newcommand{\ZZZ}{\mathbb{Z}}
\newcommand{\NNN}{\mathbb{N}}
\newcommand{\patch}{P}
\newcommand{\scp}[2]{\langle #1|#2 \rangle}
\newcommand{\vX}{\boldsymbol{X}}
\newcommand{\vx}{\boldsymbol{x}}
\newcommand{\vy}{\boldsymbol{y}}
\newcommand{\vz}{\boldsymbol{z}}
\newcommand{\tr}{\mathrm{tr}}
\newcommand{\sA}{\mathscr{A}}
\newcommand{\sB}{\mathscr{B}}
\newcommand{\sC}{\mathscr{C}}
\newcommand{\sD}{\mathscr{D}}
\newcommand{\sE}{\mathscr{E}}
\newcommand{\sF}{\mathscr{F}}
\newcommand{\sN}{\mathscr{N}}
\newcommand{\sS}{\mathscr{S}}
\newcommand{\sW}{\mathscr{W}}
\newcommand{\free}{\mathrm{free}}
\newcommand{\foliation}{\mathscr{F}}
\newcommand{\Prob}{\mathbb{P}}
\newcommand{\sP}{\mathscr{P}}
\newcommand{\past}{{\rm past}}
\newcommand{\future}{{\rm future}}
\newcommand{\Gr}{{\rm Gr}}
\newcommand{\Sr}{{\rm Sr}}
\newcommand{\mbskl}{{M_B(s_{k\ell})}}
\newcommand{\mbsk}{{M_B(s_k)}}
\newcommand{\nbsk}{{N_B(s_k)}}
\newcommand{\nbskl}{{N_B(s_{k\ell})}}
\newcommand{\mcskl}{{M_C(s_{k\ell})}}
\newcommand{\msll}{{M_\patch(L_\ell)}}
\newcommand{\msl}{{M_\patch(L)}}
\newcommand{\vacket}[1]{{| \emptyset(#1) \rangle}}
\newcommand{\cklcheck}{\check{C}_{k\ell}}
\newcommand{\cklhat}{\widehat{C}_{k\ell}}
\newcommand{\mcsklcheck}{\check{M}_C(s_{k\ell})}
\newcommand{\mcsklhat}{\widehat{M}_C(s_{k\ell})}
\newcommand{\mcscheck}{\check{M}_C(s)}
\newcommand{\mcshat}{\widehat{M}_C(s)}
\newcommand{\mslcheck}{\check{M}_C^\varepsilon(L)}
\newcommand{\mslhat}{\widehat{M}_C^\varepsilon(L)}
\newcommand{\ncsklcheck}{\check{N}_C(s_{k\ell})}
\newcommand{\ncsklhat}{\widehat{N}_C(s_{k\ell})}
\newcommand{\ncshat}{\widehat{N}_C(s)}
\newcommand{\mcdsklcheck}{\check{M}_{C\cup D}(s_{k\ell})}
\newcommand{\mcs}{{M_C(s)}}
\newcommand{\mslepsilon}{{M_C^\varepsilon(L)}}
\newcommand{\ckl}{{C_{k\ell}}}
\newcommand{\plcheck}{{\check{\patch}_\ell^\varepsilon}}
\newcommand{\plhat}{{\widehat{\patch}_\ell^\varepsilon}}
\newcommand{\dpepsilon}{{\partial \patch^\varepsilon}}
\newcommand{\mpllcheck}{{\check{M}_{\patch}^\varepsilon(L_\ell)}}
\newcommand{\mpllhat}{{\widehat{M}_{\patch}^\varepsilon(L_\ell)}}
\newcommand{\mplcheck}{{\check{M}_{\patch}^\varepsilon(L)}}
\newcommand{\mplhat}{{\widehat{M}_{\patch}^\varepsilon(L)}}
\DeclareMathOperator{\range}{range}
\theoremstyle{plain}
\newtheorem{thm}{Theorem}
\newtheorem{prop}{Proposition}
\theoremstyle{definition}
\newtheorem{defn}{Definition}
\newcounter{remarks}
\begin{document}
\maketitle

\begin{abstract}
Suppose that particle detectors are placed along a Cauchy surface $\Sigma$ in Minkowski space-time, and consider a quantum theory with fixed or variable number of particles (i.e., using Fock space or a subspace thereof). It is straightforward to guess what Born's rule should look like for this setting: The probability distribution of the detected configuration on $\Sigma$ has density $|\psi_\Sigma|^2$, where $\psi_\Sigma$ is a suitable wave function on $\Sigma$, and the operation $|\cdot|^2$ is suitably interpreted. We call this statement the ``curved Born rule.'' Since in any one Lorentz frame, the appropriate measurement postulates referring to constant-$t$ hyperplanes should determine the probabilities of the outcomes of any conceivable experiment, they should also imply the curved Born rule. This is what we are concerned with here: deriving Born's rule for $\Sigma$ from Born's rule in one Lorentz frame (along with a collapse rule). We describe two ways of defining an idealized detection process, and prove for one of them that the probability distribution coincides with $|\psi_\Sigma|^2$. For this result, we need two hypotheses on the time evolution: 
that there is no interaction faster than light, and that there is no propagation faster than light. The wave function $\psi_\Sigma$ can be obtained from the Tomonaga--Schwinger equation, or from a multi-time wave function by inserting configurations on $\Sigma$. Thus, our result establishes in particular how multi-time wave functions are related to detection probabilities.

\medskip

Key words: detection probability; particle detector; Tomonaga-Schwinger equation; interaction locality; multi-time wave function; spacelike hypersurface.
\end{abstract}

\section{Introduction}
\label{sec:intro}

\subsection{The Curved Born Rule}
\label{sec:rule1}

The usual Born rule of non-relativistic quantum mechanics for a system of $N$ particles states that, if we detect the particles at time $t$, then 
the probability to find them in infinitesimal regions $d^3\vx_i$ around $\vx_i\in \RRR^3,~i=1,\ldots,N$,  is given by
\be
	\rho_t(\vx_1,\ldots,\vx_N)\, d^3 \vx_1 \cdots d^3 \vx_N = |\psi_t(\vx_1,\ldots,\vx_N)|^2 \, d^3 \vx_1 \cdots d^3 \vx_N,
	\label{nonrelbornrule}
\ee
where $\psi_t$ is the wave function determined by Schr\"odinger's equation. 
In relativistic space-time, this formulation of Born's rule evidently refers to a particular Lorentz frame. Thus, relativity demands a generalization of the statement, a generalization in which the role of constant-time hypersurfaces is played by spacelike hypersurfaces or, more precisely, Cauchy surfaces.\footnote{A \textit{Cauchy surface} \cite{Wiki:Cauchy,Wald} is a subset $\Sigma$ of Minkowski space-time $\M$ which is intersected by every inextensible causal (i.e., non-spacelike) curve exactly once. Some authors \cite{ON:1983} use here ``timelike curve''; the difference is that we do not allow Cauchy surfaces to contain any lightlike line segment. Still, some tangent vectors to $\Sigma$ can be lightlike.}
 
Hints for how to formulate such a generalization can be obtained from the familiar Dirac equation for a single particle in Minkowski space-time $\M$. By restricting a solution $\psi(x)$, $x\in\M$, to arguments $x \in \Sigma$ on a Cauchy surface $\Sigma$, we obtain a wave function $\psi_\Sigma$ associated with $\Sigma$, and the Dirac equation can be regarded as defining a unitary evolution between Hilbert spaces associated with different Cauchy surfaces $\Sigma$ and $\Sigma'$, i.e, a unitary isomorphism \cite{Dim82}
\be
	U_{\Sigma}^{\Sigma'} : \Hilbert_\Sigma \rightarrow \Hilbert_{\Sigma'},~~~ \psi_\Sigma \mapsto \psi_{\Sigma'}.
\ee
Explicitly, the Hilbert spaces $\Hilbert_{\Sigma}$ are given by suitable $L^2$ spaces of functions $\Sigma\to\CCC^4$ with inner product
\be\label{scp1def}
	\langle \psi_\Sigma | \chi_\Sigma \rangle 
	= \int_\Sigma d^3x \, \overline{\psi}_\Sigma(x) \, \gamma^\mu  n_\mu(x)\, \chi_\Sigma(x),
\ee
where $d^3x$ is the volume relative to the 3-metric on $\Sigma$, and $n_\mu(x)$ is the future-pointing unit normal vector field on $\Sigma$. The previous wave function $\psi_t(\vx)$ is contained in this scheme as the wave function $\psi_{\Sigma_t}(x)$ associated with the horizontal surface
\be
\Sigma_t = \{ x \in \M : x^0 = t\}\,.
\ee
In this case, $n_\mu(x) = (1,0,0,0)$; using $\overline{\psi} \gamma^0 = \psi^\dagger$, we obtain that
\be
  \langle \psi_{\Sigma_t} | \psi_{\Sigma_t}\rangle 
  = \int_{\RRR^3}d^3\vx\, |\psi_t(\vx)|^2 = \int_{\RRR^3} d^3\vx\, \rho_t(\vx)
\ee
with $|\psi|^2 := \sum_{s=1}^4 |\psi_s|^2$, summing over the spin components of $\psi$. 
This suggests that the natural analog of $\rho_t(\vx)$ for any Cauchy surface $\Sigma$, in fact the probability density on $\Sigma$, is given by
\be\label{rhoSigma1}
	\rho_\Sigma(x) = \overline{\psi}_{\Sigma}(x) \, \gamma^\mu n_\mu(x) \,  \psi_{\Sigma}(x)\,.
\ee
The right-hand side is non-negative and, in fact, equal to $|\psi_\Sigma(x)|^2$ relative to the basis of spin space associated with the Lorentz frame tangent to $\Sigma$ at $x$. Put differently, the right-hand side of \eqref{rhoSigma1} is ``$|\psi|^2$ suitably interpreted.'' (For example, this distribution was taken for granted in the covariant flux-across-surfaces theorem \cite[Eq.~(10)]{DP03}.)

For several Dirac particles, the natural analog reads
\be\label{rhoSigmaN}
\rho_\Sigma(x_1,\ldots,x_N) = \overline{\psi}_{\Sigma}(x_1,\ldots,x_N) \, \bigl[\gamma^{\mu_1}\, n_{\mu_1}(x_1) \otimes \cdots \otimes \gamma^{\mu_N}\, n_{\mu_N}(x_N) \bigr] \, \psi_\Sigma(x_1,\ldots,x_N)
\ee
with $\psi_{\Sigma}:\Sigma^N \to (\CCC^4)^{\otimes N}$, and
\begin{multline}\label{scpNdef}
\langle \psi_\Sigma|\chi_\Sigma \rangle 
= \int_{\Sigma^N} d^3x_1\cdots d^3x_N \, \overline{\psi}_\Sigma(x_1,\ldots,x_N) \: \times\\
\times\: \bigl[\gamma^{\mu_1}\, n_{\mu_1}(x_1) \otimes \cdots \otimes \gamma^{\mu_N}\, n_{\mu_N}(x_N) \bigr] \, \chi_\Sigma(x_1,\ldots,x_N)\,.
\end{multline}
This suggests a generalized version of Born's rule in relativistic space-time that we can roughly formulate as follows.

\paragraph{The curved Born rule.} 
{\it If we place ideal detectors along an arbitrary (curved) Cauchy surface $\Sigma$, then the probability distribution of the detected particle configuration has density (relative to the volume defined by the 3-metric on $\Sigma$) given by
\be
\rho_\Sigma(x_1,\ldots,x_N)= |\psi_\Sigma(x_1,\ldots,x_N)|^2\,,
\ee
with $|\cdot|^2$ suitably interpreted as in \eqref{rhoSigmaN} and $\psi_\Sigma$ the wave function associated with $\Sigma$. We call this distribution the ``curved Born distribution.''}

\bigskip

This formulation is meant to include the possibility of a variable particle number $N$, as may arise from the creation and annihilation of particles, with nonzero probabilities for several values of $N$. To this end, we may take both $\rho_\Sigma$ and $\psi_\Sigma$ to be defined on a configuration space of a variable number of particles, as we will define in \eqref{Gammadef} below, corresponding, e.g., to a wave function from Fock space. 

Moreover, we expect the curved Born rule to be equally valid in curved space-time, although we limit our considerations here to Minkowski space-time.

Two questions arise.
1. Can we say with reasonable generality how $\psi_\Sigma$ should be defined if we are given a Hamiltonian and an initial wave function on the surface $\Sigma_0=\{x^0=0\}$?, and 2. Can we prove (or derive) the curved Born rule? In this paper, we will say a bit about the first question (more in subsequent work) and a lot about the second.

Since the curved Born rule is a natural statement and easy to guess, it is tempting to simply replace the usual, ``horizontal'' Born rule (i.e., for horizontal hyperplanes) by the curved Born rule. However, while the horizontal Born rule is usually introduced as a postulate, it is not possible to introduce the curved Born rule as a postulate. That is because as soon as we have a rule for probabilities on horizontal hypersurfaces (the horizontal Born rule), the statistics of outcomes of \emph{any} experiment is determined; for example, the theoretical analysis of an experiment could include the unitary interaction between the apparatus and the object, and the outcome could be read off through a quantum measurement of the pointer (or any other display) of the apparatus on a horizontal hypersurface $\Sigma_t$ at a sufficiently late time $t$. Thus, if we assume the horizontal Born rule, then the curved Born rule is \emph{either false or a theorem}. In particular, if the curved Born rule is correct, then it requires a proof. Here, we provide such a proof.

We note that results such as Malament's theorem \cite{Heg:1998,HC:2002} put limitations on the existence of 
ideal particle detectors under conditions related to the non-occurrence of negative energies. We regard our results as a first step for simple model quantum field theories (examples will be given below) where we do not worry about negative energies, so Malament's theorem does not apply. In the case of exclusively positive energies we expect that it implies fundamental limitations to the accuracy of detectors, which complicate the situation. Here, we leave such complications aside and assume the existence of ideal detectors on horizontal surfaces. We focus on the study of ideal detectors as a question of interest in its own right that needs to be covered first. Our results apply to a certain class of quantum theories (defined below), and since relevant quantum field theories lie outside of that class, it would certainly be of interest to investigate in future work how the curved Born rule could be proven beyond that class.

Our main goal in this paper is to justify the curved Born rule from \textit{equal-time} measurement rules. As we will explain shortly, several different such justifications are possible, corresponding to different concepts of ideal detectors and different ways of approaching a curved surface $\Sigma$ as a limit.

\subsection{The Curved Born Rule as a Theorem}

Let us outline the mathematical framework (to be detailed in Sections~\ref{sec:dynamics} and \ref{sec:detectors}) in which the curved Born rule can be formulated as a theorem. It comprises three ingredients: 
1. Hilbert spaces and their relation to space-time configurations via a PVM (projection-valued measure \cite{RS1}), 2. properties of the dynamics, and 3. measurement postulates.

\subsubsection{Hilbert Space and PVM}

As the first ingredient, we assume that a Hilbert space $\Hilbert_\Sigma$ is associated with every Cauchy surface $\Sigma$. These Hilbert spaces may contain states of one or several species of particles, and of a fixed or variable number of particles (such as for a Fock space). We construct a configuration space $\Gamma(\Sigma)$ for every $\Sigma$ as follows. For any set $R$, let
\be\label{Gammadef}
   \Gamma(R) := \{q \subset R: \# q< \infty\}
\ee
be the set of all finite subsets, and
\be\label{Gammandef}
  \Gamma_n(R) := \{q \subset R: \# q=n\} \subset \Gamma(R)
\ee
its $n$-particle sector for $n\in \NNN_0$. We regard $\Gamma(\Sigma)$ as the set of all unordered configurations of a variable number of particles in $\Sigma$. (For some purposes, it is convenient to use ordered configurations $(x_1,\ldots,x_n)$ and for others unordered ones $\{x_1,\ldots,x_n\}$; see Section~\ref{sec:config} for comments about switching between the two.) For what follows, we note that for any disjoint sets $A,B$,
\be\label{GammaAB}
\Gamma(A\cup B) \cong \Gamma(A) \times \Gamma(B)
\ee
in the sense that there is an (obvious) canonical identification mapping $q\mapsto (q\cap A, q\cap B)$ between the two spaces. In the following, we will make this identification whenever convenient and simply write $=$ instead of $\cong$.

We assume further that for every Cauchy surface $\Sigma$ we are given a PVM $P_\Sigma$ on the configuration space $\Gamma(\Sigma)$ acting on the Hilbert space $\Hilbert_\Sigma$; $P_\Sigma$ represents the ``configuration observable.''\footnote{Should 
$\Hilbert_\Sigma$ contain states of several (say, $m$) species of particles, it would also be possible to take $\Gamma(\Sigma)^m$ as the configuration space, but for simplicity we do not distinguish between different species in the configuration and take $P_\Sigma$ to be suitably projected so as to be defined on $\Gamma(\Sigma)$, as described in Remark~\ref{rem:species} in Section~\ref{sec:hypersurfaceevolution} below.}  Given any $\psi_\Sigma\in \Hilbert_\Sigma$ with $\|\psi_\Sigma\|=1$, the probability measure\footnote{All sets considered in this paper are measurable, and when we speak about ``any subset'' we mean ``any measurable subset.'' The relevant $\sigma$-algebras are specified in the beginning of Section~\ref{sec:dynamics}.}
\be
\PPP^{\psi_\Sigma} (S) = \|P_\Sigma(S)\, \psi_\Sigma\|^2~~~\forall S\subseteq \Gamma(\Sigma)
\ee
on configuration space $\Gamma(\Sigma)$ is the general version of the $|\psi|^2$ distribution as in
\be
\PPP^{\psi_\Sigma}(S) = \int_S dq\,  |\psi_\Sigma(q)|^2~~~\forall S\subseteq \Gamma(\Sigma)\,,
	\label{eq:pvmdensity}
\ee
as elucidated further in Section~\ref{sec:hypersurfaceevolution} (see particularly Remarks~\ref{rem:Fock} and \ref{rem:naturalization}). For example, if $\Hilbert_{\Sigma}^{(1)}$ is the 1-particle Hilbert space of Dirac wave functions $\Sigma\to\CCC^4$ with inner product \eqref{scp1def}, and if $\Hilbert_\Sigma=\Gamma_{-}(\Hilbert_{\Sigma}^{(1)})$ is the fermionic Fock space over $\Hilbert_{\Sigma}^{(1)}$, then $\Hilbert_\Sigma$ is naturally equipped with such a PVM $P_\Sigma$. In quantum field theories, $P_\Sigma$ could be regarded as arising from the simultaneous diagonalization of all particle number operators, which in turn could be defined in terms of the field operators \cite[Sec.~6.8]{DGTZ:2005b}.

\subsubsection{Unitary Dynamics}

As the second ingredient, the dynamics is given by unitary isomorphisms $U_{\Sigma}^{\Sigma'} : \Hilbert_\Sigma \rightarrow \Hilbert_{\Sigma'}$ for every pair of Cauchy surfaces $\Sigma, \Sigma'$; that is,
\be
 \psi_{\Sigma'} = U^{\Sigma'}_{\Sigma} \, \psi_\Sigma\,.
\ee
We require that
\be
U_{\Sigma'}^{\Sigma''} \, U_{\Sigma}^{\Sigma'} = U_{\Sigma}^{\Sigma''}\quad \text{and} \quad
U_{\Sigma}^{\Sigma} = I_\Sigma
\ee 
(with $I_\Sigma$ the identity on $\Hilbert_\Sigma$). 

The hypersurface evolution mappings $U_{\Sigma}^{\Sigma'}$  are, more or less, the same as the ones defined by the well-known Tomonaga-Schwinger equation, except that the latter is formulated in the interaction picture, whereas a hypersurface evolution can correspond as well to the Schr\"odinger or the Heisenberg picture; see Remark~\ref{rem:unitaryequivalence} in Section~\ref{sec:hypersurfaceevolution} and Section~\ref{sec:TS} for details.
Moreover, many relevant examples of mappings $U_{\Sigma}^{\Sigma'}$ can be extracted from the dynamics of \textit{multi-time wave functions} \cite{dfp:1932,bloch:1934,DV82b,pt:2013c,lienert:2015a}; an overview is given in \cite{LPT:2017}; see also Section~\ref{sec:MT} below. A multi-time wave function is, in the variable particle number case, a spinor-valued function $\phi$ on the set of spacelike configurations, i.e., (if we regard configurations as unordered) on
\be\label{sSdef}
	\sS = \Bigl\{ q \subset \M : \#q<\infty \text{ and }
	(x-y)^2<0\text{ for all } x, y \in q \text{ with }x\neq y\Bigr\},
\ee 
where $(x-y)^2=(x^{\mu}-y^{\mu})(x_{\mu}-y_{\mu})$ denotes the Minkowski square (using the metric signature ${+}{-}{-}{-}$). Denoting the restriction of a function $f$ to arguments in the set $X$ by $f|_X$, 
$\psi_\Sigma$ is given by
\be\label{psiSigmaphi}
\psi_\Sigma = \phi|_{\Gamma(\Sigma)}\,,
\ee
and $U_{\Sigma}^{\Sigma'}$ is given by the map $\phi|_{{\Gamma(\Sigma)}} \mapsto\phi|_{{\Gamma(\Sigma')}}$. Thus, the curved Born rule expresses the probabilities on $\Sigma$ in terms of the multi-time wave function $\phi$. The present paper was largely motivated by the question whether detection probabilities can be expressed in this way.

\bigskip

We say that a \emph{hypersurface evolution} is defined by specifying a family of Hilbert spaces $\Hilbert_{\Sigma}$, a family of PVMs $P_\Sigma$, and a family of time evolution operators $U_\Sigma^{\Sigma'}$ as described with the following three additional properties:
\begin{itemize}
\item[(i)] {\it $P_\Sigma$ is absolutely continuous}:
\be\label{nullset}
P_\Sigma(S)=0
\ee
for every set $S\subset \Gamma(\Sigma)$ of measure zero (see Section~\ref{sec:dynamics} for details). 

\item[(ii)] {\it Unique vacuum}: 
\be\label{UV}
\text{The vacuum subspace, $\range P_\Sigma(\{\emptyset\})$, is 1-dimensional.}
\ee
Note that $P_\Sigma(\emptyset)=0$ whereas $P_\Sigma(\{\emptyset\})$ is the projection associated with the vacuum configuration; that is because $\emptyset$ as a subset of $\Gamma(\Sigma)$ contains no configuration whereas $\emptyset\in\Gamma(\Sigma)$ denotes the 0-particle configuration. Note also that while the word ``vacuum'' is often used for ``ground state,'' we use it here for states with particle number 0, such as the Fock vacuum. 
\item[(iii)] \textit{Factorization of the PVM.} For any Cauchy surface $\Sigma$ and any subset $A\subseteq \Sigma$, there is an associated Hilbert space $\Hilbert_A$ and a PVM $P_A$ on $\Gamma(A)$ acting on $\Hilbert_A$ such that, for any $A,B\subseteq \Sigma$ with $A\cap B=\emptyset$,
	\be\label{HilbertAB}
	\Hilbert_{A\cup B} \cong \Hilbert_A \otimes \Hilbert_B
	\ee
	(where $\cong$ means unitarily equivalent; we will make this identification whenever convenient and simply write $=$ instead of $\cong$) and
	\be\label{PAB}
	P_{A\cup B}(S_A\times S_B) = P_A(S_A) \otimes P_{B}(S_B)
	\ee
	for all $S_A\subseteq \Gamma(A)$ and $S_B\subseteq \Gamma(B)$. 
\end{itemize}

All three conditions are familiar from Fock spaces (see Section~\ref{sec:hypersurfaceevolution} for details); in particular, they are fulfilled if $\Hilbert_{\Sigma}$ is the Fock space $\Gamma_{-}(\Hilbert^{(1)}_\Sigma)$ mentioned above with the natural PVM. 
From (ii) and (iii) it follows that also in every $\Hilbert_A$, the range of $P_A(\{\emptyset\})$ is 1-dimensional, as $P_{\Sigma}(\{\emptyset\})=P_A(\{\emptyset\}) \otimes P_{\Sigma\setminus A}(\{\emptyset\})$.

We abbreviate a given hypersurface evolution as $\sE=(\Hilbert_\circ,P_\circ,U_\circ^{\circ})$, with the $\circ$ symbol as a placeholder for Cauchy surfaces. In terms of a hypersurface evolution, the curved Born rule can abstractly be stated as saying that a configuration in the set $S\subseteq \Gamma(\Sigma)$ is found with probability
\be
\PPP^\psi(S) = \bigl\|P_\Sigma(S)\, U^\Sigma_{\Sigma_0} \psi_0 \bigr\|^2
\ee
with $\psi_0$ the initial datum on $\Sigma_0$.

\bigskip

For deriving the curved Born rule, we need the following two assumptions on the hypersurface evolution, (IL) and (PL).

\begin{figure}[th]
\centering
 \includegraphics[width=0.6\textwidth]{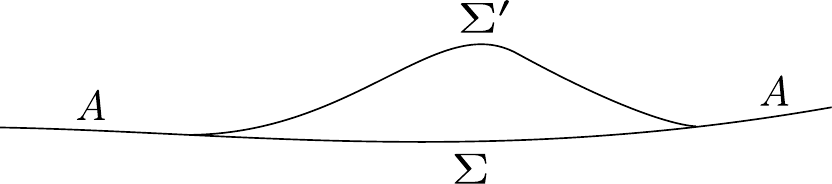}
 \caption{Example of the hypersurfaces appearing in the definition of ``interaction locality.''}
 \label{fig:interaction_locality}
\end{figure}

\begin{enumerate}
	\item[(IL)] \textit{Interaction locality.} Let $\Sigma, \Sigma'$ be two Cauchy surfaces and $A \subseteq \Sigma \cap \Sigma'$ (see Figure~\ref{fig:interaction_locality}). Then
	the evolution operator $U_{\Sigma}^{\Sigma'}$ acts as the identity on $\Hilbert_A$, i.e., there exists a unitary isomorphism $U_{\Sigma \backslash A }^{\Sigma' \backslash A}$ such that
	\be
		U_{\Sigma}^{\Sigma'} = I_{A} \otimes U_{\Sigma \backslash A }^{\Sigma' \backslash A}.
		\label{eq:il}
	\ee
	The operator $U_{\Sigma \backslash A }^{\Sigma' \backslash A}$ does not depend on $A$ except through $\Sigma\setminus A$ and $\Sigma'\setminus A$; that is, if $\tilde\Sigma$ and $\tilde\Sigma'$ are two further Cauchy surfaces, $\tilde{A} \subseteq \tilde\Sigma\cap \tilde\Sigma'$, $\tilde\Sigma\setminus \tilde A = \Sigma \setminus A$, and $\tilde\Sigma'\setminus \tilde A= \Sigma'\setminus A$, then $U^{\tilde\Sigma'\setminus \tilde A}_{\tilde\Sigma \setminus \tilde A} = U^{\Sigma'\setminus A}_{\Sigma \setminus A}$.
\end{enumerate}

Intuitively, (IL) expresses that there is no interaction between spacelike separated regions. Our second assumption, formulated as (PL) below, characterizes in terms of a given hypersurface evolution what it means to say that wave functions propagate no faster than light. To formulate (PL), we need to prepare with a couple of definitions.

\begin{figure}[tp]
\centering
 \includegraphics[width=0.6\textwidth]{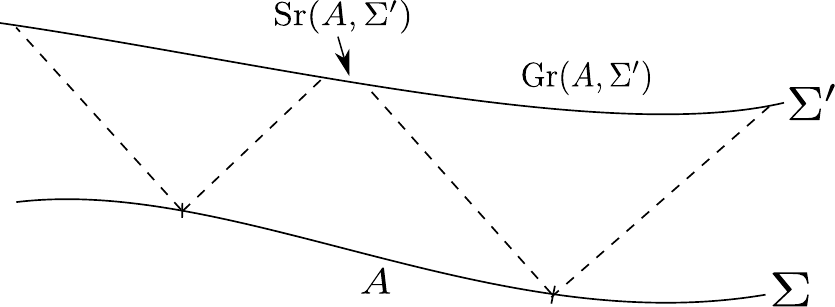}
 \caption{Illustration of grown and shrunk sets of $A \subset \Sigma$ with respect to $\Sigma'$.}
 \label{fig:grown_shrunk_sets}
\end{figure}

\begin{defn}
	Let $\Sigma, \Sigma'$ be Cauchy surfaces and $A \subseteq \Sigma$. We then define the \textit{grown set} of $A$ in $\Sigma'$ as (see Figure~\ref{fig:grown_shrunk_sets})
	\be
		\Gr(A,\Sigma') = [\future(A) \cup \past(A)] \cap \Sigma'.
	\ee
	Similarly, we define the \textit{shrunk set} of $A$ in $\Sigma'$ as:
	\be
		\Sr(A,\Sigma') = \{ x' \in \Sigma' : \Gr(\{ x'\}, \Sigma) \subseteq A\}.
	\ee
\end{defn}

	Since $\future(A)\cup \past(A)$ is also called the ``domain of influence'' of $A$, $\Gr(A,\Sigma')$ is the intersection of the domain of influence of $A$ with $\Sigma'$. Likewise,
	$\Sr(A,\Sigma')$ is the intersection of the domain of dependence of $A$ \cite{Wald} with $\Sigma'$. In particular, $\Sr(A,\Sigma')$ does not depend on which Cauchy surface $\Sigma$ we regard $A$ as a part of. Neither does $\Gr(A,\Sigma')$. Note also that 
	\be
	\Sr(A,\Sigma') = \Gr(A^c,\Sigma')^c = \Sigma' \setminus \Gr(\Sigma\setminus A, \Sigma')\,.
	\ee
Finally, we note that in curved space-time, where $\Gr(A,\Sigma')$ and $\Sr(A,\Sigma')$ can be defined in the same way, the name ``grown'' should not be taken to imply that $\Gr(A,\Sigma')$ had larger diameter in terms of the metric than $A$. 

For any $A\subseteq \Sigma$, let $\forall(A)$ denote the set of configurations on $\Sigma$ for which all particles lie in $A$,
\be
\forall(A) = \{ q \in \Gamma(\Sigma) : q \subseteq A \}\,.
\ee

\begin{defn}
	Given a hypersurface evolution $\sE$, we say that $\psi_\Sigma \in \Hilbert_\Sigma$ is \textit{concentrated in} $A \subseteq \Sigma$ iff
	\be
		P_\Sigma(\forall(A)) \psi_\Sigma = \psi_\Sigma\,,
	\ee
	that is, iff $\psi_\Sigma \in \range P_{\Sigma}(\forall(A))$.
\end{defn}

We are now ready to formulate our second assumption on the given hypersurface evolution $\sE$:

\begin{enumerate}
\item[(PL)] \textit{Propagation locality.} The following is true for all Cauchy surfaces $\Sigma, \Sigma'$ and all 
	subsets $A \subseteq \Sigma$: whenever $\psi_\Sigma$ is concentrated in $A$, then $\psi_{\Sigma'} = U_{\Sigma}^{\Sigma'} \psi_\Sigma$ is concentrated in $\Gr(A,\Sigma')$.
\end{enumerate}

We note that (PL) can equivalently be characterized by saying that the support of a wave function in configuration space grows at most at the speed of light. (This condition needs a careful formulation; in the case of conserved particle number, such a formulation is given in \cite[Section~7.1]{pt:2013a}.)

Non-trivial interacting models which satisfy these two postulates (when re-formulated as a hypersurface evolution) have, for example, been developed in the context of multi-time wave functions \cite{lienert:2015a,lienert:2015c,LN:2015,pt:2013c,pt:2013d}; see Section~\ref{sec:examples} for more detail.

\subsubsection{Detection Process}

As the third and final ingredient of the derivation of the curved Born rule, we need to give an appropriate definition of a \textit{detection process} along $\Sigma$. This term is used here to refer to the alternating use of unitary dynamics (between ``detections'') and the projection postulate (for the ``detections''). Our goal is to justify the curved Born rule using measurement postulates only for equal times, i.e., horizontal surfaces. We therefore need to approximate $\Sigma$ by flat pieces of hypersurfaces in a suitable way. Once a definition of a detection process is given, the curved Born rule can be obtained as a theorem.
It turns out that there are at least two different ways of defining a detection process, both leading to the curved Born rule:

\begin{enumerate}
	\item Approximating $\Sigma$ by flat pieces of hyperplanes as in Figure~\ref{fig:detectiondefs}(a), and considering detectors on the flat pieces.
	\item Approximating $\Sigma$ by horizontal pieces relative to a preferred Lorentz frame as in Figure~\ref{fig:detectiondefs}(b), and considering detectors on the horizontal pieces.
\end{enumerate}

\begin{figure}[ht]
\centering
 \includegraphics[width=0.8\textwidth]{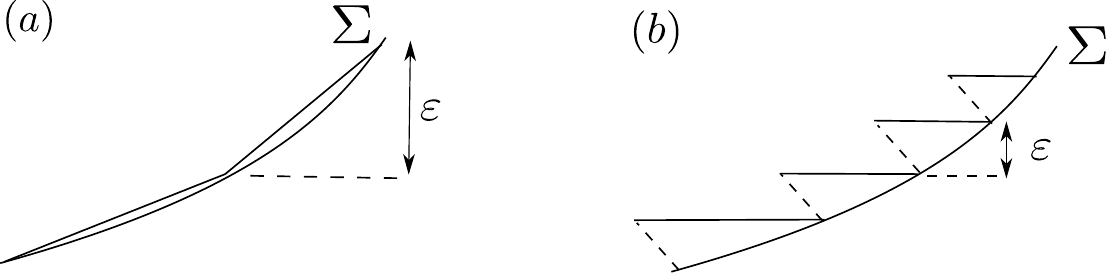}
 \caption{Possible discretization schemes of $\Sigma$. The time dimension is drawn vertically, space horizontally. (a) Approximation by arbitrary pieces of hyperplanes. (b) Approximation by pieces of equal-time surfaces. The dashed lines at $45^{\circ}$ correspond to light cones.}
 \label{fig:detectiondefs}
\end{figure}

The first approach shall be explored in subsequent work and only briefly outlined here. It corresponds more to the way detectors work physically. In the first approach we assume, in addition to (IL) and (PL), invariance under the Poincar\'e group (i.e., Lorentz transformations and space-time translations). Specifically, we suppose that the Born rule and its associated collapse rule hold, not only on the horizontal hyperplanes $\Sigma_t$, but in every Lorentz frame and thus on every (tilted) hyperplane. More precisely, we suppose that we can 
carry out a quantum measurement of whether or not the observed system contains any particles in a chosen region $R$ of any hyperplane.\footnote{As we will show in Proposition~\ref{prop:cdc}, a probability distribution on the configuration space $\Gamma(\Sigma)$ is already determined by the probabilities of the events that there is a particle in the region $R\subseteq \Sigma$.} 
Then (IL) and (PL) will allow us to compute also for any piecewise-flat Cauchy surface [as the approximating hypersurface in Figure~\ref{fig:detectiondefs}(a)] the joint probability distribution of all detection results, and thus to conclude a Born rule for such surfaces. Next, a limiting process will extend the Born rule to every Cauchy surface. 

The second approach, which we study in this paper, is based on approximating $\Sigma$ by \emph{horizontal} pieces $B_k$ relative to a preferred frame as in Figure~\ref{fig:detectiondefs}(b). Here, 
we assume that for a subset $B_{k\ell}$ of a horizontal piece $B_k$, we can carry out an ideal (projective) quantum measurement of whether there is a particle in $B_{k\ell}$, with the probabilities of outcomes given by the Born rule on $B_k$ and the corresponding collapse rule for the post-measurement wave function. As $\cup_k B_{k\ell}$ approaches a set $\patch_\ell\subseteq \Sigma$, we define the probability $p_\ell$ of finding a particle in $\patch_\ell$ to be the limiting probability of finding a particle in $B_{k\ell}$ for any $k$. This approach has the advantage of deriving the curved Born rule directly from the ``horizontal Born rule'' and the horizontal collapse rule. A difficulty here that requires some attention is that the same particle could be registered twice by different detectors, say on $B_{k\ell}$ and $B_{k+1,\ell+1}$, as depicted in Figure~\ref{fig:double}. 
We make it part of the definition of $p_\ell$ that double detections are excluded. We exclude them by tracing out any particles after they have passed one of the $B_k$; physically, this would more or less correspond to assuming that, when detected, a particle gets marked (changes its state in a certain way), that marked and unmarked particles do not interact, and that the future detectors ignore marked particles. That is a clean way of avoiding double detections.\footnote{We conjecture that these double detections do not actually change the probability of finding a particle in $\patch_\ell$ in the limit $\varepsilon\to 0$. But at present we do not have a proof for that.}

Theorem~\ref{thm:prob} asserts that with this definition of ``detection probabilities,'' they agree with the curved Born rule. A feature of the second approach not shared with the first is that the definition of the detection probabilities is given solely in terms of horizontal surfaces. Another feature of the second approach is that it does not assume Lorentz invariance of the dynamics, only (IL) and (PL). In particular, it does not assume translation invariance of the time evolution and can thus admit, e.g., time-dependent external fields.

\begin{figure}[ht]
\centering
 \includegraphics[width=0.35\textwidth]{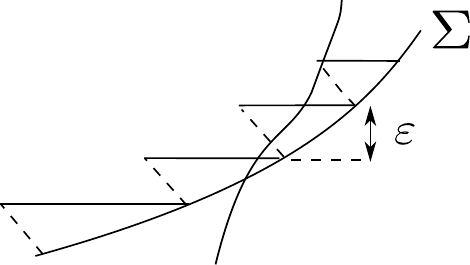}
 \caption{For the second discretization scheme, a time-like curve can cross several of the detection surfaces, i.e., a particle can get detected several times.}
 \label{fig:double}
\end{figure}

\subsubsection{Main Result}

We are now ready to give a first, informal statement of our main result, Theorem~\ref{thm:prob} (based on the second notion of detection process). Let $\Sigma_{k\varepsilon}$ denote the horizontal hyperplane at time $t=k\varepsilon$. Starting from a partition $\sP=(\patch_1,\ldots,\patch_r)$ of $\Sigma$ (with some technical assumptions and modifications described in Section~\ref{sec:detectors}) and defining $B_{k\ell}$ as the piece of $B_k\subseteq \Sigma_{k\varepsilon}$ vertically above $\patch_\ell$, we set $L_\ell=1$ if a particle gets detected in $B_{k\ell}$ for any $k$ and $L_\ell=0$ otherwise, and regard $L=(L_1,\ldots,L_r)$ as the outcome of the experiment. The joint probability distribution of $L$ is denoted $\Prob^{\psi_0,\varepsilon}_{\det, \sP}(L)$, and $\msl$ denotes the set of configurations in $\Gamma(\Sigma)$ such that, for each $\ell=1,\ldots,r$, there is no point in $\patch_\ell$ if $L_\ell=0$ and at least one point in $\patch_\ell$ if $L_\ell=1$.  
The theorem can be summarized as follows.\footnote{In this paper, the ``future'' of a set $R$ in space-time means the \emph{causal future}, often denoted $J^+(R)$ \cite{ON:1983}, as opposed to the timelike future $I^+(R)$; note that $R\subseteq J^+(R)$; likewise for the ``past.''}

\begin{thm}[informal statement]\label{thm:informal}
	Let $\Sigma$ be a Cauchy surface in the future of $\Sigma_0=\{x^0=0\}$ in Minkowski space-time, $\psi_0\in\Hilbert_{\Sigma_0}$ with $\|\psi_0\|=1$, and $(\Hilbert_{\circ},P_\circ,U_{\circ}^{\circ})$ a hypersurface evolution satisfying (IL) and (PL).  Then for any (admissible) partition $\sP$ of $\Sigma$, the detection probabilities $\Prob^{\psi_0,\varepsilon}_{\det, \sP}(L)$ converge in the limit $\varepsilon \rightarrow 0$ to those given by the curved Born distribution, i.e.,
\be
	\lim_{\varepsilon\to 0} \Prob^{\psi_0,\varepsilon}_{\det, \sP}(L) 
	=  \bigl\| P_\Sigma(\msl) \psi_\Sigma \bigr\|^2 
	=  \int\limits_{\msl} \!\!\!\! dq\, |\psi_\Sigma(q)|^2 \,.
	\label{eq:generalizedbornrule}
\ee
\end{thm}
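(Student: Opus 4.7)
The overall strategy is to use (IL) to factorize both the Hilbert space and the unitary evolution across the vertical columns erected over each patch $\patch_\ell$, and to use (FS) to show that within each column the iterated horizontal detection/tracing reduces, in the limit $\varepsilon\to 0$, to a single projective measurement of whether a particle lies in $\patch_\ell\subseteq\Sigma$ itself. The final assembly over $\ell$ is then provided by the PVM factorization property (iii).

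First I would rewrite $\Prob^{\psi_0,\varepsilon}_{\det,\sP}(L) = \|W_L\, \psi_0\|^2$, where $W_L$ is a composition, ordered in $k$, of unitaries $U_{\Sigma_{(k-1)\varepsilon}}^{\Sigma_{k\varepsilon}}$, ``no-particle'' projectors $P_{B_{k\ell}}(\{\emptyset\})$ at every step before the first detection in column $\ell$ (if $L_\ell=0$ throughout, or for $k<k_\ell$ when $L_\ell=1$), a single ``at-least-one-particle'' projector $I-P_{B_{k\ell}}(\{\emptyset\})$ at the detection step $k=k_\ell$, and the trace-out/re-vacuum isometry on $\Hilbert_{B_{k_\ell\ell}}$ afterwards (well-defined thanks to (ii) and (iii), which make the vacuum subspace one-dimensional and factorizable). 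Summing over the possible first-detection times $k_\ell$ yields the total probability for outcome $L$. Next, for each $k$ I would apply (IL) to $U_{\Sigma_{(k-1)\varepsilon}}^{\Sigma_{k\varepsilon}}$ with $A$ the overlap $\Sigma_{(k-1)\varepsilon}\cap\Sigma_{k\varepsilon}$, combined with (iii) across the columns $C_\ell := (\future(\patch_\ell)\cup \past(\patch_\ell))\cap \bigcup_k \Sigma_{k\varepsilon}$. Iterating across $k$, and using admissibility of $\sP$ to ensure that for small $\varepsilon$ each column's lightcone spreading between consecutive time steps stays inside that column, $W_L$ factors as a tensor product $\bigotimes_\ell W_L^{(\ell)}$ of column-local operators, giving $\Prob^{\psi_0,\varepsilon}_{\det,\sP}(L) = \prod_\ell \|W_L^{(\ell)}\, \psi_0^{(\ell)}\|^2$ up to errors to be controlled.

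Third, I would analyze each column-local factor using (FS). A wave-function component concentrated in $B_{(k-1)\ell}$ at time $(k-1)\varepsilon$ evolves to a component concentrated in $\Gr(B_{(k-1)\ell},\Sigma_{k\varepsilon})\subseteq B_{k\ell}$, so the telescoped product of ``no-particle'' projectors preserves the column-vacuum subspace, while a single ``at-least-one'' projector extracts precisely the part that will reach $\patch_\ell$ on $\Sigma$. Passing to $\varepsilon\to 0$ along a nested refinement of the $\Sigma_{k\varepsilon}$, a strong-convergence argument should yield
\be
\bigl\|W_L^{(\ell)}\, \psi_0^{(\ell)}\bigr\|^2 \longrightarrow \bigl\|P_{\patch_\ell}(S_\ell)\, \psi_\Sigma^{(\ell)}\bigr\|^2,
\ee
where $S_\ell=\{\emptyset\}$ if $L_\ell=0$ and $S_\ell=\Gamma(\patch_\ell)\setminus\{\emptyset\}$ if $L_\ell=1$. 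Reassembling via (iii) and the identification $\msl = \prod_\ell S_\ell$ produces $\|P_\Sigma(\msl)\,\psi_\Sigma\|^2$, and (i) converts this to $\int_{\msl} dq\,|\psi_\Sigma(q)|^2$.

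The main obstacle will be justifying the column-wise tensor factorization rigorously in the presence of the trace-out prescription: once $\Hilbert_{B_{k_\ell\ell}}$ is replaced by its vacuum, entanglement between the detected particle and other columns --- built up either in $\psi_0$ or during earlier steps --- could propagate into later projections and contaminate the clean column decomposition. The key estimate to establish is that (IL) plus (FS) force such entanglement to respect the column partition up to errors of order $\varepsilon$ per time step, summable to $O(1)$ total error that still vanishes along a suitable refinement; promoting this from pointwise convergence on column-localized test states to strong convergence of $W_L$ on all of $\Hilbert_{\Sigma_0}$, together with handling configurations in which many patches simultaneously detect, is where I expect most of the technical work to lie.
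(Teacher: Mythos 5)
Your proposed column-wise factorization is the step that breaks. Writing $\Prob^{\psi_0,\varepsilon}_{\det,\sP}(L)=\prod_\ell\|W_L^{(\ell)}\psi_0^{(\ell)}\|^2$ would make the outcomes in different patches statistically independent, which is false in general and contradicts the very statement you are proving: the limit $\|P_\Sigma(\msl)\psi_\Sigma\|^2$ retains all correlations between patches (already a single particle spread over $\patch_1$ and $\patch_2$ gives anticorrelated outcomes, and entangled $\psi_0$ gives genuine nonlocal correlations). (IL) cannot deliver such a splitting: it factorizes $U_\Sigma^{\Sigma'}$ only over a region $A\subseteq\Sigma\cap\Sigma'$ common to the two Cauchy surfaces, not over vertical world-tubes $(\future(\patch_\ell)\cup\past(\patch_\ell))\cap\bigcup_k\Sigma_{k\varepsilon}$, which are not subsets of any one Cauchy surface. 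Your supporting geometric claim also fails: the light-cone spreading per step is $O(\varepsilon)$, but there are $O(1/\varepsilon)$ steps, so over the duration of the experiment wave functions migrate an $O(1)$ horizontal distance between columns; in particular $\Gr(B_{(k-1)\ell},\Sigma_{k\varepsilon})\not\subseteq B_{k\ell}$, since the grown set spills over the vertical boundary of the column by $\varepsilon$ at every step. This is an $O(1)$ obstruction, not an error ``summable to a vanishing total.'' A further (smaller) gap: the collapse prescription traces out $\Hilbert_{B_k\cup R_k}$ and re-inserts the vacuum, which generally produces a mixed state even from pure $\psi_0$; so the outcome probability is not of the form $\|W_L\psi_0\|^2$ for a single composition of unitaries, projectors and isometries, and one must work with density matrices (or sums of Kraus terms) rather than a single vector.

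For contrast, the paper never decomposes over patches. It first proves by induction in $k$, using auxiliary density matrices on the regions $A_k$ and reduced evolution operators supplied by (FS) and (IL), a closed formula $\Prob^{\,\cdot\,,\varepsilon}_{\sP}(s)=\langle\psi_\Sigma|P(s)|\psi_\Sigma\rangle$, where $P(s)$ is a tensor product over the \emph{time-slices} $C_k$ of $\Sigma$ (the pieces of $\Sigma$ between $\Upsilon_{k-1}$ and $\Upsilon_k$) of the detection projections on $B_k$ transported to $C_k$; all inter-patch correlations remain encoded in $\psi_\Sigma$. It then sandwiches $\sum_{s:L}P(s)$ between true projections $P_\Sigma(\cdot)$ of shrunk and grown configuration sets; here the only propagation estimate needed is across a single slab of temporal thickness $\varepsilon$ (from $B_{k\ell}$ down to the adjacent piece $C_k$), so the geometric error is uniformly $O(\varepsilon)$ and does not accumulate. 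Finally it takes $\varepsilon\to 0$ using monotonicity of the grown/shrunk patches, $\sigma$-continuity of the measure, and the admissibility condition $\mu_\Sigma(\partial\patch_\ell)=0$ together with axiom (i) to discard configurations touching the patch boundaries. If you want to repair your outline, the factorization supported by (iii), (IL) and (FS) is the one over the surface pieces $C_k$, not over the patch columns, and the limit argument should compare shrunk and grown sets on $\Sigma$ rather than attempt strong convergence of a patch-local evolution operator.
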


The full technical statement of Theorem~\ref{thm:prob} is given in Section~\ref{sec:thm}. It turns out that the events $\msl$ for different choices of $\patch_\ell$ determine the distribution $\|P_\Sigma(\cdot)\psi_\Sigma\|^2$ uniquely. The proof of Theorem~\ref{thm:prob} makes no special use of dimension $3+1$ and applies equally in dimension $d+1$ for any $d\in\NNN$.

In Theorem~\ref{thm:prob}, we regard a hypersurface evolution as given, although one often starts out from just a Hamiltonian that defines the evolution only between horizontal surfaces $\Sigma_t$. In subsequent work, we address the question of existence and uniqueness of a hypersurface evolution satisfying (IL) and (PL) extending the evolution between horizontal surfaces defined by a given Hamiltonian. On the other hand, Theorem~\ref{thm:informal} alone already suggests the uniqueness of the hypersurface evolution because, if different ones existed, they could not be expected to lead to the same distribution over $\Gamma(\Sigma)$ for every $\Sigma$.

\bigskip

This paper is structured as follows. In Section~\ref{sec:rem}, we make a number of remarks that put our results in perspective. In Section~\ref{sec:dynamics}, we provide a mathematical discussion of the new concept of a hypersurface evolution $(\Hilbert_\circ,P_\circ,U_\circ^{\circ})$ and the properties (IL) and (PL). In Section~\ref{sec:examples}, we provide examples of such evolutions, starting from the free Dirac evolution, the Tomonaga-Schwinger equation, and multi-time equations. In Section~\ref{sec:detectors},  we lay out in detail the definition of ``ideal detector'' used here. 
In Section~\ref{sec:thm}, a precise technical formulation of Theorem~\ref{thm:prob} is given. Section~\ref{sec:proofs} is concerned with its proof. 
In Section~\ref{sec:conclusions}, we conclude.

\section{Remarks}\label{sec:rem}

\begin{enumerate}
\setcounter{enumi}{\theremarks}
	\item\label{rem:non-interacting} Here is a Corollary to Theorem~\ref{thm:prob}. Consider $N$ species of particles without interspecies interaction, a hypersurface evolution for each species, and $N$ Cauchy surfaces $\Sigma^{(i)}$. Suppose that detectors are placed along $\Sigma^{(i)}$ that detect only particles of species $i$  but do not interact with particles of other species. (Note that some points of $\Sigma^{(i)}$ may lie in the future of some points of $\Sigma^{(j)}$, $j\neq i$.) Then it can be shown (although we do not give details in this paper) that the curved Born rule holds in the following form: The joint probability distribution $\PPP$ of the detected configurations $q^{(i)}\in \Gamma(\Sigma^{(i)})$, $i=1,\ldots,N$, is given by
	\be
	\PPP(S) = \Bigl\|P_{\Sigma^{(1)} \ldots \Sigma^{(N)}}(S) \: U^{\Sigma^{(1)}}_{\Sigma_0} \!\otimes \cdots\otimes U^{\Sigma^{(N)}}_{\Sigma_0}\: \psi_0 \Bigr\|^2
	\ee
	for any $S\subseteq \Gamma(\Sigma^{(1)})\times \cdots \times \Gamma(\Sigma^{(N)})$ and $\psi_0\in\Hilbert_{\Sigma_0}^{(1)}\otimes \cdots \otimes \Hilbert_{\Sigma_0}^{(N)}$, where $P_{\Sigma^{(1)} \ldots\Sigma^{(N)}}$ is the product PVM \cite[Corollary~7]{DGTZ:2005} of $P_{\Sigma^{(1)}},\ldots,P_{\Sigma^{(N)}}$, i.e., defined by
	\be
	P_{\Sigma^{(1)} \ldots\Sigma^{(N)}}(S_1\times \cdots \times S_N) = P_{\Sigma^{(1)}}(S_1) \otimes \cdots \otimes P_{\Sigma^{(N)}}(S_N)\,.
	\ee
	\item Not all models of interest involve particle creation or annihilation, as some have a fixed particle number $N$. This remark is about how to treat such models. Since axiom (iii), factorization of the PVM, requires that $\Hilbert_\Sigma \cong \Hilbert_A \otimes \Hilbert_{\Sigma\setminus A}$, and since, depending on the positions of the particles, the number of particles in $A\subseteq \Sigma$ can vary, $\Hilbert_A$ needs to be a Hilbert space of a variable number of particles, and so does $\Hilbert_{\Sigma\setminus A}$. Thus, being the tensor product, also $\Hilbert_\Sigma$ needs to allow for a variable number of particles (in fact, as follows from considering smaller and smaller $A$, including arbitrarily large numbers of particles, as in a Fock space). That is, the hypersurface evolution must be defined for arbitrary numbers of particles, while the actual initial state $\psi_0$ may well be concentrated on one particular number $N$ with the consequence that, if the hypersurface evolution involves no creation or annihilation, $\psi_\Sigma$ is an $N$-particle state on every $\Sigma$.
	\item {\it Curved collapse rule.} Associated with the usual (``horizontal'') Born rule \eqref{nonrelbornrule} is a collapse rule that we use in our definition of detection probabilities on $\Sigma$ and that can be formulated as follows. \emph{If a detector is applied at time $t$ to a system with wave function $\psi_t$ and tests only whether the configuration lies in a certain set $S\subseteq \Gamma(\RRR^3)$ in configuration space, then the wave function immediately after $t$ is either $\sN_1\, P(S)\, \psi_t$ or $\sN_0\, (I-P(S))\, \psi_t$, depending on whether the outcome was yes or no.} (Here, $P(\cdot)$ is the PVM on $\Gamma(\RRR^3)$ representing the configuration observable, and the normalizing constants are $\sN_1 = \|P(S)\, \psi_t\|^{-1}$ and $\sN_0 = \|(I-P(S))\, \psi_t \|^{-1}$.)
	
	It is obvious to guess the statement of a similar collapse rule for arbitrary (curved) Cauchy surfaces $\Sigma$: \emph{If a detector is applied on $\Sigma$ to a system with wave function $\psi_\Sigma$ and tests only whether the configuration lies in a certain set $S\subseteq \Gamma(\Sigma)$, then the wave function immediately after $\Sigma$ is either $\sN_{1\Sigma}\, P_\Sigma(S)\, \psi_\Sigma$ or $\sN_{0\Sigma}\, (I_\Sigma-P_\Sigma(S))\, \psi_\Sigma$, depending on whether the outcome was yes or no.}
	
	The reason why we make no attempt in this paper to prove such a rule for $\Sigma$ is that our concepts of detection process are such that the detectors will obtain very detailed information about where on $\Sigma$ particles can be found---in some directions with small inaccuracy $\varepsilon$ tending to zero. Much of the information will be discarded when we define the outcome of the experiment to be merely whether there was a particle in each of the patches $P_\ell$; but the detectors actually obtain more precise information and correspondingly collapse the wave function more narrowly than indicated in the curved collapse rule just formulated. Put differently, the detection processes we are considering are \emph{not} processes that ``test only whether the configuration lies in $S$.''
	\item 	A prior result related to our theorem was obtained by Bloch \cite{bloch:1934} in the context of multi-time wave functions. Bloch considered a multi-time wave function $\phi(x_1,\ldots,x_N)$ of $N$ non-interacting particles; the position of particle $i$ gets measured at time $t_i$ (relative to some given Lorentz frame) with outcome $\vX_i\in\RRR^3$. Then, Bloch showed using the Born rule and projection postulate at equal times, the joint distribution of the outcomes has density $|\phi(X_1,\ldots,X_N)|^2$ with $X_i=(t_i,\vX_i)$. As a corollary, suppose the $N$ particles do interact in principle but particle $i$ is confined, during a certain time interval, to a region $V_i \subset \M$ such that any two regions are spacelike separated, and the detectors are set up along horizontal surfaces $R_i \subset V_i,~i=1,\ldots,N$ (see Figure~\ref{fig:bloch}). Then detection on any $\Sigma$ containing all $R_i$ yields a configuration $(X_1,\ldots,X_N)$ with $X_i\in R_i$ and distribution density $|\phi(X_1,\ldots,X_N)|^2=|\psi_\Sigma(X_1,\ldots,X_N)|^2$.
	
\begin{figure}[ht]
\centering
 \includegraphics[width=0.65\textwidth]{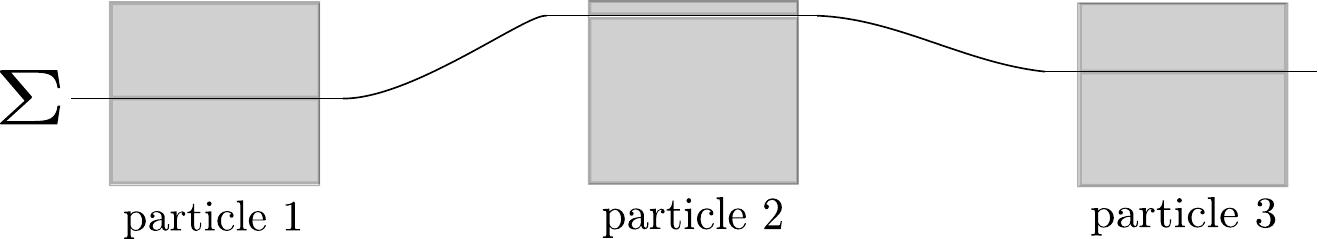}
 \caption{Space-time diagram of the situation to which Bloch's analysis in \cite{bloch:1934} applies: $N$ particles ($N=3$ shown) are confined to spacelike separated space-time regions (shaded); detectors are placed along a Cauchy surface $\Sigma$ that is horizontal in each region. Adapted from \cite{LPT:2017}.}
 \label{fig:bloch}
\end{figure}

	Bloch's result in the latter form is a special case of the curved Born rule and Theorem~\ref{thm:prob}; in the former form it is a special case of Remark~\ref{rem:non-interacting} above. Our result goes further in that (i)~we do not require that $\Sigma$ be horizontal in the regions considered; (ii)~we do not require the absence of interaction in the regions considered; and (iii)~we admit a variable number of particles.
	\item Some natural questions about multi-time wave functions $\phi$ are, how do we know what the ``right'' definition of, or the ``right'' equations for, $\phi$ are? After all, there are many functions on $\sS$ as in \eqref{sSdef} that agree with $\psi_t$ on all simultaneous configurations (i.e., on $\Gamma(\Sigma_t)$), and there may be several simple criteria selecting particular such functions. And, how do we know that $\sS$ is the ``right'' domain of definition for $\phi$? Our results provide some answers: Since $\psi_\Sigma$ is the restriction of $\phi$ to $\Gamma(\Sigma)$, the curved Born rule provides a direct connection between $\phi$ and detection probabilities---quantities that are measurable, at least in principle.  Even if many different functions could be called wave functions, the one that we know is directly connected to detection probabilities is the one whose evolution, when expressed as a hypersurface evolution, satisfies (IL) and (PL). Moreover, this connection holds only for spacelike configurations (except in the absence of interaction), so that $\sS$ is naturally selected as a domain. (Note that in most models it is selected as well by the condition that the multi-time evolution equations be consistent \cite{pt:2013a,pt:2013c}.) 
\item We now describe the implications of our result for Bohmian mechanics \cite{DT09}. But first we recall that, in contrast to other versions of quantum mechanics, the definition of Bohmian mechanics involves no postulates about detection results, and already the horizontal Born rule about detection results and the associated collapse rule are theorems derived from the equation of motion applied to all particles (including those forming the detector) and the quantum equilibrium distribution of their configuration \cite[Sec.~9.1--2]{DT09}.

Let us turn to a single Dirac particle: the Bohmian world line is an integral curve of the current vector field $j^\mu=\overline{\psi} \gamma^\mu \psi$ (which is everywhere timelike or lightlike) and is random with $|\psi|^2$ distributed initial condition. It is known that the random point where the trajectory intersects a given Cauchy surface $\Sigma$ in the absence of detectors has the curved Born distribution $\overline{\psi}(x)\, \gamma^\mu n_\mu(x)\, \psi(x)\, d^3x$. Our result shows that also in the presence of detectors along $\Sigma$, the intersection point, which coincides with the detection point, has the curved Born distribution on $\Sigma$ (whereas the intersection point with another surface $\Sigma'$ does not). 

For several particles (with or without interaction), Bohmian mechanics postulates that one foliation $\foliation$ of $\M$ into Cauchy surfaces (not necessarily horizontal) is singled out in nature, and the law of motion \cite{HBD} depends on it. It is known that, in the absence of detectors, the intersection points $X_1,\ldots,X_N$ of the $N$ world lines with a Cauchy surface $\Sigma$ jointly have curved Born distribution as in \eqref{rhoSigmaN} when $\Sigma$ belongs to $\foliation$ but generally not otherwise. Our result shows that 
\be\label{invisible}
\begin{minipage}{100mm}
in the presence of detectors on $\Sigma$, the joint distribution of the $N$ intersection points is a curved Born distribution regardless of whether or not $\Sigma$ belongs to $\foliation$.
\end{minipage}
\ee
This statement was made already in \cite[sec.~III.B]{HBD}. 
As a corollary, the empirical findings of the inhabitants of a Bohmian universe (governed by the law of motion of \cite{HBD}) contain no signature of $\foliation$, so that the inhabitants cannot determine $\foliation$ empirically. 

In \cite{HBD}, the following qualitative argument was given for \eqref{invisible}: The detection outcomes will agree with where the $N$ particles actually arrived on $\Sigma$, so it suffices to show that the detection outcomes have curved Born distribution. Treating the detectors as another quantum system, we may consider the wave function $\Psi$ of the object (the $N$ particles) and the detectors together.  Suppose $\Sigma'\in\foliation$ is so late that $\Sigma' \subseteq \future(\Sigma)$; then $\Psi_{\Sigma'}$ is a superposition of different measurement outcomes, and the Bohmian configuration on $\Sigma'$ has the corresponding $|\Psi|^2$ probability to display a particular outcome. Since recorded outcomes should be stable, the $|\Psi|^2$ weight of each outcome should be the same on all Cauchy surfaces $\Sigma''\subseteq \future(\Sigma)$. (In particular, since the evolution of wave functions does not depend on $\foliation$, the $|\Psi|^2$ weights will not depend on $\foliation$.) Thus, the $|\Psi|^2$ distribution of the outcomes on $\Sigma'$ should agree with that on $\Sigma$, which should be the curved Born distribution. Thus, the probability of the Bohmian configuration displaying a particular outcome should agree with the curved Born distribution, quod erat demonstrandum. This argument, while merely heuristic, seems very general and convincing; this argument and our result support and complement each other.
	\item\label{rem:oplus} We conjecture that axiom (iii), factorization of the PVM, is equivalent to the existence of a measurable field of Hilbert spaces $\Hilbert_x$, $x\in \M$, such that\footnote{Recall that we regard configurations $q$ on $\Sigma$ as unordered, i.e., as subsets of $\Sigma$.}
\be
\Hilbert_A=\int^\oplus_{\Gamma(A)} dq\, \bigotimes_{x\in q} \Hilbert_x\,,
\ee
and $P_A(S)$ the projection to
\be
\int^\oplus_S dq\, \bigotimes_{x\in q} \Hilbert_x\,,
\ee
where $dq$ is the natural measure on $\Gamma(A)$ (see Eq.~\eqref{dqdef} in Section~\ref{sec:config}).
	\item {\it Detection along a timelike hypersurface.} A question related to the one discussed here is to determine the probability distribution of detection events on a hypersurface $\Sigma$ that is not spacelike or Cauchy but \emph{timelike}. For example, suppose we prepare $N$ particles with initial wave function $\psi_0(\vx_1,\ldots,\vx_N)$ inside a region $\Omega \subset \RRR^3$, set up detectors on the 2-surface $\partial \Omega$, and wait for the detectors to register particles upon arrival at $\partial \Omega$, that is, on the timelike hypersurface $\Sigma = [0,\infty) \times \partial \Omega$. If, say, particle $k$ crosses $\Sigma$ at $x_k$, then we can ask for the joint distribution $\rho$ of $(x_1,\ldots,x_N)$. For some versions of this question, specific equations have been proposed and argued for as answers: In \cite{DP03,DT04} for the situation in scattering theory, where $\Omega$ is taken to be bounded but very large (say, a ball of radius $R\to\infty$ around the origin), and the particles do not interact (any more for $t\geq 0$). And in \cite{detect-rule,detect-dirac} for ideal detectors on a timelike $\Sigma$ at a finite distance, so that the presence of the detectors will have a back effect on the time evolution of the wave function. In all of these cases, the specific equation for the probability distribution is of the basic type $\rho=|\psi|^2$ as for the Born rule (more precisely, is given by the normal component $j^\mu \, n_\mu$ of the current density $j^\mu$ across the detector surface), but more needs to be said about how to obtain the appropriate $\psi$.
	
\end{enumerate}
\setcounter{remarks}{\theenumi}

\section{Definitions Used for Hypersurface Evolution}
\label{sec:dynamics}

Relative to a given Lorentz frame, let $\pi:\RRR^4\to \RRR^3$ be the projection
\be\label{pidef}
\pi(x^0,x^1,x^2,x^3) = (x^1,x^2,x^3)\,.
\ee
All $\sigma$-algebras we use are the appropriate Borel $\sigma$-algebras, denoted by $\sB(X)$ for the topological space $X$. On $\Sigma$ we consider the topology induced from that of $\mathbb{M}$. It is known \cite[p.~417]{ON:1983} that the restriction $\pi_\Sigma$ of the projection $\pi$ to $\Sigma$ is a homeomorphism $\Sigma\to\RRR^3$. As a consequence, $R\subseteq \Sigma$ lies in $\sB(\Sigma)$ iff $\pi(R)\in \sB(\RRR^3)$. Now Rademacher's theorem \cite{Rade} asserts that every Lipschitz function from $\RRR^n$ to $\RRR^m$ is differentiable almost everywhere; since $x^0 \circ\pi^{-1}_\Sigma$ is Lipschitz (with Lipschitz constant 1), $\Sigma$ possesses a tangent plane at almost every point (i.e., the exceptions project to a set in $\RRR^3$ of Lebesgue measure zero). At points with a tangent plane, $\Sigma$ possesses a Riemannian 3-metric. The 3-metric defines a volume measure $\mu_\Sigma$ on $(\Sigma,\sB(\Sigma))$. Note that the projection $\pi_* \mu_\Sigma = \mu_\Sigma \circ \pi^{-1}$ to $\RRR^3$ has the same sets of measure zero as the Lebesgue measure.

\subsection{Configuration Space}
\label{sec:config}

The set $\Gamma(R)$ inherits a topology and a measure from $R$ as follows. 
Consider the space of \textit{ordered configurations} on $R$, i.e.,
\be
\Gamma_\mathrm{o}(R) = \bigcup_{n=0}^\infty R^n_{\neq}
\ee
with $R^n_{\neq} = \bigl\{ (q_1,\ldots,q_n)\in R^n: q_i\neq q_j \: \forall i\neq j \bigr\}$ the ordered configurations of $n$ particles without collisions. We naturally have the product topology and product measure on $R^n$, their restrictions on $R^n_{\neq}$, and the appropriate combination on the union, i.e., on $\Gamma_{\mathrm{o}}(R)$.

Let $\tau : \Gamma_\mathrm{o}(R) \rightarrow \Gamma(R)$ be the ``projection'' to $\Gamma(R)$ that forgets the ordering, i.e., for any $q \in \Gamma_\mathrm{o}(R)$, 
\be\label{taudef}
q = (x_1,\ldots,x_n)\,, \quad 
\tau(q) = \{ x_1,\ldots,x_n\}\,.
\ee
The topology we consider for $\Gamma(R)$ then is the weakest topology which makes $\tau$ continuous, i.e., the open subsets of $\Gamma(\Sigma)$ are those that have an open pre-image under $\tau$. For the Borel $\sigma$-algebras, it follows that $S\in \sB(\Gamma(R)) \Leftrightarrow \tau^{-1}(S) \in \sB(\Gamma_{\rm o}(R))$.

The measure $\mu_{\Gamma(R)}$ on $\sB(\Gamma(R))$ can be obtained as follows. For $S \in \sB(\Gamma(R))$, let $S_\mathrm{o} := \tau^{-1}(S) \in \sB(\Gamma_\mathrm{o}(R))$ and $S_\mathrm{o}^{(n)} = S_\mathrm{o} \cap R^n_{\neq}$. Then $S_{\rm o} = \bigcup_{n=0}^\infty S_{\rm o}^{(n)}$. Let furthermore $\mu_{R^n}$ denote the product measure on $R^n$. We define:
\be\label{dqdef}
	\mu_{\Gamma(R)}(S) = \sum_{n=0}^\infty \frac{1}{n!} \, \mu_{R^n}(S_{\rm o}^{(n)}).
\ee
The factor $\frac{1}{n!}$ compensates for the fact that $\tau^{-1}(q)$ contains $n!$ elements if $\# q = n$. When we write $\int dq\, f(q)$ for an integral over configuration space $\Gamma(\Sigma)$ or some subset of it, as in \eqref{eq:pvmdensity} or \eqref{eq:generalizedbornrule}, we actually mean integration relative to the measure $\mu_{\Gamma(\Sigma)}$, i.e., $dq$ is short for $\mu_{\Gamma(\Sigma)}(dq)$.

The following subsets of $\Gamma(\Sigma)$ will be of special interest. For $R \subseteq \Sigma$, we define: 
\begin{align}
	\emptyset(R) &= \{ q \in \Gamma(\Sigma) : q \cap R = \emptyset \},\nonumber\\
	\exists(R) &= \{ q \in \Gamma(\Sigma) : q \cap R \neq \emptyset \}, \nonumber\\
	\forall(R) &= \{ q \in \Gamma(\Sigma) : q \cap R = q \}.
\end{align}
$\emptyset(R)$ is the set of configurations on $\Sigma$ with no point in $R$, $\exists(R)$ the one with at least one point in $R$ and $\forall(R)$ with points exclusively in $R$.
Denoting complements by a superscript $c$ ($R^c=\Sigma \backslash R$ and $S^c=\Gamma(\Sigma)\setminus S$), we have that
\be
	\exists(R) = \emptyset(R)^c,~~~\forall(R) = \emptyset(R^c).
\ee
 Note that the notation $\exists(R)$, $\emptyset(R)$ and $\forall(R)$ does not make explicit to which $\Sigma$ these sets refer. If this is not clear from the context, we will indicate this with a subscript. Furthermore, for subsets $R\subseteq A \subseteq \Sigma$ of a Cauchy surface we introduce
\be
	\emptyset_A(R) = \{ q \in \Gamma(A) : q \cap R = \emptyset \}
\ee
and similarly for $\exists_A(R), \forall_A(R)$.

\subsection{Hilbert Spaces and Hypersurface Evolution}
\label{sec:hypersurfaceevolution}

Let us summarize the definition from the introduction:
\begin{defn}
A \textit{hypersurface evolution} $\mathscr{E} = ( \Hilbert_\circ, P_\circ, U_{\circ}^{\circ})$  is a collection of
\begin{enumerate}
	\item Hilbert spaces $\Hilbert_A$ for every subset $A\subseteq \Sigma$ of every Cauchy surface $\Sigma$, equipped with
	\item a PVM $P_A : \sigma(A) \rightarrow {\rm Proj}(\Hilbert_A)$, where $\sigma(A)$, the $\sigma$-algebra associated with $A$, is the restriction of $\mathscr{B}(\Gamma(\Sigma))$ to $\forall(A)$, $\sigma(A)=\{S\cap \forall(A):S\in \sB(\Gamma(\Sigma))\}$, and ${\rm Proj}(\Hilbert)$ denotes the set of projections on $\Hilbert$, and
	\item unitary isomorphisms $U_{\Sigma}^{\Sigma'}:\Hilbert_\Sigma \to \Hilbert_{\Sigma'}$ for every pair of Cauchy surfaces $\Sigma, \Sigma'$
\end{enumerate}
with the following properties:
\begin{enumerate}
	\item[(0)] $U_\Sigma^\Sigma = I_\Sigma$ and $U_{\Sigma'}^{\Sigma''} U_{\Sigma}^{\Sigma'} = U_{\Sigma}^{\Sigma''}$ for all Cauchy surfaces $\Sigma, \Sigma', \Sigma''$.
	\item[(i)] For every $S\subset \Gamma(\Sigma)$ with $\mu_{\Gamma(\Sigma)}(S) = 0$, also $P_\Sigma(S) = 0$.
	\item[(ii)] For every $\Sigma$, the range of $P_\Sigma(\emptyset(\Sigma)) \Hilbert_\Sigma$ is 1-dimensional. That is, up to a phase, there is a unique \textit{vacuum state} $| \emptyset(\Sigma) \rangle \in \range P_\Sigma(\emptyset(\Sigma))$ with $\bigl\|  |\emptyset(\Sigma) \rangle \bigr\| = 1$.
	\item[(iii)] Factorization of the PVM: $\Hilbert_{A\cup B}=\Hilbert_A \otimes \Hilbert_B$ and $P_{A\cup B} = P_A \otimes P_B$ [where the tensor product of two PVMs is the PVM characterized by \eqref{PAB}] for any mutually disjoint $A,B\subseteq \Sigma$.
\end{enumerate}
\end{defn}

\begin{defn}
	Given a hypersurface evolution $\mathscr{E}$, we call the elements of a collection $\Sigma \mapsto \psi_\Sigma$ with $\psi_\Sigma \in \Hilbert_\Sigma$ for every $\Sigma$ such that $\psi_{\Sigma'} = U_{\Sigma}^{\Sigma'} \psi_\Sigma$ for every $\Sigma, \Sigma'$ \textit{hypersurface wave functions}.
	Similarly, the elements of a collection $\Sigma \mapsto \rho_\Sigma$ with $\rho_\Sigma$ a density matrix on $\Hilbert_\Sigma$ such that $\rho_{\Sigma'} = U_{\Sigma}^{\Sigma'} \rho_\Sigma \, U_{\Sigma'}^\Sigma$ are called \textit{hypersurface density matrices}.
\end{defn}

\paragraph{Remarks.} 
\begin{enumerate}
\setcounter{enumi}{\theremarks}
	\item\label{rem:species} The configurations in $\Gamma(\Sigma)$ do not distinguish between several species of particles. For $m$ species, one may consider replacing it by $\Gamma(\Sigma)^m$. However, here we shall only consider detectors which do not distiguish between species, and then it suffices to use the PVM $\tilde P_\Sigma$ obtained from the product PVM $P_\Sigma\otimes \cdots \otimes P_\Sigma$ via the mapping $\tau_m:\Gamma(\Sigma)^m\to \Gamma(\Sigma)$ that discards the information about the particle species,
	\be
	\tau_m(q_1,\ldots,q_m) := q_1 \cup \ldots \cup q_m\,,
	\ee 
	$\tilde P_{\Sigma}(S) = [P_\Sigma \otimes \cdots \otimes P_{\Sigma}](\tau^{-1}_m(S))$. Accordingly, we consider only $\Gamma(\Sigma)$ as the configuration space, and not $\Gamma(\Sigma)^m$.
	\item Property (i) implies that the probability measure $\Prob^{\psi_\Sigma}(\cdot) = \| P_\Sigma(\cdot) \psi_\Sigma \|^2$ is absolutely continuous with respect to the measure $\mu_{\Gamma(\Sigma)}(\cdot)$ and hence possesses a density, $\rho^{\psi_\Sigma}$. That is, for any $S \subseteq \Gamma(\Sigma)$, $\Prob^{\psi_\Sigma}(S) = \int_S dq\, \rho^{\psi_\Sigma}(q)$. In many cases, $\rho^{\psi_\Sigma}$ can simply be represented as $|\psi_\Sigma|^2$, as in \eqref{eq:pvmdensity}.
	\item If $A,B\subseteq \Sigma$ differ only by a set of measure 0, $\mu_\Sigma(A\setminus B)=0=\mu_\Sigma(B\setminus A)$, then for $\psi\in\Hilbert_\Sigma$ being concentrated in $A$ is equivalent to being concentrated in $B$. That is because then also $\forall(A)$ and $\forall(B)$ differ only by a set of measure 0, $\mu_{\Gamma(\Sigma)}(\forall(A)\setminus \forall(B))=0=\mu_{\Gamma(\Sigma)}(\forall(B)\setminus \forall(A))$, and because by Property (i) every set of $\mu_{\Gamma(\Sigma)}$-measure 0 also has $P_\Sigma$-measure zero.
	\item By iterated application of axiom (iii), factorization of the PVM, one obtains the corresponding statement for more than two sets: Given a partition of $R\subseteq\Sigma$ into $n$ sets $R_1,\ldots,R_n$, we have that $\Hilbert_R=\Hilbert_{R_1}\otimes \cdots \otimes \Hilbert_{R_n}$ and $P_R= P_{R_1}\otimes \cdots\otimes P_{R_n}$ [i.e., $P_R(S_1\times \cdots \times S_n) = P_{R_1}(S_1) \otimes \cdots \otimes P_{R_n}(S_n)$ for all $S_i\subseteq \Gamma(R_i)$].
	\item\label{rem:vacuum} Note that the existence of the vacuum state $| \emptyset(\Sigma) \rangle$ together with (iii) implies the existence of vacuum states $| \emptyset(R_i) \rangle \in \Hilbert_{R_i}$ with $\| | \emptyset(R_i) \rangle \| = 1$ which are unique up to a phase, so $| \emptyset(\Sigma) \rangle = e^{i \theta} \, | \emptyset(R_1) \rangle \otimes \cdots \otimes | \emptyset(R_n) \rangle$ for some $\theta \in (-\pi, \pi ]$.
	\item\label{rem:unitaryequivalence} We regard two hypersurface evolutions, $(\Hilbert_\circ, P_\circ, U_\circ^\circ)$ and $(\tilde\Hilbert_\circ,\tilde{P}_\circ, \tilde{U}_\circ^\circ)$, as equivalent (i.e., as representing the same physical evolution) if there are unitary isomorphisms $V_\Sigma: \Hilbert_\Sigma \to \tilde\Hilbert_{\Sigma}$ such that $\tilde{P}_\Sigma(\cdot) = V_\Sigma\, P_\Sigma(\cdot) \, V_\Sigma^{-1}$ and $\tilde{U}_{\Sigma}^{\Sigma'} = V_{\Sigma'} \, U_\Sigma^{\Sigma'}\, V_\Sigma^{-1}$. Note that equivalent hypersurface evolutions lead to equal curved Born distributions for all $\Sigma$ and $\psi_0$. Correspondingly, also the $\Hilbert_A$ and $P_A$ for $A\subset \Sigma$ are only defined up to unitary isomorphism. In fact, if we want to formulate axiom (iii), the factorization of the PVM, very carefully, we should say that for every $A,B\subseteq \Sigma$ with $A \cap B = \emptyset$, there is a unitary isomorphism $U_{A,B}:\Hilbert_{A\cup B} \to \Hilbert_A \otimes \Hilbert_B$ such that $P_{A\cup B}(S_A\times S_B) = U_{A,B}\,[P_A(S_A) \otimes P_B(S_B)]\,U_{A,B}^{-1}$. (We hinted at this formulation already between \eqref{HilbertAB} and \eqref{PAB}.)
	
	This unitary freedom implies, of course, that there is no fact about which Hilbert space is ``really the right'' Hilbert space for $\Sigma$. So, while the first Hilbert space we introduced in Section~\ref{sec:rule1} was an $L^2$ space of functions $\Sigma\to\CCC^4$ (along with its Fock space), we are also allowed to take always the same Hilbert space, say $\tilde\Hilbert_{\Sigma}=\Hilbert_{\Sigma_0}$; we can then even take all $\tilde{U}_{\Sigma}^{\Sigma'}$ to be the identity, as would correspond to $V_{\Sigma} = U_{\Sigma}^{\Sigma_0}$. Then, of course, $\tilde{P}_{\Sigma}$ will be very different PVMs for different $\Sigma$, in fact the family $\tilde{P}_\circ$ then encodes the whole time evolution, so this choice of $(\tilde\Hilbert_\circ, \tilde{P}_\circ, \tilde{U}_\circ^\circ)$ is analogous to the Heisenberg picture in non-relativistic quantum mechanics.
	\item\label{rem:Fock} \textit{Relation to Fock space construction.} The familiar Fock space construction is closely related to the framework of hypersurface evolutions. Let us describe this in some detail. Suppose that the 1-particle Hilbert space is $\Hilbert_\Sigma^{(1)}=L^2(\Sigma, \CCC^k, \nu)$, which means the space of square-integrable functions relative to the inner product
	\be
	\langle \psi | \chi \rangle 
	= \int_\Sigma d^3x \: \psi(x)^\dagger \, \nu(x)\, \chi(x),
	\ee
	for some function $\nu$ from $\Sigma$ to the positive definite $k\times k$ matrices. For the Dirac equation, as described in Section~\ref{sec:rule1}, $k=4$ and $\nu(x) = \gamma^0\, \gamma^\mu\, n_\mu(x)$. Being a space of functions on $\Sigma$, $\Hilbert_{\Sigma}^{(1)}$ is automatically equipped with a PVM $P_\Sigma^{(1)}$ on $\Sigma$ acting on $\Hilbert_\Sigma^{(1)}$, the ``natural PVM,'' viz., $P_\Sigma^{(1)}(A)$ is the multiplication operator by the characteristic function of $A$. Now the bosonic (or fermionic) Fock space is
\be
	\sF_{\Sigma,\pm} = \Gamma_{\pm} \bigl( \Hilbert_\Sigma^{(1)} \bigr) 
	= \bigoplus_{n=0}^\infty S_\pm (\Hilbert_\Sigma^{(1)})^{\otimes n} \,,
\ee
where $\Gamma_{\pm}$ denotes the ``second quantization functor'' and $S_\pm$ the (anti-)symmetrization operator. It inherits a PVM $P_\Sigma$ from $\Hilbert_\Sigma^{(1)}$,
\begin{align}
\bigl( P_\Sigma(S) \, \psi \bigr)^{(n)} 
&= P_\Sigma^{(n)} \bigl( S\cap \Gamma_n(\Sigma) \bigr) \, \psi^{(n)} \\
P_\Sigma^{(n)}(S) 
&= P_{\Sigma,\mathrm{o}}^{(n)} \bigl( \tau^{-1}(S) \bigr) \\
P_{\Sigma,\mathrm{o}}^{(n)}(A_1\times \cdots \times A_n) 
&= P_\Sigma^{(1)}(A_1) \otimes \cdots \otimes P_\Sigma^{(1)}(A_n)\,,
\end{align}
where $P_{\Sigma,\mathrm{o}}^{(n)}$ is the corresponding PVM on \emph{ordered} $n$-particle configurations (i.e., on $\Sigma^n$) acting on $(\Hilbert_\Sigma^{(1)})^{\otimes n}$, and $\tau$ is the ``unordering map'' as in \eqref{taudef}. Equivalently, $P_\Sigma(S)$ is multiplication by the characteristic function of $\tau^{-1}(S)\subseteq \Gamma_\mathrm{o}(\Sigma)$ if we regard $\sF_{\Sigma,\pm}$ as the space of square-integrable functions on $\Gamma_\mathrm{o}(\Sigma)$ whose $n$-particle sector takes values in $(\CCC^k)^{\otimes n}$ and is (anti-)symmetric against permutation of particles (along with their spin indices), with inner product
\begin{multline}
	\langle \psi | \chi \rangle 
	= \sum_{n=0}^\infty \frac{1}{n!} \int_{\Sigma^n} d^3x_1\cdots d^3x_n \, \psi^{(n)}(x_1,\ldots, x_n)^\dagger \: \times\\
	\times \: \bigl[\nu(x_1) \otimes \cdots \otimes \nu(x_n)\bigr]\, \chi^{(n)}(x_1, \ldots, x_n)\,. 
	\label{scpFock}
\end{multline}

Let us verify the axioms (i), (ii), (iii): (i)~If $S\subseteq \Gamma(\Sigma)$ is a set of measure 0, then so is $\tau^{-1}(S)$, so $P_\Sigma(S)$ is multiplication by a function that is 0 almost everywhere, so $P_\Sigma(S)$ is the 0 operator. (ii)~$P_\Sigma(\{\emptyset\})$ is the projection onto the $n=0$ sector of Fock space, which is 1-dimensional. (iii)~Suppose $\Sigma=A \cup B$ with $A\cap B=\emptyset$. To see how $\sF_{\Sigma,\pm}$ can be identified with $\sF_{A,\pm} \otimes \sF_{B,\pm}$, pick an element $\psi$, regard it as a function on $\Gamma_\mathrm{o}(\Sigma)$, permute the particles (and their spin indices) so that all locations in $A$ are listed before any locations in $B$. That provides the mapping $U_{A,B}$;  since $\Gamma(A\cup B) = \Gamma(A) \times \Gamma(B)$, and since $P_\Sigma$ is multiplication by characteristic functions, $P_\Sigma$ also factorizes accordingly.

We turn to the time evolution. The free Dirac evolution, or the Dirac evolution in an external electromagnetic field, defines unitary 1-particle hypersurface mappings $U=U_\Sigma^{(1)\Sigma'}: \Hilbert_\Sigma^{(1)}\to \Hilbert_{\Sigma'}^{(1)}$, so $U^{\otimes n}$ maps the $n$-th tensor powers to each other, and $(U_{\Sigma}^{\Sigma'}\psi)^{(n)} = U^{\otimes n} \psi^{(n)}$ is the non-interacting evolution in Fock space; in other words, $U_{\Sigma}^{\Sigma'}$ is obtained from $U_{\Sigma}^{(1)\Sigma'}$ by applying the ``second quantization functor'' $\Gamma_{\pm}$. (The functor can be applied either to Hilbert spaces or to their unitary isomorphisms.) The relations $U_{\Sigma'}^{\Sigma''} U_{\Sigma}^{\Sigma'} = U_{\Sigma}^{\Sigma''}$ and $U_\Sigma^\Sigma=I_\Sigma$ are inherited from the 1-particle evolution.
	\item\label{rem:unordered} \textit{Direct construction of wave functions on unordered configurations.} Instead of using (anti-)symmetric wave functions on \emph{ordered} configurations, which seems unphysical, one can also directly construct Hilbert spaces of wave functions on \emph{unordered} configurations, so that $\psi_\Sigma$ is a function on $\Gamma(\Sigma)$. In the case of fermions, such a direct construction requires the use of a Hermitian vector bundle called the fermionic line bundle \cite{LM:1977,GTTZ:2014}.
	\item\label{rem:naturalization} {\it Regarding elements of $\Hilbert_\Sigma$ as functions.} In Remarks~\ref{rem:Fock} and \ref{rem:unordered}, we have pointed to examples in which the elements of $\Hilbert_{\Sigma}$ are actually \emph{functions} on $\Gamma_\mathrm{o}(\Sigma)$ or $\Gamma(\Sigma)$. It turns out that this situation is more than just an example, and one can always regard elements of $\Hilbert_\Sigma$ as functions on $\Gamma(\Sigma)$, as every Hilbert space with a PVM $P$ on a set $\Omega$ is unitarily equivalent to an $L^2$ space over $\Omega$ with the natural PVM; this can be called the ``naturalization'' of $P$. Here is the relevant statement \cite{Dixmier,DGTZ:2005}, which is closely related to the Hahn--Hellinger theorem \cite{Partha}: If $P$ is a PVM on the standard Borel space\footnote{A \emph{standard Borel space} is a
   measurable space isomorphic to a complete separable metric space
   with its Borel $\sigma$-algebra.} $\Omega$ acting on the Hilbert space $\Hilbert$, then there is a measurable field of Hilbert spaces $\Hilbert_q$ over $\Omega$, a $\sigma$-finite measure $\mu$ on $\Omega$, and a unitary isomorphism $U : \Hilbert \rightarrow \int^\oplus \mu(dq)\, \Hilbert_q$ to the direct integral of $\Hilbert_q$ that carries $P$ to the natural PVM on $\Omega$ acting on $\int^\oplus \mu(dq)\,\Hilbert_q$. 
	The naturalization is unique in the sense that if $\{ \Hilbert_q' \}, \mu', U'$ is another such triple, then there is a measurable function $f : \Omega \rightarrow (0,\infty)$ such that $\mu'(dq) = f(q) \, \mu(dq)$ and a measurable field of unitary isomorphisms $U_q : \Hilbert_q \rightarrow \Hilbert_q'$ such that $U' \psi(q) = f(q)^{-1/2} U_q U \psi(q)$.
	
	In our case, $\Omega=\Gamma(\Sigma)$ is a standard Borel space, so we can identify $\Hilbert_\Sigma$ with $\int^\oplus \mu(dq)\, \Hilbert_q$. The elements of the latter are functions on $\Gamma(\Sigma)$, viz., cross-sections of the bundle $\Hilbert_q$. Thus, if $\psi\in\Hilbert_\Sigma$ then
	\be
	\scp{\psi}{P_\Sigma(S)|\psi} = \int_{S} \mu(dq)\, |U\psi(q)|^2\,,
	\ee
	where $U\psi(q)\in\Hilbert_q$, and $|\cdot|$ is the norm of $\Hilbert_q$. Since $P_\Sigma$ is absolutely continuous relative to $\mu_{\Gamma(\Sigma)}$, so is $\mu$; thus (allowing $\Hilbert_q=\{0\}$ for some $q$ if necessary), by choosing $f=d\mu/d\mu_{\Gamma(\Sigma)}$, we can replace $\mu$ by $\mu_{\Gamma(\Sigma)}$, so that, finally,
	\be
	\scp{\psi}{P_\Sigma(S)|\psi} = \int_{S} dq\, |U\psi(q)|^2\,.
	\ee
	That is, the distribution $\scp{\psi}{P_\Sigma(\cdot)|\psi}$ can always be regarded as ``the $|\psi|^2$ distribution.''
\end{enumerate}
\setcounter{remarks}{\theenumi}

\subsection{Consequences of the Properties (IL) and (PL)}
\label{sec:FS}

\subsubsection{Vacuum Stays Vacuum}

The following property of a hypersurface evolution is a trivial consequence of (PL):

\begin{enumerate}
	\item[(NCFV)] \textit{\underline{N}o particle \underline{c}reation \underline{f}rom the \underline{v}acuum.} 
	For any two Cauchy surfaces $\Sigma,\Sigma'$, the vacuum space evolves to the vacuum space,
	\be\label{eq:NCFV1}
	U^{\Sigma'}_\Sigma \: P_\Sigma(\{\emptyset\})\: U^\Sigma_{\Sigma'} = P_{\Sigma'}(\{\emptyset\})\,.
	\ee
\end{enumerate}

Indeed, technically speaking, the vacuum state $\psi=|\emptyset\rangle_\Sigma$ is concentrated in the empty set $A=\emptyset\subseteq \Sigma$, and the grown set of the empty set is again empty, $\Gr(\emptyset,\Sigma')=\emptyset$. But the only state $\psi'\in \Hilbert_{\Sigma'}$ concentrated in the empty set is the vacuum, $\psi' \in \CCC |\emptyset\rangle_{\Sigma'}$, so $U^{\Sigma'}_\Sigma$ must map $|\emptyset\rangle_\Sigma$ to $|\emptyset\rangle_\Sigma$ up to a phase, quod erat demonstrandum.

We conjecture that, conversely, (IL) and (NCFV) together imply (PL). We now deduce from (IL) and (NCFV) that (NCFV) holds also locally.

\begin{prop}\label{prop:locNCFV}
	Suppose that a hypersurface evolution satisfies (IL) and (NCFV). Then it also satisfies (NCFV) locally; that is, for $\Sigma\cap \Sigma'=A$ (see Figure~\ref{fig:interaction_locality}),
	\be
		U_{\Sigma\setminus A}^{\Sigma'\setminus A} \: P_{\Sigma\setminus A}(\{\emptyset\}) \: U^{\Sigma\setminus A}_{\Sigma' \setminus A} = P_{\Sigma'\setminus A}(\{\emptyset\})\,.
		\label{eq:NCFV2}
	\ee
\end{prop}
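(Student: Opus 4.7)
The plan is to use the factorization of the PVM (axiom (iii)) to split everything into tensor products across $A$ and its complement, substitute into the global (NCFV) equation, and then cancel the common factor on $\Hilbert_A$. Set $A = \Sigma \cap \Sigma'$, so that $\Sigma = A \cup (\Sigma \setminus A)$ and $\Sigma' = A \cup (\Sigma' \setminus A)$ are disjoint unions.

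First I would observe that $\{\emptyset\} \subseteq \Gamma(\Sigma)$ corresponds under the canonical identification $\Gamma(\Sigma) = \Gamma(A) \times \Gamma(\Sigma\setminus A)$ to $\{\emptyset_A\} \times \{\emptyset_{\Sigma\setminus A}\}$, because the empty configuration on $\Sigma$ is empty on each piece. Hence, by axiom (iii),
\be
P_\Sigma(\{\emptyset\}) = P_A(\{\emptyset\}) \otimes P_{\Sigma\setminus A}(\{\emptyset\})\,,
\ee
and analogously $P_{\Sigma'}(\{\emptyset\}) = P_A(\{\emptyset\}) \otimes P_{\Sigma'\setminus A}(\{\emptyset\})$. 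Next, by (IL), $U_\Sigma^{\Sigma'} = I_A \otimes U_{\Sigma\setminus A}^{\Sigma'\setminus A}$ and, since $U_{\Sigma'}^\Sigma = (U_\Sigma^{\Sigma'})^{-1}$, also $U_{\Sigma'}^\Sigma = I_A \otimes U_{\Sigma'\setminus A}^{\Sigma\setminus A}$.

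Substituting these tensor-product decompositions into the global (NCFV) equation \eqref{eq:NCFV1} yields
\be
P_A(\{\emptyset\}) \otimes \Bigl(U_{\Sigma\setminus A}^{\Sigma'\setminus A}\, P_{\Sigma\setminus A}(\{\emptyset\})\, U_{\Sigma'\setminus A}^{\Sigma\setminus A}\Bigr)
= P_A(\{\emptyset\}) \otimes P_{\Sigma'\setminus A}(\{\emptyset\})\,.
\ee
To conclude, I would cancel the tensor factor $P_A(\{\emptyset\})$ on both sides. This is legitimate because $P_A(\{\emptyset\})$ is a nonzero projection (its range is 1-dimensional by Remark~\ref{rem:vacuum}, which follows from (ii) together with (iii) as already noted after axiom (iii)): if $|\emptyset(A)\rangle \in \Hilbert_A$ is a unit vacuum vector, then applying both sides to $|\emptyset(A)\rangle \otimes \chi$ for an arbitrary $\chi \in \Hilbert_{\Sigma\setminus A}$ gives
\be
|\emptyset(A)\rangle \otimes U_{\Sigma\setminus A}^{\Sigma'\setminus A}\, P_{\Sigma\setminus A}(\{\emptyset\})\, U_{\Sigma'\setminus A}^{\Sigma\setminus A}\, \chi
= |\emptyset(A)\rangle \otimes P_{\Sigma'\setminus A}(\{\emptyset\})\, \chi\,,
\ee
from which \eqref{eq:NCFV2} follows since $\chi$ was arbitrary.

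The only mildly delicate step is the identification $P_\Sigma(\{\emptyset\}) = P_A(\{\emptyset\}) \otimes P_{\Sigma\setminus A}(\{\emptyset\})$; it uses the canonical identification \eqref{GammaAB} plus axiom (iii) applied to the product set $\{\emptyset_A\} \times \{\emptyset_{\Sigma\setminus A}\}$. Everything else is a direct algebraic manipulation with the two hypotheses, so I do not anticipate a genuine obstacle: the whole proposition is essentially a factorization statement for the global (NCFV) identity restricted to the tensor factor $\Hilbert_{\Sigma\setminus A}$.
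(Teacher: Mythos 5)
Your proposal is correct and follows essentially the same route as the paper's proof: factorize $P_\Sigma(\{\emptyset\})$ and $P_{\Sigma'}(\{\emptyset\})$ across $A$ via axiom (iii), use (IL) to write $U_\Sigma^{\Sigma'}=I_A\otimes U_{\Sigma\setminus A}^{\Sigma'\setminus A}$, substitute into the global (NCFV) identity, and strip off the common factor $P_A(\{\emptyset\})$. Your explicit justification of that last cancellation (applying both sides to $|\emptyset(A)\rangle\otimes\chi$) is a small detail the paper leaves implicit, but it is the same argument.
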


\begin{proof}
We write $\emptyset_R$ for the 0-particle configuration in $\Gamma(R)$, $R=\Sigma,\Sigma',A$ etc. It lies in the nature of $\Gamma(\Sigma)$ that $\{\emptyset_\Sigma\} \cong \{\emptyset_A\} \times \{\emptyset_{\Sigma\setminus A}\}$ in the sense of \eqref{GammaAB}, and thus, by factorization of the PVM \eqref{PAB}, 
\be\label{PemptysetAAc}
P_\Sigma(\{\emptyset_\Sigma\}) = P_A(\{\emptyset_A\})\otimes P_{\Sigma\setminus A}(\{\emptyset_{\Sigma\setminus A}\})\,.
\ee
By \eqref{eq:il} of (IL),
\begin{align}
U_{\Sigma}^{\Sigma'} \: P_\Sigma(\{\emptyset_\Sigma\}) \: U^{\Sigma}_{\Sigma'}
&= \Bigl[I_A \otimes U_{\Sigma\setminus A}^{\Sigma'\setminus A} \Bigr]\: P_\Sigma(\{\emptyset_\Sigma\}) \:  \Bigl[I_A \otimes U^{\Sigma\setminus A}_{\Sigma'\setminus A} \Bigr] \nonumber\\
&= \Bigl[I_A \otimes U_{\Sigma\setminus A}^{\Sigma'\setminus A} \Bigr] \: \Bigl[ P_A(\{\emptyset_A\})\otimes P_{\Sigma\setminus A}(\{\emptyset_{\Sigma\setminus A}\}) \Bigr] \: \Bigl[ I_A \otimes U^{\Sigma\setminus A}_{\Sigma'\setminus A} \Bigr] \nonumber\\
&= P_A(\{\emptyset_A\})\otimes \biggl( U_{\Sigma\setminus A}^{\Sigma'\setminus A} P_{\Sigma\setminus A}(\{\emptyset_{\Sigma\setminus A}\}) U^{\Sigma\setminus A}_{\Sigma'\setminus A} \biggr)\,.
\end{align}
By (NCFV), the left-hand side equals $P_{\Sigma'}(\{\emptyset_{\Sigma'}\})$, which, by the analog of \eqref{PemptysetAAc} for $\Sigma'$, equals $P_A(\{\emptyset_A\})\otimes P_{\Sigma'\setminus A}(\{\emptyset_{\Sigma'\setminus A}\})$. Now \eqref{eq:NCFV2} follows.
\end{proof}

\subsubsection{Reduced Time Evolution Operators} 

An important consequence of propagation locality is to allow for a definition of reduced time evolution operators which map a state on $\Hilbert_A$ to $\Hilbert_{\Gr(A,\Sigma')}$. These operators will be crucial for the proof of the main theorem.

\begin{prop}[and definition]\label{prop:reducedevol}
	Let $\sE$ be a hypersurface evolution with (IL) and (PL). Let $\Sigma,\Sigma'$ be Cauchy surfaces and $A\subseteq \Sigma$.
	Then there exists an isometry $W_{A}^{\Gr(A,\Sigma')} : \Hilbert_A \rightarrow \Hilbert_{\Gr(A,\Sigma')}$ such that for every $\psi_\Sigma$ concentrated in $A$ (so that $\psi_\Sigma = \psi_A \otimes | \emptyset(A^c) \rangle$ for some $\psi_A \in \Hilbert_A$),
	\be
		U_\Sigma^{\Sigma'} \psi_\Sigma = \left( W_A^{\Gr(A,\Sigma')} \psi_A \right) \otimes \bigl| \emptyset(\Sr(A^c,\Sigma')) \bigr\rangle.
		\label{eq:woperator}
	\ee
	We call $W_A^{\Gr(A,\Sigma')}$ the \textnormal{reduced time evolution operator} from $A$ to $\Gr(A,\Sigma')$.	
	Its left inverse, given by its adjoint operator,
	 is denoted as $W_{\Gr(A,\Sigma')}^A$.
\end{prop}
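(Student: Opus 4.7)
The plan is to leverage (FS) to push concentration from $A \subseteq \Sigma$ to concentration in $\Gr(A,\Sigma')$ on $\Sigma'$, and then to use the PVM factorization axiom (iii) to strip off the vacuum factors on both sides, thereby extracting $W_A^{\Gr(A,\Sigma')}$ as a canonical map between the two ``occupied'' factor Hilbert spaces.

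First I would analyze what concentration in $A$ forces upon the structure of $\psi_\Sigma$. By definition, $\psi_\Sigma \in \range P_\Sigma(\forall(A))$. Under the identification $\Gamma(\Sigma) = \Gamma(A) \times \Gamma(A^c)$ the event $\forall(A)$ corresponds to $\Gamma(A) \times \{\emptyset_{A^c}\}$, so axiom (iii) gives $P_\Sigma(\forall(A)) = I_{A} \otimes P_{A^c}(\{\emptyset_{A^c}\})$. Axiom (ii) together with (iii), as noted in Remark~\ref{rem:vacuum}, ensures that $\range P_{A^c}(\{\emptyset_{A^c}\})$ is one-dimensional and spanned by the unit vector $|\emptyset(A^c)\rangle$. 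Consequently any $\psi_\Sigma$ concentrated in $A$ must have the form $\psi_A \otimes |\emptyset(A^c)\rangle$ for a uniquely determined $\psi_A \in \Hilbert_A$, and conversely every $\psi_A \in \Hilbert_A$ arises this way. This step actually sets up an isometric isomorphism between $\Hilbert_A$ and $\range P_\Sigma(\forall(A))$.

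Next I would apply (FS) to $U_\Sigma^{\Sigma'}\psi_\Sigma$: since $\psi_\Sigma$ is concentrated in $A$, (FS) tells us that $U_\Sigma^{\Sigma'}\psi_\Sigma$ is concentrated in $\Gr(A,\Sigma')$. Writing $B := \Gr(A,\Sigma')$ and using the identity $\Sigma' \setminus \Gr(A,\Sigma') = \Sr(A^c,\Sigma')$ recorded just after the definition of grown and shrunk sets, the same tensor-factorization argument applied on $\Sigma'$ yields a unique decomposition
\begin{equation}
U_\Sigma^{\Sigma'}\psi_\Sigma = \tilde\psi \otimes \bigl| \emptyset(\Sr(A^c,\Sigma')) \bigr\rangle,
\qquad \tilde\psi \in \Hilbert_B.
\end{equation}
I then \emph{define} $W_A^{\Gr(A,\Sigma')}\, \psi_A := \tilde\psi$. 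Linearity is automatic, as this map is the composition of three linear maps: tensoring by the fixed unit vector $|\emptyset(A^c)\rangle$, the unitary $U_\Sigma^{\Sigma'}$, and the inverse of tensoring by $|\emptyset(\Sr(A^c,\Sigma'))\rangle$ on its range. Isometry then follows from the chain $\|W_A^{\Gr(A,\Sigma')}\psi_A\| = \|\tilde\psi\| = \|\tilde\psi \otimes |\emptyset(\Sr(A^c,\Sigma'))\rangle\| = \|U_\Sigma^{\Sigma'}\psi_\Sigma\| = \|\psi_\Sigma\| = \|\psi_A\|$, in which each vacuum vector contributes a factor of $1$ and $U_\Sigma^{\Sigma'}$ preserves norms. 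Finally, an isometry between Hilbert spaces admits its adjoint as a left inverse, which gives the claim about $W_{\Gr(A,\Sigma')}^A$.

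The main obstacle is bookkeeping rather than mathematical depth: one must keep consistent the two tensor-product splittings $\Hilbert_\Sigma = \Hilbert_A \otimes \Hilbert_{A^c}$ and $\Hilbert_{\Sigma'} = \Hilbert_B \otimes \Hilbert_{\Sigma'\setminus B}$ from axiom (iii), which, as noted in Remark~\ref{rem:unitaryequivalence}, are only fixed up to unitary isomorphism, and ensure that they are compatible with the fixed choices of vacuum vectors required to make the map $\tilde\psi \otimes |\emptyset(\cdot)\rangle \mapsto \tilde\psi$ genuinely well-defined rather than merely defined up to a phase. Once $|\emptyset(A^c)\rangle$ and $|\emptyset(\Sr(A^c,\Sigma'))\rangle$ are each chosen (axiom (ii) fixes them up to a phase), both sides of \eqref{eq:woperator} rescale by the same phase under any rephasing, so $W_A^{\Gr(A,\Sigma')}$ is unambiguous and the proposition follows.
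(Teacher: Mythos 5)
Your proposal is correct and follows essentially the same route as the paper: use (FS) together with the unique-vacuum and factorization axioms to write $U_\Sigma^{\Sigma'}\psi_\Sigma = \tilde\psi \otimes |\emptyset(\Sr(A^c,\Sigma'))\rangle$, define $W_A^{\Gr(A,\Sigma')}\psi_A := \tilde\psi$, and get isometry from the unitarity of $U_\Sigma^{\Sigma'}$ and the normalization of the vacuum vectors. The only cosmetic difference is that the paper packages the definition as a partial scalar product $W_A^{\Gr(A,\Sigma')}\psi_A = \langle \emptyset(\Sr(A^c,\Sigma'))|\,U_\Sigma^{\Sigma'}|\psi_A\otimes\emptyset(A^c)\rangle$, which is the same map you construct by stripping off the vacuum factor.
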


\begin{proof}
	We define $W_A^{\Gr(A,\Sigma')}$ by the partial scalar product
	\be
		W_A^{\Gr(A,\Sigma')} \psi_A = \langle \emptyset(\Sr(A^c,\Sigma')) | \, U_\Sigma^{\Sigma'} |\psi_A \otimes  \emptyset(A^c) \rangle.
		\label{eq:defw}
	\ee
	Let $\psi_\Sigma$ be concentrated in $A \subseteq \Sigma$. (PL) implies that there exists a $\psi_{\Gr(A,\Sigma')} \in \Hilbert_{\Gr(A,\Sigma')}$ such that $U_\Sigma^{\Sigma'} \psi_\Sigma = \psi_{\Gr(A,\Sigma')} \otimes | \emptyset(\Sr(A^c,\Sigma')) \rangle$. Then $W_A^{\Gr(A,\Sigma')} \psi_A = \psi_{\Gr(A,\Sigma')}$, and \eqref{eq:woperator} follows.
	The fact that $W_A^{\Gr(A,\Sigma')}$ is an isometry is implied by \eqref{eq:defw} and the unitarity of $U_\Sigma^{\Sigma'}$.	
\end{proof}

\paragraph{Remarks.}
\begin{enumerate}
\setcounter{enumi}{\theremarks}
\item Since $W$ depends on the choice of the vacuum vectors $|\emptyset(\cdot)\rangle$, a different choice of these vectors (different by a phase factor) will change $W$ by a phase factor. At the end of the day, this ambiguity in the definition of $W$ will not affect our results because the $W$'s will appear symmetrically with their adjoints in the expression for the detection probabilities.

\item Note that $W$ is generally not unitary because it is not surjective. That is because for an abritrary state $\chi\in\Hilbert_{\Gr(A,\Sigma')}$, $U_{\Sigma'}^\Sigma \, \bigl[ \chi\otimes |\emptyset(\Sr(A^c,\Sigma') \rangle \bigr]$ will generally not be concentrated in $A$. 
\item (PL) can equivalently be formulated in terms of density matrices. A density matrix $\rho_\Sigma$ is said to be concentrated in $A \subseteq \Sigma$ iff
\be
	P_{\Sigma}(\forall(A)) \rho_\Sigma P_{\Sigma}(\forall(A)) = \rho_\Sigma.
\ee
Then $\sE$ satisfies (PL) iff for every $\rho_\Sigma$ concentrated in $A$, $\rho_{\Sigma'}$ is concentrated in $\Gr(A,\Sigma')$.

Accordingly, we can use the reduced time evolution operator $W_A^{\Gr(A,\Sigma)}$ to describe the evolution of a density matrix concentrated in $A$. Let $\rho_\Sigma$ be concentrated in $A$. Then there is a density matrix $\rho_A$ on $\Hilbert_A$ such that $\rho_\Sigma = \rho_A \otimes P_{\Sigma}\bigl( \emptyset(A^c) \bigr)$. Hence,
\be
	U_{\Sigma}^{\Sigma'} \rho_\Sigma \, U_{\Sigma'}^\Sigma = \left( W_A^{\Gr(A,\Sigma)} \rho_A \, W_{\Gr(A,\Sigma)}^A \right) \otimes P_{\Sigma'}\bigl( \emptyset(\Sr(A^c,\Sigma')) \bigr) .
\ee
\end{enumerate}
\setcounter{remarks}{\theenumi}

\section{Examples}
\label{sec:examples}

In this section, we give an overview of some approaches to defining a hypersurface evolution with the properties (IL) and (PL).

\subsection{Free Dirac Evolution}
\label{sec:free}

In Remark~\ref{rem:Fock} in Section~\ref{sec:hypersurfaceevolution} above, we have given the definition of the hypersurface evolution for non-interacting Dirac particles, possibly in an external electromagnetic field, without any distinction between positive and negative energies. The hypersurface evolution with distinction between positive and negative energies has been discussed in \cite{DM:2016}. 

Let us return to the situation of Remark~\ref{rem:Fock} and verify (IL) and (PL). We first collect some observations about the 1-particle Dirac equation. It is known (e.g., \cite[Sec.~1.5]{Thaller} or \cite[Sec.~7.1]{pt:2013a}) that the wave function propagates no faster than light; thus, it is determined on $B'\subseteq \Sigma'$ by initial values in $\Gr(B',\Sigma)$. Now let $\Sigma\cap \Sigma' = A$, $B=\Sigma\setminus A$, $B'=\Sigma' \setminus A$. Then $\Hilbert_\Sigma^{(1)}=\Hilbert_A^{(1)} \oplus \Hilbert_B^{(1)}$ and $\Hilbert_{\Sigma'}^{(1)}=\Hilbert_A^{(1)} \oplus \Hilbert_{B'}^{(1)}$. Since $B=\Gr(B',\Sigma)$ and $B'=\Gr(B,\Sigma')$, and since $\psi|_A$ is the same in $\psi_\Sigma$ and $\psi_{\Sigma'}$, we have that
\be\label{IL(1)}
U_{\Sigma}^{(1)\Sigma'} = I^{(1)}_A \oplus U_B^{(1)B'}\,,
\ee
where $U_B^{(1)B'}:\Hilbert_B^{(1)} \to \Hilbert_{B'}^{(1)}$ is a unitary isomorphism.
 
Now we turn to the Fock spaces to confirm (IL). Since, as explained in Remark~\ref{rem:Fock}, $U_{\Sigma}^{\Sigma'} = \Gamma_{\pm}(U_\Sigma^{(1)\Sigma'})$ (the second quantization of unitary isomorphisms), and since the second quantization functor $\Gamma_{\pm}$ turns $\oplus$ into $\otimes$, it follows from \eqref{IL(1)} that $U_{\Sigma}^{\Sigma'} = I_A \otimes U_B^{B'}$ with $U_B^{B'} = \Gamma_{\pm}(U_B^{(1)B'})$, which is (IL).

We now verify (PL). The fact that the 1-particle Dirac wave function $\psi^{(1)}$ propagates no faster than light means that if $\psi^{(1)}_\Sigma$ is concentrated in $B\subseteq \Sigma$ (i.e., $\psi^{(1)}_\Sigma(x) =0$ for $x\in\Sigma\setminus B$) then $\psi^{(1)}_{\Sigma'}= U_\Sigma^{(1)\Sigma'}\psi^{(1)}_\Sigma$ is concentrated in $\Gr(B,\Sigma')$. As a consequence for $n$ non-interacting Dirac particles, if $\psi^{(n)}_\Sigma$ is concentrated in $B^n$, then $\psi^{(n)}_{\Sigma'}$ is concentrated in $\Gr(B,\Sigma')^n$. As a consequence in Fock space, if $\psi_\Sigma$ is concentrated in $\forall(B)$, then $\psi_{\Sigma'}$ is concentrated in $\forall(\Gr(B,\Sigma'))$; this is (PL).

\subsection{Tomonaga-Schwinger Picture}
\label{sec:TS}

The Tomonaga-Schwinger picture is closely related to the axiomatic framework of hypersurface evolutions. It associates a wave function $\widetilde{\psi}_\Sigma$ in a fixed Hilbert space $\widetilde{\Hilbert}$ with every Cauchy surface $\Sigma$.
The evolution of $\widetilde\psi_\Sigma$ is defined by the \textit{Tomonaga-Schwinger equation}, which relates $\widetilde{\psi}_\Sigma$ and $\widetilde{\psi}_{\Sigma'}$ for two infinitesimally neighboring Cauchy surfaces $\Sigma, \Sigma'$:
\be
	i \left( \widetilde{\psi}_{\Sigma'} - \widetilde{\psi}_{\Sigma} \right) = \left( \int_\Sigma^{\Sigma'} \!\!\! d^4 x \: \mathcal{H}_I(x) \right) \widetilde{\psi}_{\Sigma}\,.
	\label{eq:ts}
\ee
Here, $ \int_\Sigma^{\Sigma'} d^4 x$ denotes the integral over the $4$-volume enclosed by $\Sigma$ and $\Sigma'$, and $\mathcal{H}_I(x)$ denotes the interaction Hamiltonian density in the interaction picture, i.e., a function on $\M$ mapping space-time points $x$ to Hermitian operators on $\widetilde\Hilbert$.
The Tomonaga-Schwinger equation \eqref{eq:ts} is consistent iff the Hamiltonian density satisfies the condition
\be\label{TSconsistent}
	\bigl[ \mathcal{H}_I(x), \mathcal{H}_I(x') \bigr] = 0~~~{\rm if}~(x-x')^2 < 0.
\ee
In this way, the Tomonaga-Schwinger equation defines a unitary isomorphism $\widetilde{U}_\Sigma^{\Sigma'} : \widetilde{\Hilbert} \rightarrow \widetilde{\Hilbert}$ for every pair of Cauchy surfaces with 
\be\label{tildeUcompose}
\widetilde{U}_\Sigma^{\Sigma} = I\text{ and }
\widetilde{U}_{\Sigma'}^{\Sigma''} \widetilde{U}_\Sigma^{\Sigma'} = \widetilde{U}_\Sigma^{\Sigma''}\,.
\ee
These isomorphisms are related to the $U_{\Sigma}^{\Sigma'}$ of the hypersurface evolution via the free evolution. To this end, let $F_\Sigma^{\Sigma'}$ denote the free hypersurface evolution as defined, e.g., in Remark~\ref{rem:Fock}, and let, for simplicity, $\widetilde{\Hilbert}=\Hilbert_{\Sigma_0}$. Then the full time evolution (i.e., converted back from the interaction picture) is given by
\be\label{UtildeU}
	U_\Sigma^{\Sigma'}  = F_{\Sigma_0}^{\Sigma'}\, \widetilde{U}_{\Sigma}^{\Sigma'} \, F_\Sigma^{\Sigma_0}\,.
\ee
Correspondingly,
\be
	 \psi_\Sigma = F^\Sigma_{\Sigma_0} \, \widetilde{\psi}_\Sigma \,.
\ee
The Tomonaga-Schwinger picture can thus be regarded as an interaction picture version of the framework of hypersurface evolutions.

The Hilbert spaces $\Hilbert_\Sigma$ and PVMs $P_\Sigma$ can be taken to be the same ones as for the free evolution. If the free evolution satisfies the axioms of a hypersurface evolution, then so does the full time evolution (because \eqref{tildeUcompose} and \eqref{UtildeU} imply property (0)).

Let us turn to the properties (IL) and (PL).
As mentioned already in Remark~\ref{rem:oplus}, it is plausible that axiom (iii), factorization of the PVM, entails that
\be
\Hilbert_\Sigma=\int^\oplus_{\Gamma(\Sigma)} dq\, \bigotimes_{x\in q} \Hilbert_x
\ee
(and an analogous decomposition for the PVM). 
On a non-rigorous, heuristic level, this can be rewritten as a continuous tensor product
\be
\Hilbert_\Sigma = \bigotimes_{x\in \Sigma} \Gamma_{\pm}(\Hilbert_x)
\ee
relative to the measure $\mu_\Sigma$ and
\be
P_\Sigma = \bigotimes_{x\in\Sigma} P_x
\ee
with $P_x$ a PVM on $\Gamma(\{x\}) = \{\emptyset,\{x\}\}$ acting on $\Gamma_{\pm}(\Hilbert_x)$ with $P_x(\{\emptyset\})$ the projection to the 0-particle sector in the Fock space over $\Hilbert_x$ and $P_x(\{\{x\}\})$ the projection to the sum of all other sectors. (The PVMs on the bosonic and fermionic Fock spaces are unitarily equivalent.) 

In this notation, (IL) amounts to the condition that $\mathcal{H}_I(x)$, when transported to $\Sigma$ with the free evolution, acts non-trivially only on $\Gamma_{\pm}(\Hilbert_x)$ (which then ensures that the consistency condition \eqref{TSconsistent} holds), and (NCFV) to the condition that $\mathcal{H}_I(x)$ is block diagonal relative to the decomposition of the Fock space $\Gamma_{\pm}(\Hilbert_x)$ into the orthogonal sum of the 0-particle sector and the sum of all other sectors. As mentioned already before Proposition~\ref{prop:locNCFV}, it seems plausible that (PL) follows from (IL) and (NCFV).

A simple, explicit (but non-rigorous) example of $\mathcal{H}_I(x)$ satisfying these conditions is the emission-absorption model of \cite{pt:2013c}, a toy quantum field theory of two particle species called $x$-particles and $y$-particles, both Dirac particles, where $x$-particles can emit and absorb $y$-particles. The Hilbert space $\Hilbert_{\Sigma_0}$ is $\Gamma_-(L^2(\RRR^3,\CCC^4)) \otimes \Gamma_+(L^2(\RRR^3,\CCC^4))$, and
\be\label{HdIdef2}
\mathcal{H}_I(t,\vx)
= e^{iH_\free t} \biggl( \Bigl[ \sum_{r=1}^4 a_r^\dagger(\vx) \, a_r(\vx)\Bigr] \otimes \sum_{s=1}^4 \Bigl[ g_s^*\, b_s(\vx)+g_s \, b^\dagger_s(\vx) \Bigr] \biggr) e^{-iH_\free t}
\ee
with $H_\free$ the free Dirac Hamiltonian, $g\in\CCC^4$ a fixed spinor, and $a(\vx)$ and $b(\vx)$ the annihilation operators of the $x$- and $y$-particles, respectively. From \cite[Sec~4.3]{pt:2013c} one can conclude that the model satisfies the axioms of a hypersurface evolution, as well as (IL) and (NCFV).

\subsection{Multi-Time Wave Functions}
\label{sec:MT}

The central idea of the theory of multi-time wave function is to provide a covariant notion of wave function in the particle-position representation using a time coordinate for each particle in addition to its space coordinates. This idea dates back to the very beginnings of relativistic quantum theory \cite{dirac:1932,dfp:1932,bloch:1934,schwinger:1948,tomonaga:1946} and has over the years been considered again and again \cite{CVA:1983,DV82b,DV85,schweber:1961}. Recently, consistent multi-time equations for quantum field theories \cite{pt:2013c,pt:2013d} and other natural interactions \cite{lienert:2015a,lienert:2015b,lienert:2015c,LN:2015} have been described, and the conditions for consistency have been studied more closely \cite{pt:2013a,ND:2016}; see also the review \cite{LPT:2017}. 

As mentioned in the introduction, a multi-time wave function is
a spinor-valued function $\phi$ on the spacelike configurations.
It will be convenient in the following to use ordered configurations; the set of ordered spacelike configurations is
\begin{align}
\sS_\mathrm{o} &= \bigcup_{n=0}^\infty \sS^{(n)}_\mathrm{o} \quad \text{with}\\
\sS^{(n)}_\mathrm{o}&=\Bigl\{ (x_1,\ldots, x_n) \in \M^n: (x_i-x_j)^2<0 \: \forall i\neq j \Bigr\}\,.
\end{align}
Equivalently, $\phi=(\phi^{(0)},\phi^{(1)},\phi^{(2)},\ldots)$ with $\phi^{(n)}:\sS_\mathrm{o}^{(n)} \to \CCC^{k(n)}$. 
For example, for Dirac particles, the dimension of spin space is $k(n) = 4^n$.

\subsubsection{Multi-Time Evolution}

The evolution of multi-time wave functions is usually determined by a set of PDEs with one equation for each time coordinate:
\be
	i \frac{\partial}{\partial x^0_k} \phi^{(n)} (x_1,\ldots,x_n) = (H_k^{(n)} \phi) (x_1,\ldots,x_n ),~~~n\in \NNN.
	\label{eq:multitime}
\ee
Here, $(H_k^{(n)} \phi) (x_1,\ldots,x_n)$ is a differential expression in $\phi$, potentially involving all $\phi^{(m)}$. If $\phi^{(m)}$ for $m > n$ appears in $H^{(n)} \phi$, it must be evaluated at $m$ points constructed out of $x_1,\ldots,x_n$. 
The zero-particle amplitude $\phi^{(0)}$ is automatically time-independent.

Since $\phi$ needs to satisfy several equations simultaneously, the system \eqref{eq:multitime} is consistent iff 
\be\label{consistent}
\bigl[ H_j -i \partial_{x^0_j}, H_k - i \partial_{x^0_k} \bigr] =0
\ee
on $\sS_\mathrm{o}$; this condition is the analog of \eqref{TSconsistent}.

As mentioned already in \eqref{psiSigmaphi}, a hypersurface wave function can be defined by
$\psi_\Sigma = \phi|_{\Gamma(\Sigma)}$. In many cases, the inner product between wave functions on $\Gamma(\Sigma)$ is naturally given by \eqref{scpFock} or a similar expression, defining $\Hilbert_\Sigma$; then $P_\Sigma$ is multiplication by characteristic functions.
One expects that a multi-time wave function $\phi$ on $\sS_\mathrm{o}$ is determined by initial data on $\Gamma(\Sigma)$ for any Cauchy surface $\Sigma$. In this case, the multi-time evolution defines evolution operators $U_{\Sigma}^{\Sigma'}$ via $\phi|_{\Gamma(\Sigma)} \mapsto \phi|_{\Gamma(\Sigma')}$. If these operators are unitary, then they define a hypersurface evolution.

Conversely, the $\psi_\Sigma$ fit together to form a $\phi$ iff $\psi_\Sigma(q) = \psi_{\Sigma'}(q)$ for all $q$ that are ordered subsets of $\Sigma \cap \Sigma'$. 
In \cite[Sec.~4.3]{pt:2013c}, it is shown by means of the Tomonaga-Schwinger picture (though not rigorously) that this is indeed the case if the hypersurface evolution satisfies (IL) and (NCFV). Thus, it seems very likely that, assuming (IL) and (PL), the pictures provided by the multi-time wave function, the hypersurface evolution, and the Tomonaga-Schwinger equation are mutually equivalent.

It lies in the nature of multi-time theories that (NCFV) is automatically satisfied (whenever the $U_\Sigma^{\Sigma'}$ are unitary). That is because the number of time variables equals the number of particles, so the 0-particle sector is time independent. Put differently, when inserting points from $\Sigma$ into the space-time variables $x_k$ of $\phi$, there is nothing to be inserted into $\phi^{(0)}$, so $\psi^{(0)}_\Sigma$ is a complex number independent of $\Sigma$. Keep in mind that the vacuum subspace $\range P_\Sigma(\{\emptyset\})$ contains those $\psi_\Sigma$ for which at most $\psi^{(0)}_\Sigma$ is non-zero. Hence, if $\psi_\Sigma$ lies in the vacuum subspace and has norm 1, then $\psi^{(0)}_\Sigma$ is a complex number of modulus 1. Also for every other $\Sigma'$, it follows that $\psi^{(0)}_{\Sigma'}$ has modulus 1, while by unitarity $\psi_{\Sigma'}$ has norm 1, so all other sectors of $\psi_{\Sigma'}$ must vanish, and $\psi_{\Sigma'}$ lies in the vacuum subspace. This proves (NCFV).

Although many multi-time theories satisfy (IL), not all do. An example that does not is provided in \cite{ND:2016}; in this model, which is not Lorentz invariant, particles interact at a distance through a special, spin-dependent potential that is carefully contrived so as to respect the consistency condition \eqref{consistent}, which almost all potentials violate \cite{pt:2013a,ND:2016}.

\subsubsection{Examples of Multi-Time Theories Satisfying (IL)}

\begin{enumerate}
	\item \textit{Free Dirac particles.} The $n$-th sector of $\phi$ is an anti-symmetric function $\phi^{(n)} : \M^n \rightarrow (\CCC^{4})^{\otimes n}$ obeying the multi-time equations	
	\be
		i \frac{\partial}{\partial x_k^0} \phi^{(n)}(x_1,\ldots,x_n) = H^{\rm Dirac}_k  \phi^{(n)}(x_1,\ldots,x_n),~~k=1,\ldots,n,~n\in \NNN,
	\ee
	where $H^{\rm Dirac}_k $ is the free Dirac Hamiltonian acting on the variable $x_k$ and the spin index of the $k$-th particle. (We can also allow external electromagnetic fields.) It is a special situation of non-interacting particles that the multi-time wave function $\phi$ is defined even for non-spacelike configurations, i.e., on all of $\cup_{n=0}^\infty \M^n$. We need $\phi$ only on spacelike configurations in order to define a hypersurface evolution, and this hypersurface evolution is the same one as described in Remark~\ref{rem:Fock} of Section~\ref{sec:hypersurfaceevolution} and in Section~\ref{sec:free}; as mentioned already, it obeys (IL).
	
	\item \textit{Dirac particles in 1+1 dimensions with zero-range ($\delta$-)interactions.} In this Lorentz-invariant multi-time model \cite{lienert:2015a,lienert:2015b,lienert:2015c,LN:2015} (see \cite[Sec.~5]{LPT:2017} for a brief summary), particles interact upon contact, while particle number is conserved. The interaction is mathematically characterized through a boundary condition on the wave function. The combination of propagation locality, no creation or annihilation of particles, and no interaction at a distance implies (IL). The model is rigorously defined and proven to be consistent.
	
	\item \textit{Toy quantum field theory.} 
	This model \cite{pt:2013c} (see \cite[Sec.~6]{LPT:2017} for a brief summary) is a multi-time version of a simple quantum field theory in which $x$-particles can emit and absorb $y$-particles, where both $x$ and $y$ are Dirac particles. The model is not rigorously defined as it is ultraviolet divergent. However, on the non-rigorous level it has been shown to be consistent and to correspond to the Tomonaga-Schwinger equation with Hamitonian density \eqref{HdIdef2}, which obeys (IL).
\end{enumerate}

\section{Definitions Used for Detection Probabilities} \label{sec:detectors}

Before we formulate Theorem~\ref{thm:prob} in Section~\ref{sec:thm} below, we give a precise definition of the detection distribution on $\Sigma$ according to the ``horizontal piece approach.'' We first define the detection distribution for a finite partition of $\Sigma$, and afterwards (Proposition~\ref{prop:cdc}) the continuous detection distribution on $\Gamma(\Sigma)$. The definition makes use of a particular choice of Lorentz frame, which allows us to identify Minkowski space-time with $\RRR^4$. Recall that $\mu_\Sigma$ denotes the volume measure on $\Sigma$.

\begin{defn}
An \emph{admissible partition} of $\Sigma$ consists of finitely many subsets $\patch_1,\ldots,\patch_r$ of $\Sigma$ that are mutually disjoint, $\patch_\ell\cap \patch_m = \emptyset$ for $\ell\neq m$, and such that each $\patch_\ell$ is bounded and has boundary of measure zero, $\mu_\Sigma(\partial \patch_\ell)=0$. We set $\patch_{r+1}=\Sigma \setminus (\patch_1 \cup \ldots \cup \patch_r)$ to make $(\patch_1,\ldots,\patch_{r+1})$ a partition of $\Sigma$.
\end{defn}

The idea is to approximate detectors that measure for each $\patch_1,\ldots,\patch_r$ whether it contains at least one particle or none. It seems physically reasonable that the detector regions $\patch_1,\ldots, \patch_r$ are bounded, so they do not reach to infinity and are extended, for any given time resolution $\varepsilon>0$, over only finitely many time steps---whereas the $x^0$ coordinate may reach arbitrarily large values on $\Sigma$. The set $\patch_{r+1}$ comprises that region of $\Sigma$ where we do not place detectors and make no attempt at observing particles.

\begin{figure}[tp]
\centering
 \includegraphics[width=0.7\textwidth]{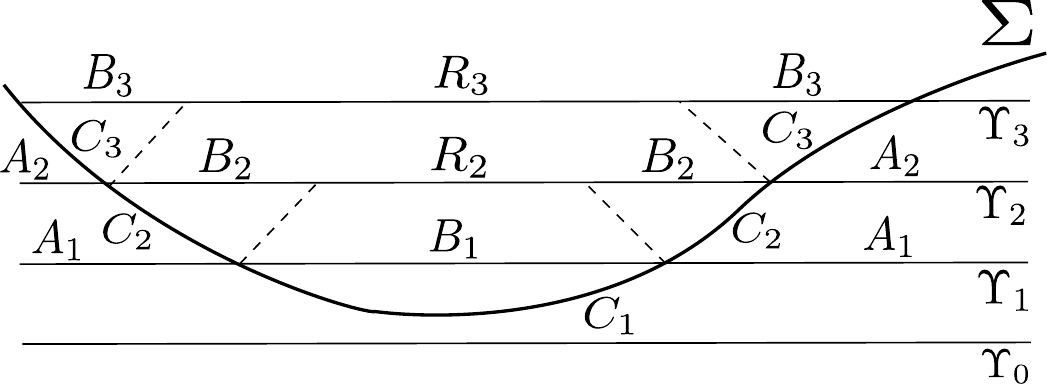}
 \caption{Illustration of the surfaces relevant to the detection process.}
 \label{fig:geom1d}
\end{figure}

\begin{defn}
	We introduce the following geometric notions (see Figure~\ref{fig:geom1d}).\\
	 For every $k\in\ZZZ$ let
\begin{align}
\Upsilon_k &:=\{x \in \M: x^0=k\varepsilon\}=\Sigma_{k\varepsilon},\\
A_k &:=\Upsilon_k \setminus \future(\Sigma),\\
C_k &:= \Bigr[ \Sigma \cap \past(\Upsilon_k) \Bigr]  \setminus \past(\Upsilon_{k-1}),\\
B_k &:=\Upsilon_k \cap \future(C_k) = \Gr(C_k,\Upsilon_k),\\
R_k &:=\Upsilon_k\setminus (A_k\cup B_k) \,. 
\end{align}
	Note that $\Upsilon_k= A_k\cup B_k \cup R_k$ is a partition of $\Upsilon_k$. 
	Furthermore, recall that $\pi$ denotes the ``vertical'' projection $\RRR^4\to\RRR^3$; for every $\ell=1,\ldots,r$ we set (see Figure~\ref{fig:geom})
\begin{align}
B_{k\ell} &:= \{x\in B_k: \pi(x) \in \pi(\patch_\ell)\},\\
C_{k\ell} &:= \{x\in C_k: \pi(x) \in \pi(\patch_\ell)\}\,.
\end{align}
\end{defn}

\begin{figure}[t]
\centering
 \includegraphics[width=0.9\textwidth]{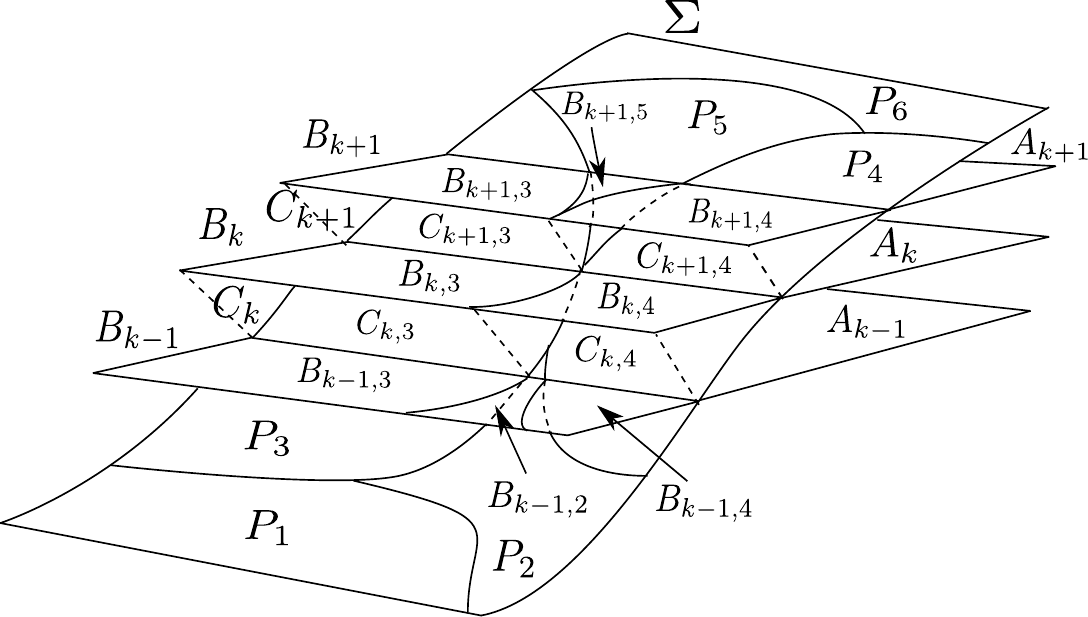}
 \caption{Illustration of the discretization scheme.}
 \label{fig:geom}
\end{figure}

Imagine that at every time $k\varepsilon$, an ideal quantum measurement is carried out for each $\ell$, detecting whether there is at least one particle in $B_{k\ell}$. The detection results are summarized by a 0-1-sequence $s = (s_{k\ell})$ where $s_{k\ell} = 0$ if no particle is detected in $B_{k\ell}$ and $s_{k\ell} = 1$ if at least one particle is detected in $B_{k\ell}$. After collapsing the quantum state respectively, we trace out all degrees of freedom outside $A_k$ and replace them by the vacuum state in order to avoid double detections. Finally, we are only interested in the coarse-grained results for the sets $\patch_\ell$ of the partition, given by a 0-1-sequence $L = (L_\ell)$, in the limit $\varepsilon \rightarrow 0$. These results for the $\patch_\ell$ are defined by $L_\ell = 0$ if $s_{k\ell} = 0$ for all $k$ and $L_\ell = 1$ if there is a $k$ such that $s_{k\ell} = 1$.
We now translate this intuitive thought into precise definitions.

\begin{defn}[detection map] Let $\sP = (\patch_1,...,\patch_{r+1})$ be an admissible partition of $\Sigma$, let $\varepsilon > 0$, and let 
\begin{align}
\kappa:= k_{\rm min} &= \min \bigl\{ k\in\ZZZ : \exists x\in \bigcup_{\ell=1}^r \patch_\ell: x^0\leq k\varepsilon \bigr\}\,,\\
K:= k_{\rm max} &= \min \bigl\{ k\in\ZZZ : \forall x\in\bigcup_{\ell=1}^r \patch_\ell: x^0 \leq k\varepsilon \bigr\}\,.
\end{align}
They exist because the $\patch_1,\ldots,\patch_r$ are bounded. 
Let $s_{k\ell}\in \{0,1\}$ for every $\kappa \leq k \leq K $ and every $\ell=1,\ldots,r$; we write $s = (s_{k\ell})_{\kappa \leq k \leq K ,\, 1 \leq \ell \leq r}$ as well as $s_k = (s_{k\ell})_{1 \leq \ell \leq r}$. Let $\sD(\Hilbert)$ denote the space of density matrices on the Hilbert space $\Hilbert$. 

For every $k=\kappa,\ldots,K$ we first define the \textit{detection map} $D_k(s_k) : \sD(\Hilbert_{\Upsilon_k}) \rightarrow \sD(\Hilbert_{\Upsilon_k})$ as follows. Let $\rho \in \sD(\Hilbert_{\Upsilon_k})$. Then
\be\label{Dkdef}
	D_k(s_k)  \rho := \frac{1}{\mathcal{N}_k(\rho)} \tr_{B_k\cup R_k} \Bigl( P_{\Upsilon_k}(M_B(s_k)) \rho \, P_{\Upsilon_k}(M_B(s_k)) \Bigr) \otimes P_{B_k\cup R_k}(\{\emptyset\}),
\ee
where $\tr_{B_k\cup R_k}$ denotes the partial trace over the tensor factor $\Hilbert_{B_k\cup R_k}$ in $\Hilbert_{\Upsilon_k} = \Hilbert_{A_k} \otimes \Hilbert_{B_k\cup R_k}$.\footnote{Should $B_k\cup R_k$ be empty, and in similar situations in the following, our formulas still apply: Then $\Hilbert_{B_k\cup R_k}=\CCC|\emptyset \rangle$ is 1-dimensional, so $\Hilbert_{\Upsilon_k}= \Hilbert_{A_k} \otimes \Hilbert_{B_k\cup R_k} \cong \Hilbert_{A_k}$, and $\Gamma(B_k\cup R_k)$ contains only one element, the empty configuration.} Furthermore,
\be
	\mbsk = \bigcap_{1 \leq \ell \leq r} \mbskl\,,
\ee
where
\be
	\mbskl = \left\{ \begin{array}{l} \emptyset(B_{k\ell})~{\rm if}~s_{k\ell} = 0\\ \exists(B_{k\ell})~{\rm if}~s_{k\ell} = 1 \end{array} \right.
\ee
stands for the part of the configuration space $\Gamma(\Upsilon_k)$ associated with the outcomes $s_k$. Finally, $\mathcal{N}_k(\rho) = \tr_{\Upsilon_k} \left( P_{\Upsilon_k}(\mbsk) \rho \, \right)$ is a normalization factor.

\end{defn}

Given the previous definition, we can now precisely state the measurement postulates we are using. Note that these are given by the Born rule at equal times (for density matrices) combined with a modified collapse postulate (tracing out particles which were detected and replacing them with the vacuum).

\begin{defn}(measurement postulates for equal times.)
	\begin{enumerate}
			\item \textit{Born rule.} Given the density matrix $\rho_{\Upsilon_k} \in \sD(\Hilbert_{\Upsilon_k})$, the conditional probability for the detection results $s_k$ given the previous detection results $s_\kappa,\ldots,s_{k-1}$ (relative to a partition $\sP$, a time discretization $\varepsilon$ and an initial density matrix $\rho_0 \in \sD(\Hilbert_{\Upsilon_0})$) is obtained by
		\be
			\Prob^{\rho_0,\varepsilon}_{\sP}(s_k | s_\kappa,\ldots,s_{k-1}) 
			= \Prob^{\varepsilon}_{\sP}(s_k | \rho_{\Upsilon_k}) 
			= \tr_{\Upsilon_k} \Bigl( P_{\Upsilon_k}(M_B(s_k))\,  \rho_{\Upsilon_k} \Bigr)
			\label{eq:condborn}
		\ee
		for any $k=\kappa,\ldots, K$. For $k=\kappa$, the left-hand side should be understood as $\Prob^{\rho_0,\varepsilon}_{\sP}(s_\kappa)$.
		\item \textit{Collapse rule.} If a result given by $s_k$ is detected on $\Upsilon_k$, the quantum state  $\rho_{\Upsilon_k} \in \sD(\Hilbert_{\Upsilon_k})$ collapses to $\rho_{\Upsilon_k}' = D_k(s_k) \rho_{\Upsilon_k}$. The new quantum state $\rho_{\Upsilon_{k+1}} \in \sD(\Hilbert_{\Upsilon_{k+1}})$ is given by
		\be
			\rho_{\Upsilon_{k+1}} = U_{\Upsilon_k}^{\Upsilon_{k+1}} \rho_{\Upsilon_k}' U_{\Upsilon_{k+1}}^{\Upsilon_k}.
			\label{eq:collapserule}
		\ee
		(We note that the detection mapping $D_k$ also removes particles in $B_k \setminus \bigcup_{\ell=1}^r B_{k\ell}$ that actually were not detected but did cross $B_k$ and therefore should not be counted again at a later time).
		\item \textit{First surface ($\Upsilon_\kappa$) rule.} In parallel to removing detected particles, we also remove after time $\kappa-1$ those particles that have crossed some $B_k$ with $k<\kappa$ (in places associated with the region $\patch_{r+1}$ without detectors). That is, we postulate that \eqref{eq:collapserule} also holds for $k=\kappa-1$ but with $\rho'_{\Upsilon_{\kappa-1}}$ given (not by a collapse via $D_k$ but instead) by
		\be\label{firstsurface}
			\rho'_{\Upsilon_{\kappa-1}} =  \tr_{B_{\kappa-1}\cup R_{\kappa-1}}\Bigl(U_{\Upsilon_0}^{\Upsilon_{\kappa-1}}\, \rho_0 \, U_{\Upsilon_{\kappa-1}}^{\Upsilon_0} \Bigr) \otimes P_{B_{\kappa-1}\cup R_{\kappa-1}} (\{\emptyset\}) \,.
		\ee
		In case $B_{\kappa-1}\cup R_{\kappa-1}=\emptyset$, this rule reduces to
		\be
		\rho'_{\Upsilon_{\kappa-1}} = U^{\Upsilon_0}_{\Upsilon_{\kappa-1}}\, \rho_0 \, U^{\Upsilon_{\kappa-1}}_{\Upsilon_0}\quad  \text{and} \quad 
		\rho_{\Upsilon_\kappa} = U^{\Upsilon_0}_{\Upsilon_{\kappa}}\, \rho_0 \, U^{\Upsilon_{\kappa}}_{\Upsilon_0}\,.
		\ee
		In case $\kappa=0$, notice that necessarily $B_{\kappa-1}=\emptyset = R_{\kappa-1}$. Note that \eqref{firstsurface} has trace 1.
	\end{enumerate}

\end{defn}

From these postulates for each time step it now follows, by the usual rule for conditional probabilities, that the probability distribution $\Prob^{\rho_0,\varepsilon}_{\sP}(s)$ on the set of detection results $\{0,1\}^{(K -\kappa+1)r}$ (relative to a partition $\sP$, an initial density matrix $\rho_0$ and a time discretization $\varepsilon$) is given by
	\begin{align}		
		\Prob^{\rho_0,\varepsilon}_{\sP}(s) 
		&= \Prob^{\rho_0,\varepsilon}_{\sP}(s_{K } | s_{\kappa},\ldots,s_{K -1}) \, 
		\Prob^{\rho_0,\varepsilon}_{\sP}(s_{K -1} | s_1,\ldots,s_{K -2}) \cdots  \Prob^{\varepsilon}_{\sP,\rho_0}(s_{\kappa}) \nonumber\\[3mm]
		&= \Prob^{\varepsilon}_{\sP}(s_{K } | \rho_{\Upsilon_K}) \, 
		\Prob^{\varepsilon}_{\sP}(s_{K -1} | \rho_{\Upsilon_{K -1}}) \cdots  \Prob^{\varepsilon}_{\sP}(s_{\kappa}|\rho_{\Upsilon_\kappa})\,.\label{totalprod}
	\end{align}

We are now ready to define the detection distribution that our main theorem is concerned with.

\begin{defn}
The \emph{detection distribution} $\Prob^{\psi_0}_{\det,\sP}$ relative to $\sP$ and $\psi=\psi_0\in\Hilbert_{\Sigma_0}$ with $\|\psi\|=1$ is the measure on $\{0,1\}^r$ defined by
\be\label{partitionprob}
\Prob^{\psi}_{\det,\sP}(L) = \lim_{\varepsilon\to 0}\Prob^{\psi,\varepsilon}_{\det,\sP}(L)\,,
\ee
where
\be
	\Prob^{\psi,\varepsilon}_{\det,\sP}(L) = \sum_{s:L} \Prob^{|\psi \rangle \langle \psi |,\varepsilon}_{\sP}(s)
	\label{eq:discretizeddistribution}
\ee
Here, the sum on the right-hand side runs over all $s$ compatible with $L$, that is, over all $s \in \{ 0,1\}^{(K -\kappa+1) r}$ such that for all $1 \leq \ell \leq r$: ($s_{k\ell} = 0 \, \forall k$ if $L_\ell = 0$) and ($\exists k : s_{k\ell} = 1$ if $L_\ell$ = 1). Put differently, the sum runs over all $s$ leading to the coarse-grained detection results $L$.
\end{defn}

\paragraph{Remarks.}
\begin{enumerate}
\setcounter{enumi}{\theremarks}
	\item One could also define the detection distribution for a mixed initial density matrix $\rho_0$, replacing $\Prob^{|\psi \rangle \langle \psi |,\varepsilon}_{\sP}(s)$ with $\Prob^{\rho_0,\varepsilon}_{\sP}(s)$ in \eqref{eq:discretizeddistribution}. 
	\item The question arises whether the limit $\varepsilon\to 0$ in \eqref{partitionprob} always exists. Theorem~\ref{thm:prob} will show, among other things, that it does. One could also set up a modified definition of $\Prob^{\psi}_{\det,\sP}$ that would allow for detecting particles in sets $\patch_\ell$ that are \emph{unbounded}, but that would come at the cost of increased complexity of the reasoning and, as already mentioned, it would seem physically unrealistic to have infinitely extended detectors and infinitely many time steps in which detection is attempted. Moreover, the probabilities for bounded sets already determine the probability distribution completely, as Proposition \ref{prop:cdc} shows.
\end{enumerate}
\setcounter{remarks}{\theenumi}

\section{Statement of the Main Results} \label{sec:thm}

\begin{prop}\label{prop:cdc}
(Continuum detection distribution) 
For every $\psi\in\Hilbert_0$ with $\|\psi\|=1$, there is at most one probability distribution $\Prob^\psi_{\det}$ on $\Gamma(\Sigma)$ that agrees with $\Prob^\psi_{\det,\sP}$ for every admissible partition $\sP=(\patch_1,\ldots,\patch_{r+1})$ in the sense that, for every $L \in \{0,1\}^r$,
\be
\Prob^\psi_{\det} \Bigl( \msl \Bigr) = \Prob^\psi_{\det,\sP} (L)\,.
\ee
Here,
\be
	\msl = \bigcap_{\ell=1}^r \msll
\ee
with
\be
	\msll = \left\{ \begin{array}{l} \emptyset(\patch_\ell)~{\rm if}~L_\ell = 0\\ \exists(\patch_\ell)~{\rm if}~L_\ell = 1 \end{array} \right. .
\ee
If it exists, that distribution $\Prob^\psi_{\det}$ will be called the \emph{continuum detection distribution} of $\psi$ on $\Gamma(\Sigma)$.
\end{prop}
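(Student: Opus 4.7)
The plan is to invoke Dynkin's $\pi$-$\lambda$ theorem: it suffices to exhibit a $\pi$-system $\sC \subseteq \sB(\Gamma(\Sigma))$ such that (a) every candidate distribution $\Prob^\psi_{\det}$ is determined on $\sC$ by the hypothesis of the proposition, and (b) $\sigma(\sC) = \sB(\Gamma(\Sigma))$. I would take
\[
\sC = \bigl\{ \emptyset(\patch) : \patch \subseteq \Sigma \text{ is a bounded Borel set with } \mu_\Sigma(\partial \patch) = 0 \bigr\}.
\]
This is a $\pi$-system by the identity $\emptyset(\patch) \cap \emptyset(\patch') = \emptyset(\patch \cup \patch')$ together with the inclusion $\partial(\patch \cup \patch') \subseteq \partial \patch \cup \partial \patch'$, which ensures the defining conditions are closed under finite unions. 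Point (a) is then immediate: for any $\patch$ with $\emptyset(\patch) \in \sC$, the admissible partition consisting of the single region $\patch$ (with $\patch_2 = \Sigma \setminus \patch$) gives $\msl = \emptyset(\patch)$ when $L = 0$, so the hypothesis forces $\Prob^\psi_{\det}(\emptyset(\patch)) = \Prob^\psi_{\det, (\patch)}(0)$, a value fixed by assumption.

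The core of the argument is (b). Using the homeomorphism $\pi_\Sigma : \Sigma \to \RRR^3$ from Section~\ref{sec:dynamics} and the observation, also recorded there, that $\pi_* \mu_\Sigma$ has the same null sets as Lebesgue measure, the $\pi_\Sigma$-preimages of Euclidean open balls with rational centres and radii form a countable basis of the topology of $\Sigma$, each element of which is bounded and has $\mu_\Sigma$-measure-zero boundary. Hence every open $U \subseteq \Sigma$ is a countable union $U = \bigcup_n U_n$ of such basic open sets, with $\emptyset(U_n) \in \sC$ for every $n$, and therefore
\[
\exists(U) = \bigcup_n \bigl( \Gamma(\Sigma) \setminus \emptyset(U_n) \bigr) \in \sigma(\sC).
\]
It then remains to argue that the hit-events $\exists(U)$ for open $U \subseteq \Sigma$ generate $\sB(\Gamma(\Sigma))$. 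Via the quotient map $\tau : \Gamma_\mathrm{o}(\Sigma) \to \Gamma(\Sigma)$, Borel sets in $\Gamma(\Sigma)$ correspond to permutation-symmetric Borel sets in $\bigsqcup_n \Sigma^n_{\neq}$; on each sector the Borel structure is generated by products of basic open sets, and the symmetrisations of such products can be written as countable Boolean combinations of hit-events $\exists(U_i)$ together with the counting events $\{q : \#q = n\}$, which are themselves built from hit-events by intersection and complementation.

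Dynkin's theorem applied to $\sC$ then concludes the proof. The main obstacle I foresee is the last step, namely the verification that hit-events generate $\sB(\Gamma(\Sigma))$: the argument is purely measure-theoretic but requires careful bookkeeping with the quotient topology via $\tau$. A cleaner alternative would be to invoke the classical Rényi-type theorem asserting that the distribution of a simple point process on a Polish space is uniquely determined by its void probabilities on a determining class, which applies here since configurations in $\Gamma(\Sigma)$ are simple by definition and $\Sigma$ is Polish.
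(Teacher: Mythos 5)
Your proposal is correct and follows essentially the same route as the paper's own proof: uniqueness of a probability measure on a $\cap$-stable generating class of void events $\emptyset(\cdot)$, whose values are pinned down by single-patch admissible partitions with $L=0$, followed by showing that these void events (equivalently, hit events $\exists(\cdot)$ of a countable basis of balls with null boundaries) generate $\sB(\Gamma(\Sigma))$ via the counting events and the sector-wise Borel structure. The ``careful bookkeeping'' you defer is exactly what the paper's proof supplies explicitly, namely writing $\{q:\#q\geq n\}$ as a countable union over $n$-tuples of pairwise disjoint balls of intersections of hit events, and expressing symmetrized products of overlapping balls as countable unions of such disjoint-ball events, so your plan closes without any genuinely different ingredient.
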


We give the proof in Section~\ref{sec:proofcdc}. Informally, Proposition~\ref{prop:cdc} asserts that since we can choose the detector regions $\patch_1,\ldots,\patch_r$ arbitrarily (as long as they are bounded and have boundaries of measure zero), there is no more than one distribution on the continuous configuration space $\Gamma(\Sigma)$ that agrees with $\Prob^\psi_{\det,\sP}$ for every choice of $\sP$.

\setcounter{thm}{0}
\begin{thm}\label{thm:prob}
Suppose we are given a hypersurface evolution $(\Hilbert_\circ,P_\circ,U_\circ^\circ)$ in Minkowski space-time satisfying (IL) and (PL).
Let $\psi\in\Hilbert_{\Sigma_0}$ with $\|\psi\|=1$, let $\Sigma$ be a Cauchy hypersurface in the future of $\Sigma_0$, and let $\sP=(\patch_1,\ldots,\patch_{r+1})$ be an admissible partition of $\Sigma$. For every $L\in \{0,1\}^r$, the limit \eqref{partitionprob} exists and is equal to
\be
\Prob^\psi_{\det,\sP}(L) = \langle \psi_\Sigma | P_\Sigma(\msl) | \psi_\Sigma \rangle\,.
\ee
Put differently, for every $\psi\in\Hilbert_{\Sigma_0}$ with $\|\psi\|=1$, there is a continuum detection distribution $\Prob^\psi_{\det}$ on $\Gamma(\Sigma)$, and it is given by
\be
\Prob^\psi_{\det}(\cdot) = \scp{\psi_\Sigma}{P_\Sigma(\cdot)|\psi_\Sigma}\,.
\ee
\end{thm}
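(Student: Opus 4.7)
I prove the theorem in three stages: an inclusion-exclusion reduction to ``no-detection'' marginals, an algebraic simplification of those marginals using the PVM factorization and a partial-trace identity, and a transport argument via (IL), (FS), and Proposition~\ref{prop:reducedevol}.

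\textbf{Stage 1 (reduction and simplification).} Since $L_\ell = 1$ is the event $\{\exists k: s_{k\ell}=1\}$, inclusion-exclusion on these ``at least one click'' events yields
\begin{equation*}
\Prob^{\psi,\varepsilon}_{\det,\sP}(L) \;=\; \sum_{S \supseteq Z(L)} (-1)^{|S|-|Z(L)|}\, p_S^\varepsilon,
\qquad Z(L) := \{\ell : L_\ell = 0\},
\end{equation*}
where $p_S^\varepsilon := \Prob^{\psi,\varepsilon}_\sP[s_{k\ell}=0 \text{ for every } k \text{ and every } \ell \in S]$. The same inclusion-exclusion applied to $\|P_\Sigma(\msl)\psi_\Sigma\|^2$ expresses it as a signed sum of $\|P_\Sigma(\emptyset(\patch^S))\psi_\Sigma\|^2$ with $\patch^S := \bigcup_{\ell \in S}\patch_\ell$, so it suffices to show $\lim_{\varepsilon \to 0} p_S^\varepsilon = \|P_\Sigma(\emptyset(\patch^S))\psi_\Sigma\|^2$ for every $S$. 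Within the iterated trace defining $p_S^\varepsilon$, the $M_B(s_k)$ are pairwise disjoint subsets of $\Gamma(\Upsilon_k)$, so the orthogonal sum $\sum_{s_k: s_{k\ell}=0\,\forall \ell \in S} P_{\Upsilon_k}(M_B(s_k))$ collapses to the single projector $K_k^S := P_{\Upsilon_k}(\emptyset(B_{kS}))$ with $B_{kS} := \bigcup_{\ell \in S} B_{k\ell}$. By axiom~(iii), $K_k^S$ acts only on the tensor factor $\Hilbert_{B_k}$ inside $\Hilbert_{B_k \cup R_k}$, so by cyclicity of the partial trace one has $\tr_{B_k \cup R_k}(K_k^S \rho K_k^S) = \tr_{B_k \cup R_k}(K_k^S \rho)$; this reduces $p_S^\varepsilon$ to a compact alternating product of the unitaries $U_{\Upsilon_{k-1}}^{\Upsilon_k}$, the projectors $K_k^S$, partial traces, and vacuum tensor factors.

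\textbf{Stage 2 (transport via $W$ and the limit).} To identify this product with a Born probability on $\Sigma$, I first establish the geometric inclusion $\Gr(A_k, \Upsilon_{k+1}) \subseteq A_{k+1} \cup B_{k+1}$: any $x \in \Gr(A_k, \Upsilon_{k+1}) \cap \future(\Sigma)$ lies on a causal curve from some $y \in A_k$, and this curve must cross $\Sigma$ at a point $w$ with $k\varepsilon \leq w^0 \leq (k+1)\varepsilon$, placing $w \in C_{k+1}$ and hence $x \in B_{k+1}$. Then by (FS), each post-detection state on $\Upsilon_k$ is concentrated in $A_k$, and Proposition~\ref{prop:reducedevol} factorizes its evolution across the next slab through the reduced isometry $W_{A_k}^{\Gr(A_k, \Upsilon_{k+1})}$, with vacuum on $R_{k+1}$. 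Introducing the hybrid Cauchy surfaces $\Sigma^{(k)} := [\Sigma \cap \past(\Upsilon_k)] \cup A_k$ and applying (IL) to the pair $(\Sigma, \Sigma^{(k)})$, which share $\Sigma \cap \past(\Upsilon_k)$, I inductively identify each local detection event ``no particle in $B_{kS}$'' (on $\Upsilon_k$) with the corresponding event ``no particle in $C_{kS}$'' (on $\Sigma$), modulo an error attributable to the fact that $W_{C_k}^{B_k}$ intertwines the PVMs only in the limit as the slab shrinks. Since $\bigcup_k C_{k\ell} = \patch_\ell$ exactly (by bijectivity of $\pi|_\Sigma$) and (IL) gives $\|P_{\Sigma^{(K)}}(\emptyset(A))\psi_{\Sigma^{(K)}}\|^2 = \|P_\Sigma(\emptyset(A))\psi_\Sigma\|^2$ for any $A \subseteq \Sigma \cap \Sigma^{(K)}$, telescoping through all time steps and taking $\varepsilon \to 0$, while using the absolute-continuity axiom~(i) to control PVM convergence as the ``staircase'' set $\bigcup_k B_{kS}$ tightens onto $\patch^S$, yields the desired limit $\|P_\Sigma(\emptyset(\patch^S))\psi_\Sigma\|^2$.

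\textbf{Main obstacle.} The hardest step is Stage 2, namely re-expressing the sequential detection process (with its intervening trace-outs and vacuum replacements) as a Born probability involving $P_\Sigma(\emptyset(\patch^S))$, up to an error vanishing with $\varepsilon$. This requires a delicate accounting of how the $W$-operators compose across slabs; a careful application of (IL) and Proposition~\ref{prop:locNCFV} to splice local evolutions and vacuum factors on hybrid surfaces whose seams between $\Sigma$ and $\Upsilon_k$ must be shown to yield genuine Cauchy surfaces; and a precise convergence argument showing that the discrepancy between the projection $K_k^S$ on $B_{kS} \subseteq \Upsilon_k$ and its $\Sigma$-counterpart on $C_{kS}$ vanishes faster than the $O(1/\varepsilon)$ number of time steps can amplify.
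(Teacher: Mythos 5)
Your Stage 1 is sound: the inclusion--exclusion reduction to the no-detection marginals $p_S^\varepsilon$ is legitimate, and since each $P_{\Upsilon_k}(\mbsk)$ acts as $I_{A_k}\otimes(\cdot)$ on the factor being traced out, cyclicity of the partial trace does let the sum over unobserved outcomes collapse to the single projector $P_{\Upsilon_k}(\emptyset(B_{kS}))$ at every step. This is in fact a genuinely different bookkeeping from the paper, and it would sidestep the one place where the paper has to work around non-disjointness of the grown sets $\mcshat$ (the lemma $P\leq\widehat{P}+(I-Q)$ in the proof of the upper bound). But the theorem is not proved by this reduction: all the analytic content sits in your Stage 2, and there you have a genuine gap rather than a proof.

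Concretely, the step you flag as the ``main obstacle''---identifying the event ``no particle in $B_{kS}$'' on $\Upsilon_k$ with ``no particle in $C_{kS}$'' on $\Sigma$ ``modulo an error'' that must vanish faster than the $O(1/\varepsilon)$ number of time steps---is exactly the unproven core, and the strategy you propose for it cannot be carried out from the hypotheses. (IL) and (FS) are purely qualitative (support/concentration statements); they yield no quantitative rate for how much amplitude enters the sliver between $B_{kS}$ and $C_{kS}$ during one slab of thickness $\varepsilon$, and there is no reason such a per-step error is $o(\varepsilon)$ uniformly in $k$ (the state may concentrate near a patch boundary at particular times), so an error-accumulation argument over $O(1/\varepsilon)$ steps has no basis. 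What is actually available from (FS) are exact \emph{one-sided} operator inequalities, valid for every $\varepsilon$: transporting $P_{\Upsilon_k}(\emptyset(B_{k\ell}))$ to $\Sigma$ is bounded between $P_\Sigma(\emptyset(\cklhat\cup D_{k\ell}))$ and $P_\Sigma(\emptyset(\cklcheck))$, where $\cklcheck=\Sr(B_{k\ell},\Sigma)$ and $\cklhat=\past(B_{k\ell})\cap C_k$. This gives, for each fixed $\varepsilon$, an exact sandwich of the discrete detection probability between Born probabilities $\scp{\psi_\Sigma}{P_\Sigma(\mslcheck)|\psi_\Sigma}$ and $\scp{\psi_\Sigma}{P_\Sigma(\mslhat)|\psi_\Sigma}$ of shrunk/grown configuration sets; the limit then follows with no error estimates at all, by monotonicity in $\varepsilon$ of the grown/shrunk patches, $\sigma$-continuity of the measure $\scp{\psi_\Sigma}{P_\Sigma(\cdot)|\psi_\Sigma}$, and the fact that the residual discrepancy is supported on configurations touching $\bigcup_\ell\partial\patch_\ell$, which has $\mu_{\Gamma(\Sigma)}$-measure zero and hence, by axiom (i), probability zero. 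You gesture at this last step (``staircase tightening'' plus axiom (i)), but without first establishing the exact closed expression for the discrete probability in terms of the reduced operators $W$ (which requires the composition identity $W_{A_k}^{A_{k+1}\cup B_{k+1}}=(I_{A_{k+1}}\otimes W_{C_{k+1}}^{B_{k+1}})\,U_{A_k}^{A_{k+1}\cup C_{k+1}}$ and a nested-trace reordering induction) and the two-sided projection inequalities just described, the telescoping you invoke does not produce anything comparable to $\|P_\Sigma(\emptyset(\patch^S))\psi_\Sigma\|^2$. Replacing the approximate-intertwining/error-rate idea by this exact sandwich-and-monotone-limit mechanism is what is needed to close the argument.
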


The proof is given in Section~\ref{sec:proofofmaintheorem}. Theorem~\ref{thm:prob} is our precise formulation of the curved Born rule. We have given an informal summary around \eqref{eq:generalizedbornrule}.

\section{Proofs} \label{sec:proofs}

\subsection{Proof of Proposition~\ref{prop:cdc}}
\label{sec:proofcdc}

\begin{proof}
Recall that $A\subseteq\Sigma$ is measurable iff $\pi(A)$ is, and $\pi$ is the projection $\RRR^4\to\RRR^3$. 
Let $\mathcal{B}_R(\vx)$ denote the open ball of radius $R>0$ around $\vx$ in $\RRR^3$. 
We will show that already those admissible partitions with $r=1$ and $\pi(\patch_1)$ the union of finitely many open balls (of finite radii) determine $\Prob^\psi_{\det}$ uniquely, in fact already the probabilities of $\emptyset(\patch_1)$ do.

It is a known fact that a probability measure $\Prob$ on a measurable space $(\Omega,\sF)$ is uniquely determined by its values on a $\cap$-stable generator of $\sF$. Thus, it suffices to show that the family
\be
\sA:=\Bigl\{\emptyset(A): \pi(A) \text{ is the union of finitely many open balls}  \Bigr\}\subset \sB(\Gamma(\Sigma))
\ee
is a $\cap$-stable generator. It is clear that $\sA$ is $\cap$-stable because
\be
\emptyset(A)\cap \emptyset(B)=\emptyset(A\cup B)\,,
\ee
and $\pi(A\cup B)=\pi(A)\cup \pi(B)$ is again a union of finitely many open balls.
So it remains to show that $\sigma(\sA)$, the $\sigma$-algebra generated by $\sA$, contains $\sB(\Gamma(\Sigma))$ and, therefore, coincides with it. In fact, we prove the slightly stronger statement that already for the smaller family
\be
\sC = \Bigl\{\emptyset\bigl(\mathcal{B}_R(\vx) \bigr): R>0, \vx\in \RRR^3 \Bigr\}\subset\pi(\sA) \subset \sB(\Gamma(\RRR^3))\,,
\ee
$\sigma(\sC)$ coincides with $\sB(\Gamma(\RRR^3))$.

To this end, let $\{\vx_1,\vx_2,\ldots\}$ be a countable dense set in $\RRR^3$ (e.g., $\QQQ^3$), let $f:\NNN\times \NNN\to \NNN$ be a bijection, and set
\be
A_{f(m,n)} = \mathcal{B}_{1/m}(\vx_n)\,.
\ee
This provides us with a point-separating sequence of balls $A_i$ in $\RRR^3$. Note that $\exists(A_i) = \emptyset(A_i)^c \in \sigma(\sC)$ and that, for every $n\in\NNN$,
\be\label{Mndef}
M_{n} :=  \hspace{-2mm} \bigcup_{\substack{i_1\ldots i_n=1\\A_{i_1}\ldots A_{i_n}\text{ pairw.~disjoint}}}^\infty
\hspace{-5mm} \exists(A_{i_1})\cap \ldots \cap \exists(A_{i_n}) \quad \in \:\: \sigma(\sC)\,.
\ee

\medskip

\noindent\underline{Claim.} $\displaystyle M_{n} = \bigl\{ q\in\Gamma(\RRR^3): \#q \geq n \bigr\}$

\medskip

\noindent\underline{Proof of the claim:}
Let $q=\{\vy_1,\ldots,\vy_m\}\in \Gamma(\RRR^3)$ with $m\geq n$. There are mutually disjoint $A_{i_1},\ldots,A_{i_m}$ with $\vy_j\in A_{i_j}$, so, ignoring those with $j>n$ (if any),
\be
q\in \exists(A_{i_1}) \cap \ldots \cap \exists(A_{i_n})\,,
\ee
so $q\in M_{n}$. On the other hand, if $q\in M_n$, then, by the definition \eqref{Mndef}, $q\in \exists(A_{i_1}) \cap \ldots \cap \exists(A_{i_n})$ for some pairwise disjoint $A_{i_1},\ldots,A_{i_n}$, so $\#q\geq n$. This proves the claim.\hfill$\square$

\bigskip

We continue the proof of Proposition~\ref{prop:cdc}. It follows from the claim that also
\be
\Gamma_n(\RRR^3):=\bigl\{ q\in\Gamma(\RRR^3): \#q=n\bigr\} = \{\#q\geq n\} \setminus \{\#q\geq n+1\} \in \sigma(\sC)\,.
\ee
It remains to show that $\sB(\Gamma_n(\RRR^3))\subset\sigma(\sC)$. Since $\Gamma_n(\RRR^3) = (\RRR^3)^n/\text{permutations}$ and $\sB((\RRR^3)^n)$ is generated by the Cartesian products of $n$ open 3-balls, we know that $\sB(\Gamma_n(\RRR^3))$ is generated by the unordered (symmetrized) Cartesian products of $n$ open 3-balls. Let us focus on $n=2$ (the generalization to arbitrary $n$ is straightforward). If two given 3-balls $\mathcal{B}_{R_1}(\vz_1),\mathcal{B}_{R_2}(\vz_2)$ are disjoint, then their unordered product can be written as $\Gamma_2(\RRR^3)\cap \exists(\mathcal{B}_{R_1}(\vz_1))\cap \exists(\mathcal{B}_{R_2}(\vz_2))$, and thus belongs to $\sigma(\sC)$. If the two balls are not disjoint, then their product in $\Gamma_2(\RRR^3)$ can be written as a countable union of products of two disjoint balls, and thus again belongs to $\sigma(\sC)$.
\end{proof}

\subsection{Proof of Theorem~\ref{thm:prob}}
\label{sec:proofofmaintheorem}

The proof is structured as follows. Firstly, we determine a closed expression for $\Prob^{|\psi\rangle \langle \psi|,\varepsilon}_{\sP}(s)$ (Section~\ref{sec:expression}). Secondly, this expression is used to obtain upper and lower bounds for $\Prob^{\psi,\varepsilon}_{\det,\sP}(L)$ that have the form $\langle \psi_\Sigma | P_\Sigma(M_i^\varepsilon) | \psi_\Sigma \rangle$ for suitable configuration space sets $M_i^\varepsilon \subset \Gamma(\Sigma),~i=1,2$ (Section~\ref{sec:bounds}). Thirdly, we show that these bounds converge to the same expression in the limit $\varepsilon \rightarrow 0$ (Section~\ref{sec:limit}). This demonstrates the existence of $\Prob^{\psi}_{\det,\sP}(L)$.

\subsubsection{Closed Expression for $\Prob^{|\psi \rangle \langle \psi |,\varepsilon}_{\sP}(s)$} \label{sec:expression}

\begin{prop} \label{prop:expression}
	\be
		\Prob^{|\psi \rangle \langle \psi |,\varepsilon}_{\sP}(s) = \langle \psi_\Sigma | P(s) | \psi_\Sigma \rangle\,,
		\label{eq:expression}
	\ee
	where (using the abbreviation $C = \bigcup_{k=\kappa}^{K } C_k$)
	\begin{align}
		&P(s) :\Hilbert_\Sigma \cong \left( \bigotimes_{k=\kappa}^{K } \Hilbert_{C_k} \right) \otimes \Hilbert_{\Sigma \backslash C} \rightarrow \Hilbert_\Sigma,\nonumber\\
		&P(s) = \left( \bigotimes_{k=\kappa}^{K }  \widetilde{P}_{C_k}\right) \otimes I_{\Sigma \backslash C}.
		\label{eq:ps}
	\end{align}
	Here, $\widetilde{P}_{C_k}$ is defined as follows. According to the factorization of the PVM \eqref{PAB}, we have that\footnote{Note that $\exists_\Sigma(B_{k\ell}) = \Gamma(\Sigma \backslash B_k) \times \exists_{B_k}(B_{k\ell})$ and $\emptyset_\Sigma(B_{k\ell}) = \Gamma(\Sigma \backslash B_{k}) \times \emptyset_{B_k}(B_{k\ell})$.}
	\be
		P_{\Upsilon_k}(\mbsk) = I_{A_k \cup R_k} \otimes P_{B_k} \bigl(\nbsk \bigr),
		\label{eq:pvmdecomp}
	\ee
	where $\nbsk$ is defined like $\mbsk$ but exchanging $\exists(\cdot)$ with $\exists_{B_k}(\cdot)$ and $\emptyset(\cdot)$ with $\emptyset_{B_k}(\cdot)$.
	Then	
	\be
		\widetilde{P}_{C_k} = W_{B_k}^{C_k} \: P_{B_k}\bigl(\nbsk \bigr)  \: W_{C_k}^{B_k}\,,
	\ee
	where $W_{C_k}^{B_k} : \Hilbert_{C_k}\to\Hilbert_{B_k}$ are the reduced evolution operators defined in \eqref{eq:defw}. 
\end{prop}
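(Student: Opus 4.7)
The strategy is to evaluate the product of conditional probabilities in \eqref{totalprod} and simplify it iteratively using (IL), (FS), and the reduced evolution operators $W$ of Proposition~\ref{prop:reducedevol}. The central observation is that, after each collapse map $\mathcal{T}_k$ (which traces out $B_k\cup R_k$ and re-injects the vacuum), the state on $\Upsilon_k$ is concentrated in $A_k$; hence by (FS) the subsequent evolution to $\Upsilon_{k+1}$ acts as the reduced isometry $W_{A_k}^{A_{k+1}\cup B_{k+1}}$ on the $A_k$-sector, tensored with identity on the vacuum factor, which (by the shrunk/grown-set relation $\Sr(A^c,\Sigma')=\Sigma'\setminus \Gr(A,\Sigma')$) re-emerges as the vacuum on $R_{k+1}$.

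First I would record the geometric identities $\Gr(A_k,\Upsilon_{k+1})=A_{k+1}\cup B_{k+1}$ and $\Sr(B_k\cup R_k,\Upsilon_{k+1})=R_{k+1}$, both of which follow from the definitions of $A_k,B_k,R_k$ and from the fact that $C_{k+1}$ is precisely the slab of $\Sigma$ whose forward light-cone meets $\Upsilon_{k+1}$ in $B_{k+1}$. Using these identities together with the decomposition \eqref{eq:pvmdecomp} of the projector as $I_{A_k\cup R_k}\otimes P_{B_k}(N_B(s_k))$, I would iterate the collapse-and-evolve step and telescope the chain-rule product into a single matrix element
$$\Prob(s)=\bigl\langle\psi'_{A_{\kappa-1}}\bigm| Q(s)\bigm|\psi'_{A_{\kappa-1}}\bigr\rangle,$$
where $\psi'_{A_{\kappa-1}}$ is the reduced state on $\Hilbert_{A_{\kappa-1}}$ produced by the first-surface rule applied to $U_{\Sigma_0}^{\Upsilon_{\kappa-1}}\psi_0$, and $Q(s)$ is a nested composition of reduced isometries $W_{A_k}^{A_{k+1}\cup B_{k+1}}$ interleaved with the projectors $P_{B_k}(N_B(s_k))$.

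The third and most delicate step transports this expression from the $\Upsilon$-side onto $\Sigma$. I would introduce the hybrid Cauchy surfaces $\Sigma_k^{*}:=A_k\cup(\Sigma\cap\past(\Upsilon_k))$, which interpolate from $\Sigma_{\kappa-1}^{*}$ to $\Sigma_K^{*}$, the latter differing from $\Sigma$ only on a subset of $\Sigma\setminus C$ where $P(s)$ acts as the identity. Applying (IL) to pairs of hybrid and horizontal Cauchy surfaces that share the appropriate pieces, and invoking the uniqueness clause of Proposition~\ref{prop:reducedevol}, one shows that conjugating each $P_{B_k}(N_B(s_k))$ by the reduced evolution linking $B_k$ and $C_k$ produces exactly $\widetilde{P}_{C_k}=W_{B_k}^{C_k}P_{B_k}(N_B(s_k))W_{C_k}^{B_k}$ acting on $\Hilbert_{C_k}$, independently of the surrounding context. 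The factorization axiom~(iii), combined with the disjointness of the $C_k$'s, then assembles the translated projectors into $\bigotimes_k\widetilde{P}_{C_k}\otimes I_{\Sigma\setminus C}=P(s)$, while $\psi'_{A_{\kappa-1}}$ is carried to $\psi_\Sigma$ by the composite hybrid-surface evolution, yielding the claim.

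The main obstacle is precisely this third step: the careful hybrid-surface bookkeeping needed to verify that each $W$-translated projector ends up on the correct disjoint tensor factor of $\Hilbert_\Sigma$, and that the vacuum factors on the successively discarded regions stay compatible as the hybrid surface is advanced. All of this is ultimately controlled by (IL) and by the uniqueness clause of the reduced evolution operators between fixed source and target regions.
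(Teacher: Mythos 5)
Your overall route is the same as the paper's: iterate the collapse-and-evolve steps using the reduced operators $W$ (your geometric identities $\Gr(A_k,\Upsilon_{k+1})=A_{k+1}\cup B_{k+1}$ and $\Sr(B_k\cup R_k,\Upsilon_{k+1})=R_{k+1}$ are exactly what underlies statement 2 of Proposition~\ref{prop:properties}), then transport everything to $\Sigma$ through interpolating Cauchy surfaces. Your hybrid surfaces $\Sigma_k^{*}$ play the role of the paper's $\Xi_{\kappa-1,k}$ (they differ only below $\Upsilon_{\kappa-1}$, where all the relevant operators act as the identity), and your step of conjugating $P_{B_k}(\nbsk)$ by the reduced evolution linking $B_k$ and $C_k$ is precisely Proposition~\ref{prop:WABC}, which the paper proves from (IL) together with the defining property \eqref{eq:woperator} of $W$ --- your ``uniqueness clause.''

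The one genuine inaccuracy is your intermediate formula $\Prob(s)=\langle\psi'_{A_{\kappa-1}}|Q(s)|\psi'_{A_{\kappa-1}}\rangle$ with $\psi'_{A_{\kappa-1}}$ a vector: the first-surface rule \eqref{firstsurface} and each detection map $D_k$ contain a partial trace (over $B_{\kappa-1}\cup R_{\kappa-1}$, respectively $B_k\cup R_k$), so the reduced states on $\Hilbert_{A_k}$ are in general mixed, and the chain of conditional probabilities does not telescope into a pure-state matrix element on $\Hilbert_{A_{\kappa-1}}$. What is true --- and is the technical core that your ``telescoping'' glosses over --- is that the interleaved partial traces can be deferred, because all later operators act as the identity on the already-traced-out factors; the paper implements this by introducing the (generally mixed) auxiliary density matrices $\rho_{A_k}$ and their unnormalized versions $\widetilde{\rho}_{A_k}$, and by proving inductively the reordering-of-nested-traces identity \eqref{eq:resortingnestedtraces}, so that the correct intermediate statement is $\Prob^{\rho_0,\varepsilon}_{\sP}(s)=\tr_{A_K}(\widetilde{\rho}_{A_K})$, a trace against a density matrix rather than a matrix element in a vector. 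Only at the very end, once every $P_{B_k}(\nbsk)$ has been rewritten as $\widetilde{P}_{C_k}$ sitting on the disjoint factor $\Hilbert_{C_k}$ of $\Hilbert_\Sigma$ and the deferred traces have been recombined into a single trace over $\Sigma$, does purity of $\rho_0=|\psi_0\rangle\langle\psi_0|$ return a matrix element, namely $\langle\psi_\Sigma|P(s)|\psi_\Sigma\rangle$. With that repair (work with $\tr_{A_{\kappa-1}}[\rho_{A_{\kappa-1}}\,Q(s)]$, or equivalently keep the pure state on all of $\Upsilon_{\kappa-1}$ and let $Q(s)$ act as the identity on $B_{\kappa-1}\cup R_{\kappa-1}$), your argument coincides with the paper's proof.
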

	
Note that $B_k = \Gr(C_k,\Upsilon_k)$. Furthermore, the phase ambiguity of $W_{C_k}^{B_k}$ corresponding to the choice of vacuum vectors disappears here because $W$ appears twice with opposite phases. Note also that $\widetilde{P}_{C_k}$, and thus $P(s)$, is in general not a projection because $W_{C_k}^{B_k}$ is in general not unitary; it is an isometry but in general not surjective, that is, it is unitary to a subspace of $\Hilbert_{B_k}$; correspondingly, $W_{B_k}^{C_k}W_{C_k}^{B_k}=I_{C_k}$, but $W_{C_k}^{B_k} W_{B_k}^{C_k}$ is the projection onto said subspace of $\Hilbert_{B_k}$. Finally, keep in mind that $P_{B_k}(\nbsk)$ \emph{is} a projection.

Before proving Proposition \ref{prop:expression}, we introduce an auxiliary density matrix that will facilitate the proof.

\begin{defn}[auxiliary density matrix]
	Let $\rho_0 \in \sD(\Hilbert_{\Upsilon_0})$. 
	For every $k = \kappa-1,\ldots,K $, we define the \textit{auxiliary density matrix} $\rho_{A_k} \in \sD(\Hilbert_{A_k})$ by
	\begin{align}
		\rho_{A_{\kappa-1}} &=  \tr_{B_{\kappa-1}\cup R_{\kappa-1}} \Bigl( U_{\Upsilon_0}^{\Upsilon_{\kappa-1}} \: \rho_0 \: U^{\Upsilon_0}_{\Upsilon_{\kappa-1}}  \Bigr),\label{eq:rhoAkappa}\\
		\rho_{A_{k}} &= \frac{1}{\mathcal{N}_{A_{k}}} \tr_{C_{k}} \left(  [I_{A_{k}} \otimes \widetilde{P}_{C_{k}} ] \, U_{A_{k-1}}^{A_{k} \cup C_{k}} \rho_{A_{k-1}} U_{A_{k} \cup C_{k}}^{A_{k-1}} \right),
					\label{eq:rhoak}
	\end{align}
	where $\mathcal{N}_{A_k}$ is the normalization factor yielding $\tr_{A_k}(\rho_{A_k}) = 1$. In case $\kappa=0$, $B_{\kappa-1}=\emptyset=R_{\kappa-1}$, so \eqref{eq:rhoAkappa} reduces to $\rho_{A_{-1}}= U_{\Upsilon_0}^{\Upsilon_{-1}} \: \rho_0 \: U^{\Upsilon_0}_{\Upsilon_{-1}}$, and $A_0\cup C_0 = \Upsilon_0$.
\end{defn}

	 Note that \eqref{eq:rhoAkappa} has trace 1. We also note that by the cyclic property of the partial trace, the expression inside $\tr_{C_k}$ in \eqref{eq:rhoak} can be replaced by
\be
	 \bigl[ I_{A_{k}} \otimes \widetilde{P}_{C_{k}}^{1/2} \bigr] \, U_{A_{k-1}}^{A_{k} \cup C_{k}} \rho_{A_{k-1}} U_{A_{k} \cup C_{k}}^{A_{k-1}}\, \bigl[ I_{A_{k}} \otimes \widetilde{P}_{C_{k}}^{1/2} \bigr]\,, 
\ee
which makes it evident that $\rho_{A_k}$ is self-adjoint.
	
	We have used (IL) to obtain the existence of the unitary operators $U^{A_{k+1} \cup C_{k+1}}_{A_k}$. This is possible as $A_k$ and $A_{k+1} \cup C_{k+1}$ are grown (and shrunk) sets of each other. More precisely,
	\be\label{Xidef}
	\Xi_{k,k+1}:=A_{k+1} \cup C_{k+1}\cup B_k \cup R_k
	\ee
is a Cauchy surface that has $B_k\cup R_k$ in common with the Cauchy surface $\Upsilon_k=A_k \cup B_k \cup R_k$. We will also need the following fact.

\begin{prop}\label{prop:WABC}
\be\label{eq:WABC}
W_{A_k}^{A_{k+1} \cup B_{k+1}} = \Bigl( I_{A_{k+1}} \otimes W_{C_{k+1}}^{B_{k+1}} \Bigr) \: U_{A_k}^{A_{k+1}\cup C_{k+1}}\,.
\ee
\end{prop}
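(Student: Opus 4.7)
The plan is to interpolate between $\Upsilon_k$ and $\Upsilon_{k+1}$ via the intermediate Cauchy surface $\Xi_{k,k+1}=A_{k+1}\cup C_{k+1}\cup B_k\cup R_k$ from \eqref{Xidef}, then use the composition $U_{\Upsilon_k}^{\Upsilon_{k+1}}=U_{\Xi_{k,k+1}}^{\Upsilon_{k+1}}\,U_{\Upsilon_k}^{\Xi_{k,k+1}}$ of axiom (0). I would compute $U_{\Upsilon_k}^{\Upsilon_{k+1}}\phi_0$ in two ways, where $\phi_0:=\psi_{A_k}\otimes|\emptyset(B_k\cup R_k)\rangle$ is the extension by vacuum of an arbitrary $\psi_{A_k}\in\Hilbert_{A_k}$ (so $\phi_0$ is concentrated in $A_k$), and equate the results.

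Direct route: Proposition~\ref{prop:reducedevol} yields $U_{\Upsilon_k}^{\Upsilon_{k+1}}\phi_0=(W_{A_k}^{A_{k+1}\cup B_{k+1}}\psi_{A_k})\otimes|\emptyset(R_{k+1})\rangle$, provided I first verify the geometric identity $\Gr(A_k,\Upsilon_{k+1})=A_{k+1}\cup B_{k+1}$ (which entails $\Sr(B_k\cup R_k,\Upsilon_{k+1})=R_{k+1}$). For ``$\supseteq$'': for any $x\in A_{k+1}$, the vertical timelike segment through $x$ hits $\Upsilon_k$ at $y:=(k\varepsilon,\pi(x))$, and $y\in A_k$ (otherwise $y\in\future(\Sigma)$ would force $x\in\future(\Sigma)$, contradicting $x\in A_{k+1}$), hence $x\in\future(A_k)$; the same vertical argument gives $C_{k+1}\subseteq\future(A_k)$, whence $B_{k+1}=\Gr(C_{k+1},\Upsilon_{k+1})\subseteq\Gr(A_k,\Upsilon_{k+1})$. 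For ``$\subseteq$'': any $x\in\Gr(A_k,\Upsilon_{k+1})\setminus A_{k+1}$ lies in $\future(\Sigma)$, so a causal curve from a witness $y\in A_k$ to $x$ crosses $\Sigma$ at some $z$ with $k\varepsilon<z^0\le(k+1)\varepsilon$ by time-monotonicity, placing $z\in C_{k+1}$ and thus $x\in\future(C_{k+1})\cap\Upsilon_{k+1}=B_{k+1}$.

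Factored route: since $\Upsilon_k\cap\Xi_{k,k+1}\supseteq B_k\cup R_k$, axiom (IL) gives $U_{\Upsilon_k}^{\Xi_{k,k+1}}=I_{B_k\cup R_k}\otimes U_{A_k}^{A_{k+1}\cup C_{k+1}}$, so $U_{\Upsilon_k}^{\Xi_{k,k+1}}\phi_0=\chi\otimes|\emptyset(B_k\cup R_k)\rangle$ with $\chi:=U_{A_k}^{A_{k+1}\cup C_{k+1}}\psi_{A_k}$, which is concentrated in $A_{k+1}\cup C_{k+1}$. Applying Proposition~\ref{prop:reducedevol} to $U_{\Xi_{k,k+1}}^{\Upsilon_{k+1}}$ with $A=A_{k+1}\cup C_{k+1}$ (whose grown set in $\Upsilon_{k+1}$ is again $A_{k+1}\cup B_{k+1}$, immediate from $\Gr(A_{k+1},\Upsilon_{k+1})=A_{k+1}$ and $\Gr(C_{k+1},\Upsilon_{k+1})=B_{k+1}$) produces $(W_{A_{k+1}\cup C_{k+1}}^{A_{k+1}\cup B_{k+1}}\chi)\otimes|\emptyset(R_{k+1})\rangle$. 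Since $\Xi_{k,k+1}\cap\Upsilon_{k+1}\supseteq A_{k+1}$, axiom (IL) yields $U_{\Xi_{k,k+1}}^{\Upsilon_{k+1}}=I_{A_{k+1}}\otimes U_{C_{k+1}\cup B_k\cup R_k}^{B_{k+1}\cup R_{k+1}}$; plugging this into the partial-inner-product definition \eqref{eq:defw} of $W$ and evaluating on pure tensors $\chi=\eta\otimes\zeta$ shows $W_{A_{k+1}\cup C_{k+1}}^{A_{k+1}\cup B_{k+1}}=I_{A_{k+1}}\otimes W_{C_{k+1}}^{B_{k+1}}$. Equating the two expressions for $U_{\Upsilon_k}^{\Upsilon_{k+1}}\phi_0$, cancelling the common unit vacuum factor on $R_{k+1}$, and using the arbitrariness of $\psi_{A_k}$ delivers \eqref{eq:WABC}. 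The one substantive obstacle is the geometric identity $\Gr(A_k,\Upsilon_{k+1})=A_{k+1}\cup B_{k+1}$ above; once it is in hand, the rest is a direct bookkeeping exercise with (IL), Proposition~\ref{prop:reducedevol}, and the composition rule.
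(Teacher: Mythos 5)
Your proposal is correct and follows essentially the paper's own route: the paper likewise verifies the defining property \eqref{eq:woperator} of $W_{A_k}^{A_{k+1}\cup B_{k+1}}$ by interpolating through the same intermediate Cauchy surface $\Xi_{k,k+1}$, factorizing $U_{\Upsilon_k}^{\Xi_{k,k+1}}$ and $U_{\Xi_{k,k+1}}^{\Upsilon_{k+1}}$ with (IL), and using the characterization of $W_{C_{k+1}}^{B_{k+1}}$ via $U_{\Xi_{k,k+1}}^{\Upsilon_{k+1}}$; your extra step of proving $\Gr(A_k,\Upsilon_{k+1})=A_{k+1}\cup B_{k+1}$ just makes explicit a geometric fact the paper uses tacitly. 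One small repair in that geometric part: for $C_{k+1}\subseteq\future(A_k)$ the ``same vertical argument'' does not literally work, since a point $z\in C_{k+1}\subseteq\Sigma$ lies in $\future(\Sigma)$ anyway and so no contradiction arises; instead note that if the foot point $w=(k\varepsilon,\pi(z))\in\Upsilon_k$ were in $\future(u)$ for some $u\in\Sigma$, then the causal curve running from $u$ to $w$ and vertically up to $z$ would meet $\Sigma$ in two distinct points, contradicting the Cauchy property, so $w\in A_k$ and $z\in\future(A_k)$.
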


\begin{proof}
The defining property \eqref{eq:woperator} of $W$ is that
	\be\label{eq:woperator2}
		U_\Sigma^{\Sigma'} \psi_A\otimes |\emptyset(A^c)\rangle 
		= \Bigl( W_A^{\Gr(A,\Sigma')} \psi_A \Bigr) \otimes \bigl| \emptyset(\Sr(A^c,\Sigma')) \bigr\rangle,
	\ee
we need to verify that the right-hand side of \eqref{eq:WABC} satisfies this property for $\Sigma=\Upsilon_k$, $\Sigma'=\Upsilon_{k+1}$, and $A=A_k$; that is, we need to show that
	\be\label{zuzeigen}
		U_{\Upsilon_k}^{\Upsilon_{k+1}} \psi_{A_k}\otimes |\emptyset(B_k\cup R_k)\rangle 
		= \Bigl[\Bigl( I_{A_{k+1}} \otimes W_{C_{k+1}}^{B_{k+1}} \Bigr) \: U_{A_k}^{A_{k+1}\cup C_{k+1}} \psi_{A_{k}} \Bigr] \otimes \bigl| \emptyset(R_{k+1}) \bigr\rangle.
	\ee

The $W_{C_{k+1}}^{B_{k+1}}$ on the right-hand side of \eqref{eq:WABC} is characterized by the property that 
	\be\label{Wright}
		U_{\Xi_{k,k+1}}^{\Upsilon_{k+1}} \psi_{C_{k+1}} \otimes \bigl|\emptyset(A_{k+1}\cup B_k \cup R_k)\bigr\rangle = \Bigl( W_{C_{k+1}}^{B_{k+1}} \psi_{C_{k+1}} \Bigr) \otimes \bigl| \emptyset( A_{k+1}\cup R_{k+1}) \bigr\rangle
	\ee
with $\Xi_{k,k+1}$ as in \eqref{Xidef}.
By (IL), $U_{\Xi_{k,k+1}}^{\Upsilon_{k+1}} = I_{A_{k+1}}\otimes U_{C_{k+1} \cup B_k \cup R_k}^{B_{k+1}\cup R_{k+1}}$, so a factor $|\emptyset(A_{k+1})\rangle$ splits off from \eqref{Wright}, i.e.,
	\be\label{Wright2}
		U_{C_{k+1} \cup B_k \cup R_k}^{B_{k+1}\cup R_{k+1}} \psi_{C_{k+1}} \otimes |\emptyset(B_k \cup R_k)\rangle = \Bigl( W_{C_{k+1}}^{B_{k+1}} \psi_{C_{k+1}} \Bigr) \otimes \bigl| \emptyset( R_{k+1}) \bigr\rangle,
	\ee
and thus
	\be\label{Wright3}
		U_{\Xi_{k,k+1}}^{\Upsilon_{k+1}} \psi_{A_{k+1}\cup C_{k+1}} \otimes |\emptyset(B_k \cup R_k)\rangle = \Bigl[ \Bigl( I_{A_{k+1}}\otimes W_{C_{k+1}}^{B_{k+1}} \Bigr) \psi_{A_{k+1} \cup C_{k+1}} \Bigr] \otimes \bigl| \emptyset( R_{k+1}) \bigr\rangle.
	\ee
Inserting $\psi_{A_{k+1}\cup C_{k+1}}=U^{A_{k+1}\cup C_{k+1}}_{A_k} \psi_{A_k}$ yields \eqref{zuzeigen}.
\end{proof}

\begin{prop}
	\label{prop:properties}
	For $\kappa-1\leq k \leq K$, $\rho_{A_k}$ has the following properties:
	\begin{enumerate}
		\item $\rho_{\Upsilon_k}' = \rho_{A_k} \otimes P_{B_k \cup R_k}(\{\emptyset\})$.
		\item $\rho_{\Upsilon_{k+1}} = \left(W_{A_k}^{A_{k+1} \cup B_{k+1}} \rho_{A_k} W_{A_{k+1} \cup B_{k+1}}^{A_k} \right) \otimes  P_{R_{k+1}}(\{ \emptyset \})$.
		\item The normalization constants of $\rho'_{\Upsilon_k}$ and $\rho_{A_k}$ coincide: $\mathcal{N}_k(\rho_{\Upsilon_k}) = \mathcal{N}_{A_k}$.
		\item Expression of conditional probabilities via $\rho_{A_k}$. For $\kappa\leq k \leq K$, 		\be
			\Prob_{\sP}^\varepsilon(s_k | \rho_{\Upsilon_k}) = \mathcal{N}_{A_k} \,.
		\ee
	\end{enumerate}
\end{prop}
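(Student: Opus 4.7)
The plan is to establish Property 1 by induction on $k\in\{\kappa-1,\ldots,K\}$, with Properties 2--4 emerging as corollaries of the same inductive step. The base case $k=\kappa-1$ is a direct comparison: the first-surface rule \eqref{firstsurface} presents $\rho'_{\Upsilon_{\kappa-1}}$ precisely as $\rho_{A_{\kappa-1}}\otimes P_{B_{\kappa-1}\cup R_{\kappa-1}}(\{\emptyset\})$ with the $A_{\kappa-1}$-factor matching the defining formula \eqref{eq:rhoAkappa}. In general, once Property 1 is established at an index $j$, Property 2 at $j$ follows immediately from the density-matrix form of Proposition~\ref{prop:reducedevol} (stated in the remarks following it): since $\rho'_{\Upsilon_j}$ is then concentrated in $A_j$, and since by the geometry of Figure~\ref{fig:geom} one has $\Gr(A_j,\Upsilon_{j+1})=A_{j+1}\cup B_{j+1}$ and $\Sr(A_j^c,\Upsilon_{j+1})=R_{j+1}$, the reduced-evolution formula applied to $\rho'_{\Upsilon_j}$ yields Property 2 at $j$.

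For the inductive step from $k-1$ to $k$, I would feed the expression for $\rho_{\Upsilon_k}$ provided by Property 2 at $k-1$ into the collapse rule \eqref{Dkdef}. Using the factorisation \eqref{eq:pvmdecomp}, $P_{\Upsilon_k}(M_B(s_k))=I_{A_k\cup R_k}\otimes P_{B_k}(N_B(s_k))$, together with Proposition~\ref{prop:WABC}, $W_{A_{k-1}}^{A_k\cup B_k}=(I_{A_k}\otimes W_{C_k}^{B_k})\,U_{A_{k-1}}^{A_k\cup C_k}$, an orthonormal-basis calculation of the partial trace over $B_k$ (and the fact that $\tr_{R_k}P_{R_k}(\{\emptyset\})=1$ by axiom (ii)) converts $\tr_{B_k\cup R_k}$ into $\tr_{C_k}$ and produces
\[ \tr_{B_k\cup R_k}\!\bigl[P_{\Upsilon_k}(M_B(s_k))\,\rho_{\Upsilon_k}\,P_{\Upsilon_k}(M_B(s_k))\bigr]=\tr_{C_k}\!\bigl[(I_{A_k}\otimes\widetilde{P}_{C_k})\,U_{A_{k-1}}^{A_k\cup C_k}\rho_{A_{k-1}}U_{A_k\cup C_k}^{A_{k-1}}\bigr], \]
because the sandwich $W_{B_k}^{C_k}\,P_{B_k}(N_B(s_k))\,W_{C_k}^{B_k}$ is by definition $\widetilde{P}_{C_k}$. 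By \eqref{eq:rhoak} the right-hand side equals $\mathcal{N}_{A_k}\,\rho_{A_k}$.

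Taking one further trace over $A_k$ and using $\tr(\rho_{A_k})=1$ now gives Property 3, $\mathcal{N}_k(\rho_{\Upsilon_k})=\mathcal{N}_{A_k}$; dividing by this normalisation in the collapse rule then establishes Property 1 at $k$, closing the induction. Property 4 is then immediate from the Born rule \eqref{eq:condborn} combined with Property 3.

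The main obstacle I anticipate is the partial-trace manipulation in the middle step: $W_{C_k}^{B_k}$ is an isometry between \emph{different} Hilbert spaces rather than a unitary on a single one, so one has to track carefully which tensor factor each operator acts on and to reinterpret a trace over $\Hilbert_{B_k}$ as a trace over $\Hilbert_{C_k}$. The decisive point is that $P_{B_k}(N_B(s_k))$ is idempotent, so the combination $W_{B_k}^{C_k}\,P_{B_k}(N_B(s_k))\,W_{C_k}^{B_k}$ that assembles from cycling the two $W$'s through the trace is exactly $\widetilde{P}_{C_k}$; without this idempotency one would not obtain a clean expression in terms of $\rho_{A_k}$. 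A minor secondary check is the two geometric identities $\Gr(A_j,\Upsilon_{j+1})=A_{j+1}\cup B_{j+1}$ and $\Sr(A_j^c,\Upsilon_{j+1})=R_{j+1}$ invoked above, which follow directly from the definitions of $A_j,B_j,C_j,R_j$ and are consistent with the domain of $W$ appearing in Proposition~\ref{prop:WABC}.
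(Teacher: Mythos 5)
Your proposal is correct and follows essentially the same route as the paper's proof: an induction whose base case reads off \eqref{firstsurface} and \eqref{eq:rhoAkappa}, with statement 2 obtained from statement 1 via the reduced-evolution operators, the key step converting $\tr_{B_k\cup R_k}$ into $\tr_{C_k}$ using \eqref{eq:pvmdecomp}, vacuum uniqueness, Proposition~\ref{prop:WABC} and cyclicity of the partial trace, and statements 3 and 4 then following by taking traces and comparing with \eqref{eq:rhoak} and \eqref{eq:condborn}. The only (harmless) difference is organizational: you run the induction on statement 1 alone and treat 2--4 as corollaries, and you get statement 4 directly from statement 3 and the definition of $\mathcal{N}_k$, whereas the paper carries 1--3 through the induction jointly and re-derives 4 explicitly.
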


\begin{proof}
	We prove statements 1., 2., and 3.\ together by induction along $k$, beginning with $k=\kappa-1$.  Statement 1.\ is immediate from the definitions \eqref{firstsurface} and \eqref{eq:rhoAkappa}, and 3.\ is true since the normalization constants of $\rho'_{\Upsilon_{\kappa-1}}$ and $\rho_{A_{\kappa-1}}$ are both 1. 
	For 2., we generally have that $\rho_{\Upsilon_{k+1}} = U_{\Upsilon_k}^{\Upsilon_{k+1}} \rho_{\Upsilon_k}' U_{\Upsilon_{k+1}}^{\Upsilon_k}$. As $\rho_{\Upsilon_{\kappa-1}}'$ is concentrated in $A_{\kappa-1}$ by definition \eqref{firstsurface}, we can use the reduced evolution operators (cf.\ Proposition~\ref{prop:reducedevol}) to write (note $R_{k+1} = \Sr(B_k \cup R_k,\Upsilon_{k+1})$): $\rho_{\Upsilon_{\kappa}} = (W_{A_{\kappa-1}}^{A_{\kappa}\cup B_{\kappa}} \: \rho_{A_{\kappa-1}} \: W_{A_{\kappa}\cup B_{\kappa}}^{A_{\kappa-1}}) \otimes P_{R_{\kappa}} (\{\emptyset\})$.
	
	For the induction step, let claims 1., 2., and 3.\ be true for some $k$. We have: $\rho_{\Upsilon_{k+1}}' = D_{k+1}(s_{k+1}) \rho_{\Upsilon_{k+1}}$. According to the induction assumption for 2.,
	\begin{align}
		\rho_{\Upsilon_{k+1}}' &= D_{k+1}(s_{k+1}) \left[ \left(W_{A_k}^{A_{k+1} \cup B_{k+1}} \rho_{A_k} W_{A_{k+1} \cup B_{k+1}}^{A_k} \right) \otimes P_{R_{k+1}} (\{\emptyset\}) \right] \nonumber\\
		&= \frac{1}{\mathcal{N}_{k+1}(\rho_{\Upsilon_k})} \tr_{B_{k+1} \cup R_{k+1}} \left[ P_{\Upsilon_{k+1}}(M_B(s_{k+1})) \left(W_{A_k}^{A_{k+1} \cup B_{k+1}} \rho_{A_k} W_{A_{k+1} \cup B_{k+1}}^{A_k} \right. \right. \nonumber\\ 
		& ~~~~~~~~~~~~~~~~~~~~\otimes P_{R_{k+1}}(\{\emptyset\})  \Big) P_{\Upsilon_{k+1}}(M_B(s_{k+1})) \Big] \otimes  P_{B_{k+1}\cup R_{k+1}}(\{\emptyset\})\nonumber\\
		&= \frac{1}{\mathcal{N}_{k+1}(\rho_{\Upsilon_k})} \tr_{B_{k+1}} \biggl[ \bigl[ I_{A_{k+1}} \otimes P_{B_{k+1}} (N_B(s_{k+1})) \bigr] \, \Bigl(W_{A_k}^{A_{k+1} \cup B_{k+1}} \rho_{A_k} W_{A_{k+1} \cup B_{k+1}}^{A_k} \Bigr) \biggr] \nonumber\\ 
		& ~~~~~~~~~~~~~~~~~~~~\otimes P_{B_{k+1}\cup R_{k+1}}(\{\emptyset\})
		\label{eq:rhocalc1}
	\end{align} 
using \eqref{eq:pvmdecomp} and uniqueness of the vacuum \eqref{UV}. 
By Proposition~\ref{prop:WABC}, and using the cyclic property of the partial trace, we obtain that
\begin{align}
	&\tr_{B_{k+1}} \biggl[ \bigl[ I_{A_{k+1}} \otimes P_{B_{k+1}}(N_B(s_{k+1})) \bigr] \left(W_{A_k}^{A_{k+1} \cup B_{k+1}} \rho_{A_k} W_{A_{k+1} \cup B_{k+1}}^{A_k} \right) \biggr]\nonumber\\
	&= \tr_{B_{k+1}} \biggl[ \bigl[ I_{A_{k+1}} \otimes P_{B_{k+1}}(N_B(s_{k+1}))\bigr]  \bigl[ I_{A_{k+1}} \otimes W_{C_{k+1}}^{B_{k+1}}\bigr]  \left(U_{A_k}^{A_{k+1} \cup C_{k+1}} \rho_{A_k} U_{A_{k+1} \cup C_{k+1}}^{A_k} \right) \bigl[ I_{A_{k+1}} \otimes W_{B_{k+1}}^{C_{k+1}} \bigr] \biggr]\nonumber\\
	&= \tr_{C_{k+1}} \biggl[  \Bigl[I_{A_{k+1}} \otimes \left( W_{B_{k+1}}^{C_{k+1}} P_{B_{k+1}}(N_B(s_{k+1})) W_{C_{k+1}}^{B_{k+1}} \right) \Bigr] 
	\left(U_{A_k}^{A_{k+1} \cup C_{k+1}} \rho_{A_k} U_{A_{k+1} \cup C_{k+1}}^{A_k} \right)
	 \biggr]
\end{align}
Comparing this expression with \eqref{eq:rhocalc1} and the definition \eqref{eq:rhoak} of $\rho_{A_{k+1}}$
yields that $\mathcal{N}_{k+1}(\rho_{\Upsilon_{k+1}})=\mathcal{N}_{A_{k+1}}$ and
\be\label{rhok+1rhoAk}
	\rho_{\Upsilon_{k+1}}' = \rho_{A_{k+1}} \otimes P_{B_{k+1}\cup R_{k+1}}(\{\emptyset\}).
\ee
We thus obtain 1.\ and 3.\ for $k+1$.

For 2., \eqref{rhok+1rhoAk} and \eqref{eq:woperator2} imply that
\begin{align}
 	\rho_{\Upsilon_{k+2}} 
	&= U_{\Upsilon_{k+1}}^{\Upsilon_{k+2}} \rho_{\Upsilon_{k+1}}' U_{\Upsilon_{k+2}}^{\Upsilon_{k+1}}\nonumber\\
	&= U_{\Upsilon_{k+1}}^{\Upsilon_{k+2}} \: \Bigl[\rho_{A_{k+1}} \otimes P_{B_{k+1}\cup R_{k+1}}(\{\emptyset\})\Bigr] \: U_{\Upsilon_{k+2}}^{\Upsilon_{k+1}}\nonumber\\
 	&=  \left( W_{A_{k+1}}^{A_{k+2} \cup B_{k+2}} \rho_{A_{k+1}} W_{A_{k+2} \cup B_{k+2}}^{A_{k+1}} \right) \otimes P_{R_{k+2}}(\{\emptyset\}),
\end{align}
which completes the induction step.
 
With regard to statement 4., the horizontal Born rule \eqref{eq:condborn}, 2., uniqueness of the vacuum \eqref{UV}, and Proposition~\ref{prop:WABC} together imply that
 \begin{align}
 	\Prob_{\sP}^\varepsilon(s_k | \rho_{\Upsilon_k}) &= \tr_{\Upsilon_k} \! \left( P_{\Upsilon_k}(\mbsk) \rho_{\Upsilon_k} \right)\nonumber\\
 	&= \tr_{A_k} \tr_{B_k} \left[ \left( I_{A_k} \otimes P_{B_k}(\nbsk) \right) \left( W_{A_{k-1}}^{A_k \cup B_k} \rho_{A_{k-1}} W_{A_k \cup B_k}^{A_{k-1}} \right) \right]\nonumber\\
 	&= \tr_{A_k} \tr_{C_k} \left[ \left( I_{A_k} \otimes \widetilde{P}_{C_k} \right) \left( U_{A_{k-1}}^{A_k \cup C_k} \rho_{A_{k-1}} U_{A_k \cup C_k}^{A_{k-1}} \right) \right]\nonumber\\
 	&= \mathcal{N}_{A_k} \,  \tr_{A_k}(\rho_{A_k}) = \mathcal{N}_{A_k}.
 \end{align}
\end{proof}

\begin{proof}[Proof of Proposition \ref{prop:expression}]
	By \eqref{totalprod} and statement 4.\ of Proposition~\ref{prop:properties},
	\be
		\Prob^{\rho_0,\varepsilon}_{\sP}(s) = \mathcal{N}_{A_\kappa} \, \mathcal{N}_{A_{\kappa+1}} \cdots \mathcal{N}_{A_K}.
	\ee
	We define the operators $\widetilde{\rho}_{A_k}$ in a similar way as $\rho_{A_k}$ but leave out the normalization factors, i.e.,
	\begin{align}
		\widetilde{\rho}_{A_{\kappa-1}} &=  
		\rho_{A_{\kappa-1}}, \nonumber\\
		\widetilde{\rho}_{A_{k+1}} 
		&= \tr_{C_{k+1}} \left(  [I_{A_{k+1}} \otimes \widetilde{P}_{C_{k+1}} ] \, 
		U_{A_k}^{A_{k+1} \cup C_{k+1}} \widetilde{\rho}_{A_k} 
		U_{A_{k+1} \cup C_{k+1}}^{A_k} \right).
		\label{tilderhoAkdef}
	\end{align}
	Since $\widetilde{\rho}_{A_{k+1}}$ contains just one factor $\widetilde{\rho}_{A_k}$, we obtain that
	\be
		\widetilde{\rho}_{A_k} = \mathcal{N}_{A_\kappa} \cdots \mathcal{N}_{A_k} \, \rho_{A_k}\,,
		\label{eq:rhorhotilde}
	\ee
	so that $\mathcal{N}_{A_\kappa} \, \mathcal{N}_{A_{\kappa+1}} \cdots \mathcal{N}_{A_k} = \tr_{A_k}(\widetilde{\rho}_{A_k})$ and
	\be
	\Prob^{\rho_0,\varepsilon}_{\sP}(s) = \tr_{A_K} (\widetilde{\rho}_{A_K})\,.
	\ee
	Now let
	\be
	S := \Sigma \setminus \past(\Upsilon_K) = \bigcup_{k=K+1}^\infty C_k
	\ee
	(a region without detectors).
	Then $S = \Gr(A_K,\Sigma)$ and $A_K = \Gr(S,\Upsilon_K)$ (see also Figure~\ref{fig:geom1d}), so $U_{A_K}^S$ exists, and
	\be
		\Prob^{\rho_0,\varepsilon}_{\sP}(s) = \tr_{S}(U_{A_K}^S\widetilde{\rho}_{A_K} U_{S}^{A_K}).
		\label{eq:traceformularho}
	\ee
	We now reorder the nested expressions in $\widetilde{\rho}_{A_k}$.

\bigskip

	\noindent\underline{Claim:} For all $k\geq \kappa$,
	\be
		\widetilde{\rho}_{A_k} =  \tr_{C_\kappa \cup \cdots \cup C_k} \biggl[  \bigl[ I_{A_k}\otimes \widetilde{P}_C^{(k)} \bigr] \,  \tr_{B_{\kappa-1}\cup R_{\kappa-1}} \Bigl(U_{\Upsilon_0}^{\Xi_{\kappa-1,k}} \, \rho_0 \, U_{\Xi_{\kappa-1,k}}^{\Upsilon_0} \Bigr) \biggr],
		\label{eq:resortingnestedtraces}
	\ee
	where
	\be
		\Xi_{\kappa-1,k} := B_{\kappa-1}\cup R_{\kappa-1} \cup C_\kappa \cup \cdots \cup C_k \cup A_k
	\ee
	is a Cauchy surface, and
	\be
		 \widetilde{P}_C^{(k)} := \bigotimes_{i=\kappa}^k \widetilde{P}_{C_i}.
	\ee
	
\bigskip
	
	\noindent \underline{Proof of the claim.} (Induction.) $k=\kappa:$ From \eqref{tilderhoAkdef} and \eqref{eq:rhoAkappa},
	\begin{align}
	\widetilde{\rho}_{A_\kappa} 
	&=  \tr_{C_{\kappa}} \biggl[  \bigl[ I_{A_{\kappa}} \otimes \widetilde{P}_{C_{\kappa}} \bigr] \, U_{A_{\kappa-1}}^{A_{\kappa} \cup C_{\kappa}} \:
	\tr_{B_{\kappa-1}\cup R_{\kappa-1}} \Bigl( U_{\Upsilon_0}^{\Upsilon_{\kappa-1}} \: \rho_0 \: U^{\Upsilon_0}_{\Upsilon_{\kappa-1}}  \Bigr)
	U_{A_{\kappa} \cup C_{\kappa}}^{A_{\kappa-1}} \biggr]  \nonumber\\
	&= \tr_{C_\kappa} \biggl[  \bigl[ I_{A_\kappa}\otimes \widetilde{P}_{C_\kappa} \bigr] \,  \tr_{B_{\kappa-1}\cup R_{\kappa-1}} \Bigl(U_{\Upsilon_0}^{\Xi_{\kappa-1,\kappa}} \, \rho_0 \, U_{\Xi_{\kappa-1,\kappa}}^{\Upsilon_0} \Bigr) \biggr]
	\end{align}
	as desired because $U_{\Upsilon_{\kappa-1}}^{\Xi_{\kappa-1,\kappa}}= I_{B_{\kappa-1}\cup R_{\kappa-1}} \otimes U_{A_{\kappa-1}}^{A_\kappa\cup C_\kappa}$ by (IL).

	For the induction step, let \eqref{eq:resortingnestedtraces} be true for some $k$. Then
	\begin{align}
		\widetilde{\rho}_{A_{k+1}} &= \tr_{C_{k+1}} \biggl[  \bigl[I_{A_{k+1}} \otimes \widetilde{P}_{C_{k+1}} \bigr] \, U_{A_k}^{A_{k+1} \cup C_{k+1}} \widetilde{\rho}_{A_k} U_{A_{k+1} \cup C_{k+1}}^{A_k} \biggr]\nonumber\\
 &= \tr_{C_{k+1}} \biggl[  \bigl[ I_{A_{k+1}} \otimes \widetilde{P}_{C_{k+1}} \bigr] \, U_{A_k}^{A_{k+1} \cup C_{k+1}} \:\times \nonumber\\
 &~~~\times \: \tr_{C_\kappa \cup \cdots \cup C_k} \biggl( \bigl[ I_{A_k}\otimes \widetilde{P}_C^{(k)} \bigr] \, \tr_{B_{\kappa-1}\cup R_{\kappa-1}} \Bigl(U_{\Upsilon_0}^{\Xi_{\kappa-1,k}} \, \rho_0 \, U_{\Xi_{\kappa-1,k}}^{\Upsilon_0} \Bigr) \biggr) U_{A_{k+1} \cup C_{k+1}}^{A_k} \biggr] \nonumber\\
 &= \tr_{C_\kappa \cup \cdots \cup C_{k+1}} \Biggl[  \Bigl[ I_{A_{k+1}} \otimes \widetilde{P}_{C_{k+1}} \otimes  I_{C_\kappa\cup \cdots \cup C_k} \Bigr] \, \Bigl[ U_{A_k}^{A_{k+1} \cup C_{k+1}} \otimes  I_{C_\kappa \cup \cdots \cup C_k} \Bigr] \, \Bigl[  I_{A_k}\otimes \widetilde{P}_C^{(k)} \Bigr] \:\times \nonumber\\
 &~~~\times \:  \tr_{B_{\kappa-1}\cup R_{\kappa-1}} \biggl( U_{\Upsilon_0}^{\Xi_{\kappa-1,k}} \, \rho_0 \, U_{\Xi_{\kappa-1,k}}^{\Upsilon_0} \biggr) \Bigl[  U_{A_{k+1} \cup C_{k+1}}^{A_k} \otimes I_{C_\kappa \cup \dots \cup C_k}  \Bigr] \Biggr] \nonumber\\
 &= \tr_{C_\kappa \cup \cdots \cup C_{k+1}} \Biggl[  \Bigl[ I_{A_{k+1}} \otimes \widetilde{P}_{C_{k+1}} \otimes  I_{C_\kappa\cup \cdots \cup C_k} \Bigr] \,  \Bigl[  I_{A_{k+1}\cup C_{k+1}}\otimes \widetilde{P}_C^{(k)} \Bigr]  \, \Bigl[ U_{A_k}^{A_{k+1} \cup C_{k+1}} \otimes  I_{C_\kappa \cup \cdots \cup C_k} \Bigr] \:\times \nonumber\\
 &~~~\times \:  \tr_{B_{\kappa-1}\cup R_{\kappa-1}} \biggl( U_{\Upsilon_0}^{\Xi_{\kappa-1,k}} \, \rho_0 \, U_{\Xi_{\kappa-1,k}}^{\Upsilon_0} \biggr) \Bigl[  U_{A_{k+1} \cup C_{k+1}}^{A_k} \otimes I_{C_\kappa \cup \dots \cup C_k}  \Bigr] \Biggr] \nonumber\\
 &= \tr_{C_\kappa \cup \cdots \cup C_{k+1}} \Biggl[  \Bigl[ I_{A_{k+1}} \otimes \widetilde{P}_{C_{k+1}} \otimes \widetilde{P}_C^{(k)} \Bigr] \, \tr_{B_{\kappa-1}\cup R_{\kappa-1}} \biggl( U_{\Upsilon_0}^{\Xi_{\kappa-1,k+1}} \, \rho_0 \, U_{\Xi_{\kappa-1,k+1}}^{\Upsilon_0} \biggr)  \Biggr]\,.
	\end{align}
 	This proves the claim.$\hfill\square$

\bigskip	
 	
 	We now turn to the proof of \eqref{eq:expression}. According to \eqref{eq:traceformularho} and \eqref{eq:resortingnestedtraces}, we have that
 	\begin{align}
 		\Prob^{\rho_0,\varepsilon}_{\sP}(s) &= \tr_{S}\left( U_{A_K}^S \tr_{C_\kappa \cup \cdots \cup C_K} \biggl[ \bigl[ I_{A_K}\otimes \widetilde{P}_C^{(K)} \bigr] \, \tr_{B_{\kappa-1}\cup R_{\kappa-1}} \Bigl( U_{\Upsilon_0}^{\Xi_{\kappa-1,K}} \, \rho_0 \, U_{\Xi_{\kappa-1,K}}^{\Upsilon_0} \Bigr) \biggr] U_{S}^{A_K} \right)\nonumber\\
 		&= \tr_{C_\kappa \cup \cdots \cup C_K \cup S} \biggl[ \bigl[ I_{S}\otimes \widetilde{P}_C^{(K)} \bigr] \,\tr_{B_{\kappa-1}\cup R_{\kappa-1}} \Bigl( U_{\Upsilon_0}^{\Xi_{\kappa-1,\infty}} \, \rho_0 \, U_{\Xi_{\kappa-1,\infty}}^{\Upsilon_0} \Bigr) \biggr] \nonumber\\[2mm]
 		&= \tr_{\Sigma} \left(P(s) \, U_{\Upsilon_0}^\Sigma \rho_0 U_\Sigma^{\Upsilon_0} \right)
 		\label{eq:expressioninitialdensitymatrix}
 	\end{align}
	with
	\be
	\Xi_{\kappa-1,\infty} := B_{\kappa-1} \cup R_{\kappa-1} \cup C_\kappa \cup \cdots \cup C_K \cup S\,.
	\ee
 	Taking $\rho_0 = |\psi_0\rangle \langle \psi_0|$, we finally obtain:
 	\begin{align}
 		\Prob^{\rho_0,\varepsilon}_{\sP}(s) =  \tr_{\Sigma} \left(P(s) \, U_{\Upsilon_0}^\Sigma |\psi_0\rangle \langle \psi_0| U_\Sigma^{\Upsilon_0} \right)
 		&= \scp{\psi_\Sigma}{P(s) |\psi_\Sigma}.
 	\end{align}
\end{proof}

\subsubsection{Bounds for $\Prob^{\psi,\varepsilon}_{\det,\sP}(L)$} \label{sec:bounds}

Since the collapses are represented by operators $P(s)$ that arise from projections to configuration sets on $B_k$ through transport to $C_k$, the idea is to bound them from below (and above) by projections to configuration sets on $C_k$ obtained by shrinking (or growing, respectively) $B_{k\ell}$. 

\begin{defn}[grown and shrunk configuration space sets]
We first introduce (see Figure~\ref{fig:def_c_grown_shrunk}):
\begin{align}
	\cklcheck &= \Sr(B_{k\ell},\Sigma) \subseteq C_k,\nonumber\\
	\cklhat &= \past(B_{k\ell}) \cap C_k.
\end{align}

Next, we define:
\begin{align}
	\mcskl &=  \left\{ \begin{array}{l} \emptyset(\ckl) ~{\rm if}~s_{k\ell} = 0,\\
							\exists(\ckl)~{\rm if}~s_{k\ell} =1, \end{array} \right. \nonumber\\
	\mcsklcheck &= \left\{ \begin{array}{l} \emptyset(\cklhat) ~{\rm if}~s_{k\ell} = 0,\\
							\exists(\cklcheck)~{\rm if}~s_{k\ell} =1, \end{array} \right.\nonumber\\
	\mcsklhat &= \left\{ \begin{array}{l} \emptyset(\cklcheck) ~{\rm if}~s_{k\ell} = 0,\\
							\exists(\cklhat)~{\rm if}~s_{k\ell} =1. \end{array} \right.	
	\label{eq:mcskldef}
\end{align}
Furthermore:
\begin{align}
	\mcs &= \bigcap_{k = \kappa}^K \bigcap_{\ell = 1}^r \mcskl,\nonumber\\
	\mcscheck &= \bigcap_{k = \kappa}^K \bigcap_{\ell = 1}^r \mcsklcheck,\nonumber\\
	\mcshat &= \bigcap_{k = \kappa}^K \bigcap_{\ell = 1}^r \mcsklhat.
	\label{eq:mscdef}
\end{align}
We then have that $\cklcheck \subseteq \ckl \subseteq \cklhat$. As $\exists(R_1) \subseteq \exists(R_2)$ and $\emptyset(R_2) \subseteq \emptyset(R_1)$ for $R_1 \subseteq R_2 \subseteq \Sigma$, it follows that $\mcsklcheck \subseteq \mcskl \subseteq \mcsklhat$ and $\mcscheck \subseteq \mcs \subseteq \mcshat$.

\begin{figure}[t]
\centering
 \includegraphics[width=0.6\textwidth]{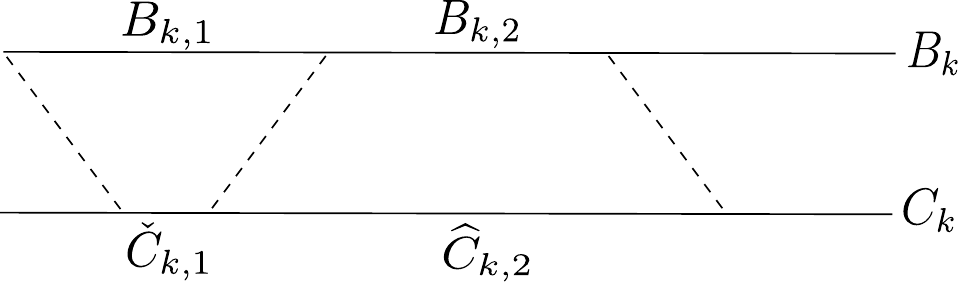}
 \caption{Definition of the sets $\cklcheck,\cklhat$. The surfaces $B_k$ and $C_k$ are viewed from the side.}
 \label{fig:def_c_grown_shrunk}
\end{figure}

Moreover, we define:
\begin{align}
	\mslepsilon &= \bigcup_{s:L} \mcs,\nonumber\\
	\mslcheck &= \bigcup_{s:L} \mcscheck,\nonumber\\
	\mslhat &= \bigcup_{s:L} \mcshat,
	\label{eq:msldef}
\end{align}
where the clause ``$s:L$'' means all $s \in \{ 0,1\}^{(K-\kappa+1) r}$ compatible with $L$, i.e., such that for all $1 \leq \ell \leq r$: $s_{k\ell} = 0$ for all $k$ if $L_\ell = 0$ and $\exists \, k$ with $s_{k\ell} = 1$ if  $L_\ell = 1$. The superscript $\varepsilon$ has been introduced to remind the reader that the introduced sets are $\varepsilon$-dependent. It follows that $\mslcheck \subseteq \mslepsilon \subseteq \mslhat$.
\end{defn}

\begin{prop} \label{prop:bounds}
	For all $\varepsilon > 0$ the following inequalities hold:
	\be
		\langle \psi_{\Sigma} |P_\Sigma(\mslcheck) | \psi_{\Sigma} \rangle 
		\leq \Prob^{\psi,\varepsilon}_{\det,\sP}(L) 
		\leq \langle \psi_{\Sigma} |P_\Sigma(\mslhat) | \psi_{\Sigma} \rangle.
		\label{eq:probbounds}
	\ee
\end{prop}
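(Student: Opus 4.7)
By Proposition~\ref{prop:expression}, $\Prob^{\psi,\varepsilon}_{\det,\sP}(L)=\sum_{s:L}\langle\psi_\Sigma|P(s)|\psi_\Sigma\rangle$ with $P(s)=\bigotimes_k\widetilde P_{C_k}(s_k)\otimes I_{\Sigma\setminus C}$ and $\widetilde P_{C_k}(s_k)=W_{B_k}^{C_k}P_{B_k}(\nbsk)W_{C_k}^{B_k}$. My plan is to sandwich each $\widetilde P_{C_k}(s_k)$ between two projections on $\Hilbert_{C_k}$ obtained by replacing $B_{k\ell}$ by either $\cklcheck$ or $\cklhat$ in the relevant PVM, and then to combine the resulting bounds judiciously when summing over $s:L$.

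The heart of the argument is the single-$k$ sandwich $P_{C_k}\!\bigl(\bigcap_\ell\mcsklcheck\bigr)\le\widetilde P_{C_k}(s_k)\le P_{C_k}\!\bigl(\bigcap_\ell\mcsklhat\bigr)$, proved first for one $\ell$ at a time (isolating the $\ell$-th tensor factor of $P_{B_k}(\nbsk)$ via $\Hilbert_{B_k}=\Hilbert_{B_{k\ell}}\otimes\Hilbert_{B_k\setminus B_{k\ell}}$) and then joined together. From (FS) and the inclusions $\Gr(\cklcheck,\Upsilon_k)\subseteq B_{k\ell}$ and $\Gr(C_k\setminus\cklhat,\Upsilon_k)\subseteq B_k\setminus B_{k\ell}$ (immediate from $\cklcheck=\Sr(B_{k\ell},\Sigma)$ and $\cklhat=\past(B_{k\ell})\cap C_k$), wave functions concentrated in $\cklcheck$ (resp.\ in $C_k\setminus\cklhat$) evolve under $W_{C_k}^{B_k}$ to wave functions concentrated in $B_{k\ell}$ (resp.\ in $B_k\setminus B_{k\ell}$); this yields the single-$\ell$ lower bounds for both $s_{k\ell}=1$ and $s_{k\ell}=0$, and the upper bound for $s_{k\ell}=1$. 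The remaining upper bound (for $s_{k\ell}=0$) requires the subtler claim that $\psi_{C_k}$ supported on $\exists(\cklcheck)$ forces $W_{C_k}^{B_k}\psi_{C_k}$ to be supported on $\exists(B_{k\ell})$. I establish this by applying the symmetric form of (FS) to the backward evolution: for any $\Phi\in\Hilbert_{B_k}$ concentrated in $B_k\setminus B_{k\ell}$, the lift $\tilde\Phi=\Phi\otimes|\emptyset(\Upsilon_k\setminus B_k)\rangle\in\Hilbert_{\Upsilon_k}$ evolves under $U_{\Upsilon_k}^\Sigma$ to a state concentrated in $\Gr(B_k\setminus B_{k\ell},\Sigma)\subseteq\Sigma\setminus\cklcheck$ (because $x\in\cklcheck$ has $\Gr(\{x\},\Upsilon_k)\subseteq B_{k\ell}$ and hence $x\notin\past(B_k\setminus B_{k\ell})$), which is orthogonal to the embedding $\psi_{C_k}\otimes|\emptyset(\Sigma\setminus C_k)\rangle$; by unitarity $\langle\Phi|W_{C_k}^{B_k}\psi_{C_k}\rangle=0$. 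The single-$\ell$ bounds combine into the joint bound by a measurable partition of $\bigcup_\ell\mcsklhat^c$ into pieces supported on individual $\mcsklhat^c$, combined with the mutual commutation of the $P_{B_{k\ell}}$'s. Tensoring over $k$ and with $I_{\Sigma\setminus C}$ gives $P_\Sigma(\mcscheck)\le P(s)\le P_\Sigma(\mcshat)$.

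The lower bound in~\eqref{eq:probbounds} then follows termwise: the sets $\mcscheck$ are pairwise disjoint for distinct $s$ (a disagreement at any $(k_0,\ell_0)$ forces incompatible conditions because $\check C_{k_0\ell_0}\subseteq\widehat C_{k_0\ell_0}$), so $\sum_{s:L}P_\Sigma(\mcscheck)=P_\Sigma(\mslcheck)$. The upper bound, however, cannot be obtained termwise, since the $\mcshat$ may overlap for different $s:L$ and summing their projections overshoots $P_\Sigma(\mslhat)$. I therefore aggregate on the $B$-side first: the sets $\bigcap_k\nbsk\subseteq\Gamma(\bigcup_k B_k)$ are pairwise disjoint for distinct $s$, so $\sum_{s:L}\bigotimes_kP_{B_k}(\nbsk)=P_{B^\cdot}(N_L)$ with $B^\cdot:=\bigcup_k B_k$ and $N_L$ characterized cleanly per $\ell$ by whether $q'\cap\bigcup_kB_{k\ell}$ is empty (for $L_\ell=0$) or nonempty (for $L_\ell=1$). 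Hence $\sum_{s:L}P(s)=\bigl(\bigotimes_kW_{B_k}^{C_k}\bigr)P_{B^\cdot}(N_L)\bigl(\bigotimes_kW_{C_k}^{B_k}\bigr)\otimes I_{\Sigma\setminus C}$, and $\mslhat$ admits the parallel per-$\ell$ description on $\Gamma(C^\cdot)$ in terms of $\plcheck=\bigcup_k\cklcheck$ and $\plhat=\bigcup_k\cklhat$. Re-running the decomposition-plus-(FS) argument at the global level---per $\ell$, using the same partition trick---shows that $P_{B^\cdot}(N_L)\bigl(\bigotimes_kW_{C_k}^{B_k}\bigr)\phi=0$ for any $\phi$ on $C^\cdot$ supported outside $\mslhat$, giving $\sum_{s:L}P(s)\le P_\Sigma(\mslhat)$ and the right inequality of~\eqref{eq:probbounds}. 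The main obstacle is precisely this aggregation: the overlap of the $\mcshat$'s precludes a termwise argument, forcing one to combine all $B$-side projections into the single operator $P_{B^\cdot}(N_L)$ and exploit its clean per-$\ell$ factorization so that the (FS)-based argument of the single-$k$ step can be redeployed globally.
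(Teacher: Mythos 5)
Your proposal is correct in substance and arrives at \eqref{eq:probbounds} by a route that differs from the paper's at two points. For the single-$k$ sandwich $P_{C_k}\bigl(\bigcap_\ell \mcsklcheck\bigr)\leq \widetilde P_{C_k}\leq P_{C_k}\bigl(\bigcap_\ell \mcsklhat\bigr)$, the paper rewrites $W_{C_k}^{B_k}$ in terms of $U_\Sigma^{\Upsilon_k}$ and vacua and conjugates the (FS)-derived operator inequalities \eqref{eq:projectorineq1}--\eqref{eq:projectorineq2}; you instead use the support/kernel criterion for the isometry $W_{C_k}^{B_k}$ (for an isometry $V$ and projections $E,P,F$: $E\leq V^\dagger P V$ iff $V(\range E)\subseteq \range P$, and $V^\dagger P V\leq F$ iff $PV$ vanishes on $\ker F$), proving the needed support-propagation statements once by (FS) directly and once by your backward-evolution orthogonality argument, which is just the vector-level form of the complementation step the paper uses to get \eqref{eq:projectorineq2} from \eqref{eq:projectorineq1}; equivalent content. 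The genuinely different step is the upper bound of \eqref{eq:probbounds}: the paper keeps the per-$s$ bounds $P(s)\leq P_\Sigma(\mcshat)$ of Proposition~\ref{prop:minorante} and assembles them with two projection lemmas (the $\sum_i P_i\leq\bigvee_i Q_i$ statement \eqref{eq:sumpleqq} and the claim \eqref{QPQPQ}$\Rightarrow$\eqref{PPIQ}, needed because $\sW_C^B\sW_B^C=Q\neq I$), whereas you aggregate on the $B$ side first, $\sum_{s:L}P_B(s)=P_{\cup_k B_k}(N_L)$, use the exact per-$\ell$ identities $N_L=\bigcap_\ell[\emptyset(\bigcup_k B_{k\ell})\text{ or }\exists(\bigcup_k B_{k\ell})]$ and $\mslhat=\bigcap_\ell[\emptyset(\bigcup_k\cklcheck)\text{ or }\exists(\bigcup_k\cklhat)]$ (both correct, by independence of the $\ell$-blocks of $s$; note these union sets are not the paper's $\plcheck,\plhat$ of the next subsection, though Proposition~\ref{prop:PlCkl} sandwiches them), and then bound $\sW_B^C\,P_{\cup_k B_k}(N_L)\,\sW_C^B$ directly by $P_\Sigma(\mslhat)$ via the same kernel criterion and a global run of the partition trick. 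Your route avoids the paper's lattice/projection lemmas at the price of these combinatorial identities and a second application of the (FS)-based support argument at the aggregated level; both are sound, and both yield the paper's inequality \eqref{eq:upperboundproj}.

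One bookkeeping slip should be corrected. The straightforward concentrated-in-(FS) argument (via $\Gr(\cklcheck,\Upsilon_k)\subseteq B_{k\ell}$ and $\Gr(C_k\setminus\cklhat,\Upsilon_k)\subseteq B_k\setminus B_{k\ell}$) gives the single-$\ell$ lower bound for $s_{k\ell}=0$ and the upper bound for $s_{k\ell}=1$, but \emph{not} the lower bound for $s_{k\ell}=1$: that bound, namely $P_{C_k}(\exists_{C_k}(\cklcheck))\leq W_{B_k}^{C_k}P_{B_k}(\exists_{B_k}(B_{k\ell}))W_{C_k}^{B_k}$, requires exactly your ``subtler claim'' (states supported on $\exists(\cklcheck)$ are mapped to states supported on $\exists(B_{k\ell})$), because a vector in $\range P_{C_k}(\exists(\cklcheck))$ need not be concentrated in $\cklcheck$, and (FS) by itself does not forbid a particle initially in $\cklcheck$ from being absorbed or from failing to show up in $B_{k\ell}$. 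Since you do prove that claim (for the $s_{k\ell}=0$ upper bound), the repair is only to invoke it in both places where it is needed; with that reassignment your argument goes through.
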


Before proving Proposition \ref{prop:bounds}, we establish some auxiliary statements.

\begin{prop}
	Let $D_{k\ell} = \Gr(B_{k\ell},\Sigma) \backslash C_k$, so that $\cklhat \cup D_{k\ell} = \Gr(B_{k\ell},\Sigma)$. Then the following operator inequalities\footnote{The operator inequalities are meant as follows. Let $P,Q : \Hilbert \rightarrow \Hilbert$. Then $P \leq Q  :\Leftrightarrow \langle \phi | P | \phi \rangle \leq \langle \phi | Q |\phi \rangle~\forall \phi \in \Hilbert$. Put differently, $Q-P$ is a positive operator. When $P,Q$ are projections, this is equivalent to $\range(P) \subseteq \range(Q)$.} hold:
	\begin{align}
		P_{\Sigma}(\emptyset(\cklhat \cup D_{k\ell})) &\leq U_{\Upsilon_k}^\Sigma P_{\Upsilon_k}(\emptyset_{\Upsilon_k}(B_{k\ell})) U_\Sigma^{\Upsilon_k} \leq P_{\Sigma}(\emptyset(\cklcheck)) \label{eq:projectorineq1},\\
		P_{\Sigma}(\exists(\cklcheck)) &\leq U_{\Upsilon_k}^\Sigma P_{\Upsilon_k}(\exists_{\Upsilon_k}(B_{k\ell})) U_\Sigma^{\Upsilon_k} \leq P_{\Sigma}(\exists(\cklhat \cup D_{k\ell})) \label{eq:projectorineq2}.
	\end{align}
\end{prop}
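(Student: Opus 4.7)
The plan is to reduce both lines of operator inequalities to range inclusions of projections, and then to verify those inclusions by a direct application of the finite propagation speed hypothesis (FS), with a small amount of set-theoretic bookkeeping of grown and shrunk sets.

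First observe that each of the three operators in \eqref{eq:projectorineq1} is a projection: the outer ones as values of the PVM $P_\Sigma$, and the middle one $U_{\Upsilon_k}^\Sigma P_{\Upsilon_k}(\emptyset_{\Upsilon_k}(B_{k\ell})) U_\Sigma^{\Upsilon_k}$ as a unitary conjugation of a projection. Since $P\le Q$ is equivalent to $\range P \subseteq \range Q$ for projections, it suffices to verify two range inclusions. Recalling that $\emptyset_{\Upsilon_k}(B_{k\ell}) = \forall(\Upsilon_k\setminus B_{k\ell})$, vectors in the range of the middle operator correspond via $U_\Sigma^{\Upsilon_k}$ exactly to those $\psi_{\Upsilon_k}$ concentrated in $\Upsilon_k\setminus B_{k\ell}$.

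For the upper bound of \eqref{eq:projectorineq1}, suppose $\psi_{\Upsilon_k}$ is concentrated in $\Upsilon_k\setminus B_{k\ell}$. By (FS), $\psi_\Sigma = U_{\Upsilon_k}^\Sigma\psi_{\Upsilon_k}$ is concentrated in $\Gr(\Upsilon_k\setminus B_{k\ell},\Sigma)$. Using the identity $\Gr(A^c,\Sigma')=\Sr(A,\Sigma')^c$ from Section~\ref{sec:dynamics}, this set equals $\Sigma\setminus\Sr(B_{k\ell},\Sigma)=\Sigma\setminus\cklcheck$, so $\psi_\Sigma\in\range P_\Sigma(\emptyset(\cklcheck))$, as required. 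For the lower bound, suppose $\psi_\Sigma$ is concentrated in $\Sigma\setminus(\cklhat\cup D_{k\ell})=\Sigma\setminus\Gr(B_{k\ell},\Sigma)$. Applying (FS) in the reverse direction (from $\Sigma$ to $\Upsilon_k$), the state $\psi_{\Upsilon_k}$ is concentrated in $\Gr(\Sigma\setminus\Gr(B_{k\ell},\Sigma),\Upsilon_k)$, and it remains to check that this set is disjoint from $B_{k\ell}$. This is purely set-theoretic: if some $y\in B_{k\ell}$ were to lie in this grown set, there would exist $x\in\Sigma\setminus\Gr(B_{k\ell},\Sigma)$ with $x$ in the causal past or future of $y$; by symmetry of the causal relation, $x$ would then belong to $\past(B_{k\ell})\cup\future(B_{k\ell})$ and, since $x\in\Sigma$, to $\Gr(B_{k\ell},\Sigma)$ itself, a contradiction. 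Hence $\psi_{\Upsilon_k}\in\range P_{\Upsilon_k}(\emptyset_{\Upsilon_k}(B_{k\ell}))$, and applying $U_{\Upsilon_k}^\Sigma$ gives the desired inclusion.

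Finally, the inequalities \eqref{eq:projectorineq2} for the $\exists$-sets follow from \eqref{eq:projectorineq1} by taking orthogonal complements in each ambient configuration space. Since $\exists(R)$ and $\emptyset(R)$ are complementary subsets of $\Gamma(\Sigma)$ (and analogously for $\exists_{\Upsilon_k}, \emptyset_{\Upsilon_k}$ in $\Gamma(\Upsilon_k)$), the PVMs satisfy $P(\exists(R))=I-P(\emptyset(R))$; substituting into \eqref{eq:projectorineq1} and reversing the direction of the inequalities exchanges the roles of the shrunk and grown sets and produces \eqref{eq:projectorineq2}. The main obstacle in the whole argument is really just keeping track of the grown-set/shrunk-set identities and the orientation of the causal relation; the operator-theoretic content reduces, once the range inclusions are recognized, to a single invocation of (FS) in each direction.
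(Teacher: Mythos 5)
Your proof is correct and takes essentially the same route as the paper: it amounts to the paper's reformulation of (FS) as the operator inequality $U_\Sigma^{\Sigma'} P_\Sigma(\emptyset_\Sigma(R))U_{\Sigma'}^{\Sigma} \leq P_{\Sigma'}(\emptyset_{\Sigma'}(\Sr(R,\Sigma')))$, applied once from $\Upsilon_k$ to $\Sigma$ with $R=B_{k\ell}$ and once from $\Sigma$ to $\Upsilon_k$ with $R=\cklhat\cup D_{k\ell}$, with \eqref{eq:projectorineq2} obtained by passing to complements. The only difference is cosmetic: you phrase it as range inclusions of projections and spell out the causal argument behind the set identity $\Sr(\cklhat\cup D_{k\ell},\Upsilon_k)=B_{k\ell}$ (equivalently, that $\Gr(\Sigma\setminus\Gr(B_{k\ell},\Sigma),\Upsilon_k)$ misses $B_{k\ell}$), which the paper simply asserts.
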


\begin{proof}
	We verify \eqref{eq:projectorineq1}; \eqref{eq:projectorineq2} then follows via $P \leq Q \Leftrightarrow I - Q \leq I - P$.
	
	To prove \eqref{eq:projectorineq1}, first note that (PL) can be equivalently formulated as follows (see also Figure~\ref{fig:grown_shrunk_sets}). For any $R \subseteq \Sigma$,
	\be
		U_\Sigma^{\Sigma'} P_{\Sigma}(\emptyset_\Sigma(R)) U_{\Sigma'}^\Sigma \leq P_{\Sigma'}(\emptyset_{\Sigma'}(\Sr(R,\Sigma'))). 
		\label{eq:fsreform}
	\ee
	Applying this for $R \leftrightarrow B_{k\ell}$, $\Sigma \leftrightarrow \Upsilon_k$ and $\Sigma' \leftrightarrow \Sigma$, we obtain the second inequality in \eqref{eq:projectorineq1} as $\cklcheck = \Sr(B_{k\ell},\Sigma)$. For the first inequality in \eqref{eq:projectorineq1}, we bring the evolution operators in \eqref{eq:fsreform} to the other side and apply the resulting inequality for $\Sigma$, $\Sigma' \leftrightarrow \Upsilon_k$, $R \leftrightarrow \cklhat \cup D_{k\ell}$ and make use of $\Sr(\cklhat \cup D_{k\ell},\Upsilon_k) = B_{k\ell}$.
\end{proof}

\begin{prop}\label{prop:minorante}
	\be
		P_\Sigma(\mcscheck) \leq P(s) \leq P_\Sigma(\mcshat).
		\label{eq:psinequality}
	\ee
\end{prop}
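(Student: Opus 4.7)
The plan is to reduce the inequality to operator inequalities on each individual factor $\widetilde{P}_{C_k}$ (viewed as an operator on $\Hilbert_{C_k}$) and then assemble via tensor products. Once each $\widetilde{P}_{C_k}$ is sandwiched between $P_{C_k}$ of appropriate ``shrunk'' and ``grown'' events in $\Gamma(C_k)$, the claimed bound on $P(s)=\bigotimes_k\widetilde{P}_{C_k}\otimes I_{\Sigma\setminus C}$ will follow from positivity of tensor products of positive operators combined with factorization of the PVM (axiom~(iii)).

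First I would establish, for arbitrary $\psi_{C_k}\in\Hilbert_{C_k}$ and $\psi_\Sigma:=\psi_{C_k}\otimes|\emptyset(\Sigma\setminus C_k)\rangle$, the identity
$$\bigl\langle\psi_{C_k}\bigm|\widetilde{P}_{C_k}\bigm|\psi_{C_k}\bigr\rangle=\bigl\langle\psi_\Sigma\bigm|U^\Sigma_{\Upsilon_k}P_{\Upsilon_k}(\mbsk)U^{\Upsilon_k}_\Sigma\bigm|\psi_\Sigma\bigr\rangle,$$
which follows from the defining property \eqref{eq:woperator} of $W^{B_k}_{C_k}$ together with the decomposition \eqref{eq:pvmdecomp}. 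Applying \eqref{eq:projectorineq1}--\eqref{eq:projectorineq2} for each $\ell=1,\ldots,r$, and using that the projections $P_{\Upsilon_k}(M_B(s_{k\ell}))$ all commute (being members of a common PVM) so that their unitary conjugates by $U^\Sigma_{\Upsilon_k}$ also commute, taking the product of the individual bounds yields a sandwich of $U^\Sigma_{\Upsilon_k}P_{\Upsilon_k}(\mbsk)U^{\Upsilon_k}_\Sigma$ between $P_\Sigma$ of the intersections over $\ell$ of the respective lower and upper bounding sets.

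Next I would take expectations on $\psi_\Sigma$. Because this state has no particles outside $C_k$, and because $D_{k\ell}\subseteq\Sigma\setminus C_k$, the event $\exists(\cklhat\cup D_{k\ell})$ collapses within the support of $\psi_\Sigma$ to $\exists_{C_k}(\cklhat)$ (and analogously for $s_{k\ell}=0$). The $\Sigma$-level quadratic form inequality therefore descends to the bounds $P_{C_k}(\bigcap_\ell \mcsklcheck\cap\Gamma(C_k))\leq\widetilde{P}_{C_k}\leq P_{C_k}(\bigcap_\ell \mcsklhat\cap\Gamma(C_k))$; since these hold for every $\psi_{C_k}$ and both sides are Hermitian, they are genuine operator inequalities on $\Hilbert_{C_k}$.

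Finally, I would tensor over $k=\kappa,\ldots,K$ and append $I_{\Sigma\setminus C}$. The elementary fact that $0\leq A_k\leq B_k$ implies $\bigotimes_k A_k\leq\bigotimes_k B_k$ (by telescoping the difference into a sum of positive tensor products) promotes the per-$k$ bounds to bounds on $P(s)$ as an operator on $\Hilbert_\Sigma$. By factorization of the PVM applied to the partition $\Sigma=C_\kappa\cup\cdots\cup C_K\cup(\Sigma\setminus C)$, these tensor products coincide with $P_\Sigma(\mcscheck)$ and $P_\Sigma(\mcshat)$: each $\mcsklcheck$ constrains only the $\Gamma(C_k)$ component since $\cklcheck,\cklhat\subseteq C_k$, so $\mcscheck=\bigcap_{k,\ell}\mcsklcheck$ factorizes with $\Gamma(\Sigma\setminus C)$ left unconstrained. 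The chief obstacle is the descent from $\Hilbert_\Sigma$ to $\Hilbert_{C_k}$ in the second step: the extraneous piece $D_{k\ell}$ appearing on the upper side of \eqref{eq:projectorineq2} must be absorbed cleanly by the vacuum factor $|\emptyset(\Sigma\setminus C_k)\rangle$ of $\psi_\Sigma$, converting an inequality on the full Hilbert space into one localized to $\Hilbert_{C_k}$ with the right bounding sets.
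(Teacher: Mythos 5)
Your proposal is correct and follows essentially the same route as the paper's proof: evaluating the quadratic form of $\widetilde{P}_{C_k}$ on states $\psi_{C_k}\otimes|\emptyset(\Sigma\setminus C_k)\rangle$, rewriting $W_{C_k}^{B_k}$ via $U_\Sigma^{\Upsilon_k}$ and \eqref{eq:woperator}, applying \eqref{eq:projectorineq1}--\eqref{eq:projectorineq2} factor by factor (the product step being justified because commuting projections obey range monotonicity), absorbing $D_{k\ell}$ by the vacuum factor, and then tensoring over $k$ and identifying the result with $P_\Sigma(\mcscheck)$ and $P_\Sigma(\mcshat)$ via factorization of the PVM. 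No gaps; the only differences are cosmetic (you make the commutativity and the positivity-of-tensor-products telescoping explicit, which the paper leaves implicit).
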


\begin{proof}
	Recall \eqref{eq:ps}, i.e.:
	\be
		P(s) = \left(\bigotimes_{k=\kappa}^K W_{B_k}^{C_k} \: P_{B_k}(\nbsk) \: W_{C_k}^{B_k} \right) \otimes I_{\Sigma \backslash C}.
	\ee
	Let
	\be
		\nbskl = \left\{ \begin{array}{l} \emptyset_{B_k}(B_{k\ell})~{\rm if}~s_{k\ell} = 0,\\ \exists_{B_k}(B_{k\ell})~{\rm if}~s_{k\ell} = 1. \end{array} \right.
	\ee
	Then $\displaystyle\nbsk = \bigcap_{\ell = 1}^r \nbskl$ and hence
	\be
		P_{B_k}(\nbsk) = \prod_{\ell=1}^r P_{B_k}(\nbskl).
	\ee
	
	\noindent \underline{Claim:}
	\be
		\prod_{\ell = 1}^r P_{C_k}(\ncsklcheck) \leq W_{B_k}^{C_k} \prod_{\ell=1}^r P_{B_k}(\nbskl) W_{C_k}^{B_k} \leq \prod_{\ell = 1}^r P_{C_k}(\ncsklhat),
		\label{eq:ineqclaim}
	\ee
	where
	\begin{align}
		\ncsklcheck &=  \left\{ \begin{array}{l} \emptyset_{C_k}(\cklhat)~{\rm if}~s_{k\ell} = 0,\\ \exists_{C_k}(\cklcheck)~{\rm if}~s_{k\ell} = 1, \end{array} \right.\nonumber\\
		\ncsklhat &=  \left\{ \begin{array}{l} \emptyset_{C_k}(\cklcheck)~{\rm if}~s_{k\ell} = 0,\\ \exists_{C_k}(\cklhat)~{\rm if}~s_{k\ell} = 1. \end{array} \right.
	\end{align}

\noindent \underline{Proof of the claim.} Focus on the first inequality in \eqref{eq:ineqclaim}, i.e.,
	\begin{align}
		\left\langle \chi_{C_k} \left\vert \prod_{\ell = 1}^r P_{C_k}(\ncsklcheck) \right\vert \chi_{C_k} \right\rangle   \leq   \left\langle \chi_{C_k} \left\vert W_{B_k}^{C_k} \prod_{\ell=1}^r P_{B_k}(\nbskl) W_{C_k}^{B_k} \right\vert \chi_{C_k} \right\rangle ~ \forall \chi_{C_k} \in \Hilbert_{C_k}.
	\end{align}
	The idea is to rewrite the right-hand side in terms of the $U$ instead of $W$ operators to be able to use the inequalities \eqref{eq:projectorineq1}, \eqref{eq:projectorineq2} after some further manipulations. Recall that
	\be
		U_{\Sigma}^{\Upsilon_k} | \chi_{C_k} \rangle \otimes \vacket{\Sigma \backslash C_k} = \left( W_{C_k}^{B_k}  | \chi_{C_k} \rangle \right) \otimes \vacket{\Upsilon_k \backslash B_k}.
	\ee
	Therefore, using $P_{\Upsilon_k}(\mbskl) = P_{B_k}(\nbskl) \otimes I_{\Upsilon_k \backslash B_k}$, we find that
	\begin{align}
		&\left\langle \chi_{C_k} \left\vert W_{B_k}^{C_k} \prod_{\ell=1}^r P_{B_k}(\nbskl) W_{C_k}^{B_k}  \right\vert \chi_{C_k} \right\rangle \nonumber\\
		= &\left\langle \chi_{C_k} \otimes \emptyset(\Sigma \backslash C_k) \left\vert U_{\Upsilon_k}^{\Sigma} \left( \prod_{\ell=1}^r P_{\Upsilon_k}(\mbskl) \right) U_{\Sigma}^{\Upsilon_k} \right\vert \chi_{C_k} \otimes \emptyset(\Sigma \backslash C_k) \right\rangle\nonumber\\
		= &\left\langle \chi_{C_k} \otimes \emptyset(\Sigma \backslash C_k) \left\vert  \prod_{\ell=1}^r \left( U_{\Upsilon_k}^{\Sigma}  P_{\Upsilon_k}(\mbskl) U_{\Sigma}^{\Upsilon_k} \right) \right\vert \chi_{C_k} \otimes \emptyset(\Sigma \backslash C_k) \right\rangle.
		\label{eq:ineqcalc1}
	\end{align}
	Now, for the expressions $U_{\Upsilon_k}^{\Sigma}  P_\Sigma(\mbskl) U_{\Sigma}^{\Upsilon_k}$ we use the previously obtained inequalities \eqref{eq:projectorineq1}, \eqref{eq:projectorineq2} (the respective lower bounds). This yields:
	\begin{align}
		&\eqref{eq:ineqcalc1} \geq \left\langle \chi_{C_k} \otimes \emptyset(\Sigma \backslash C_k) \left\vert  \prod_{\ell=1}^r P_\Sigma(\mcdsklcheck) \right\vert \chi_{C_k} \otimes \emptyset(\Sigma \backslash C_k) \right\rangle,
		\label{eq:ineqcalc2}
	\end{align}
	where
	\be
		\mcdsklcheck = \left\{ \begin{array}{ll} \emptyset(\cklhat \cup D_{k\ell})&{\rm if}~s_{k\ell} = 0,\\ \exists(\cklcheck)&{\rm if}~s_{k\ell} = 1. \end{array} \right.
	\ee
	Next, note that because of $D_{k\ell} \subseteq \Sigma \backslash C_k$,
	 \be
		P_\Sigma(\emptyset(\cklhat \cup D_{k\ell})) \, | \chi_{C_k} \! \otimes \emptyset(\Sigma \backslash C_k) \rangle = P_\Sigma(\emptyset(\cklhat)) \, | \chi_{C_k} \! \otimes \emptyset(\Sigma \backslash C_k) \rangle,
	\ee
	and therefore
	\be
		P_\Sigma(\mcdsklcheck) \, | \chi_{C_k} \! \otimes \emptyset(\Sigma \backslash C_k) \rangle = P_\Sigma(\mcsklcheck) \, | \chi_{C_k} \! \otimes \emptyset(\Sigma \backslash C_k) \rangle.
	\ee
	As a consequence, since $P_\Sigma(\mcsklcheck)= P_{C_k} (\ncsklcheck)\otimes I_{\Sigma\setminus C_k}$,
	\begin{align}
		&\left\langle \chi_{C_k} \left\vert W_{B_k}^{C_k} \prod_{\ell=1}^r P_{B_k}(\nbskl) W_{C_k}^{B_k} \right\vert \chi_{C_k} \right\rangle\nonumber\\
		&   \geq   \left\langle \chi_{C_k} \! \otimes \emptyset(\Sigma \backslash C_k) \left\vert  \prod_{\ell=1}^r P_\Sigma(\mcsklcheck) \right\vert \chi_{C_k} \! \otimes \emptyset(\Sigma \backslash C_k) \right\rangle\nonumber\\
		&= \left\langle \chi_{C_k} \left\vert  \prod_{l=1}^L P_{C_k}(\ncsklcheck) \right\vert \chi_{C_k} \right\rangle
	\end{align}
	for all $\chi_{C_k} \in \Hilbert_{C_k}$. 
	This yields the first inequality of the claim \eqref{eq:ineqclaim}. For the second one, one proceeds analogously.$\hfill\square$

\bigskip
	
We now continue the proof of Proposition~\ref{prop:minorante}. Eq.~\eqref{eq:ineqclaim} implies, by taking a tensor product over $k$, that
	\be
		\left[ \bigotimes_{k=\kappa}^K \left( \prod_{\ell = 1}^r P_{C_k}(\ncsklcheck) \right) \right] \otimes I_{\Sigma \backslash C} \leq P(s) \leq \left[ \bigotimes_{k=\kappa}^K \left( \prod_{\ell = 1}^r P_{C_k}(\ncsklhat) \right) \right] \otimes I_{\Sigma \backslash C}.
	\ee
    Now the operator on the left-hand side equals $P_\Sigma(\mcscheck)$ (as can be seen by comparing the action of these operators on a state in the tensor product) and the one on the right-hand side $P_\Sigma(\mcshat)$ so, in fact, we obtain \eqref{eq:psinequality}.
\end{proof}

\medskip

\begin{proof}[Proof of Proposition \ref{prop:bounds}.] We prove the lower bound of \eqref{eq:probbounds} first, then the upper.

\medskip

\noindent \textit{Lower bound.} Applying \eqref{eq:psinequality} to the individual summands in $\displaystyle \Prob^{\psi,\varepsilon}_{\det,\sP}(L) = \sum_{s:L}\langle \psi_\Sigma | P(s) | \psi_\Sigma \rangle$, we obtain that
\be
	 \sum_{s:L} \langle \psi_\Sigma | P_{\Sigma}(\mcscheck) | \psi_\Sigma \rangle \leq \Prob^{\psi,\varepsilon}_{\det,\sP}(L).
	 \label{eq:boundscalc1}
\ee
Now, by the definition of the sets $\mcscheck$, we have that $\mcscheck \cap \check{M}_C(s') = \emptyset$ for $s \neq s'$. Thus,
\be
	\sum_{s:L} \langle \psi_\Sigma | P_{\Sigma}(\mcscheck) | \psi_\Sigma \rangle = \langle \psi_\Sigma | P_\Sigma \Big( \bigcup_{s:L} \mcscheck \Big) | \psi_\Sigma \rangle =  \langle \psi_\Sigma | P_\Sigma(\mslcheck)| \psi_\Sigma \rangle,
\ee
which, together with \eqref{eq:boundscalc1}, yields the lower bound of \eqref{eq:probbounds}.\\

\noindent \textit{Upper bound.} Here, we cannot proceed analogously as the sets $\mcshat$ are not mutually disjoint. Instead, we use \eqref{eq:psinequality} to show that
\be
	\sum_{s:L} P(s) \leq P_{\Sigma}(\mslhat),
	\label{eq:upperboundproj}
\ee
which will then directly yield the upper bound of \eqref{eq:probbounds}.

 	For notational convenience, we write
 	\be
 		P(s) = P_C(s) \otimes I_{\Sigma \backslash C},
 	\ee
 	where
 	\begin{align}
 		&P_C(s) : \bigotimes_{k=\kappa}^K \Hilbert_{C_k} \rightarrow  \bigotimes_{k=\kappa}^K \Hilbert_{C_k},\nonumber\\[2mm]
 		&P_C(s) = \sW_B^C \: P_B(s) \: \sW_C^B
 	\end{align}
 	with $\displaystyle \sW_C^B = \bigotimes_{k=\kappa}^K W_{C_k}^{B_k}$, $\sW_B^C = (\sW_C^B)^\dagger$, and $\displaystyle P_B(s) = \bigotimes_{k=\kappa}^K P_{B_k}(\nbsk)$. Then
 	\be
 		\sum_{s:L} P_C(s) = \mathscr{W}_B^C \Big( \sum_{s:L} P_B(s) \Big) \mathscr{W}_C^B.
 	\ee
While the $P_C(s)$ are in general not projections, the $P_B(s)$ are, and they are mutually orthogonal for different $s$ because the $P_B(s)$ act on $\Hilbert_B := \bigotimes_{k=\kappa}^K \Hilbert_{B_k}$, and the $\nbsk$ are mutually disjoint sets in $\Gamma(B_k)$. As a consequence, $\sum_{s:L} P_B(s)$ is again a projection. For it we want to derive an upper bound from the upper bounds \eqref{eq:psinequality} that we have for each $s$. 

We make the following simple observation. Let $\Hilbert_i$ and $\Kilbert_i$ be closed subspaces of $\Hilbert$ for each $i=1,\ldots,n$ such that $\Hilbert_i\subseteq \Kilbert_i$, and the $\Hilbert_i$ are mutually orthogonal (while the $\Kilbert_i$ are not necessarily). Then $\bigoplus_i \Hilbert_i \subseteq \bigvee_i \Kilbert_i$, where $\bigvee_i \Kilbert_i$ denotes the span of the $\Kilbert_i$. The corresponding statement for projections reads as follows: If $P_i$ and $Q_i$, $i=1,\ldots,n$, are projections in $\Hilbert$, if the $P_i$ are mutually orthogonal, and if $P_i \leq Q_i$, then
\be
\sum_i P_i \leq \bigvee_i Q_i\,,
 	 	\label{eq:sumpleqq}
\ee
where $\bigvee_i Q_i$ means the projection onto $\bigvee_i \Kilbert_i$. 
Later, we will use that $\bigvee_i P_\Sigma(S_i)= P_\Sigma(\bigcup_i S_i)$ for any sets $S_i\subseteq \Gamma(\Sigma)$.

In our case, $i \leftrightarrow s$, $P_i \leftrightarrow P_B(s)$, and $Q_i$ still needs to be chosen. Since $P_B(s)$ is closely related to $P_C(s)$, and since from \eqref{eq:psinequality} we have that $P_C(s)\leq P_C(\ncshat)$ with $\ncshat \subseteq \Gamma(C)$ the set defined by $\mcshat = \ncshat \times \Gamma(\Sigma\backslash C)$, our $Q_i$ should be closely related to $P_C(\ncshat)$. The latter is indeed a projection, but lives on $\Hilbert_C$ instead of $\Hilbert_B$. So we need to transport it to $\Hilbert_B$ via $\sW_C^B$, which yields
\be\label{PhatBsdef}
\sW_C^B\, P_C(s) \, \sW^C_B ~\leq~ \sW_C^B\, P_C(\ncshat) \, \sW^C_B ~=:~ \widehat{P}_B(s)
\ee
or, since $P_C(s)= \sW^C_B \, P_B(s) \, \sW^B_C$,
\be
\sW_C^B\, \sW^C_B \, P_B(s) \, \sW^B_C \, \sW^C_B ~~\leq~~ \widehat{P}_B(s)\,.
\ee
Now $\sW_C^B$ is unitary only to a \emph{subspace} of $\Hilbert_B$ (that may not contain $\range P_B(s)$); let $Q$ denote the projection onto that subspace, so $\sW_B^C\, \sW_C^B=I_C$ and $\sW_C^B \, \sW_B^C = Q$. Thus,
\be
Q\, P_B(s) \, Q ~\leq~ \widehat{P}_B(s)\,,
\ee
and we want to obtain from that an upper bound $Q_i$ for $P_B(s)$.

\bigskip

\noindent\underline{Claim.} If $P,Q,\widehat{P}$ are projections in $\Hilbert$ such that 
\be\label{QPQPQ}
QPQ \leq \widehat{P} \leq Q\,,
\ee
then
\be\label{PPIQ}
P \leq \widehat{P} + (I-Q)\,.
\ee

\bigskip

\noindent\underline{Proof of the claim.}
Note first that $\widehat{P}\leq Q$ implies that $\widehat{P}$ and $I-Q$ are mutually orthogonal, so $\widehat{P}+(I-Q)$ is again a projection. Now suppose for a moment that $P$ has rank 1, $P=|\psi\rangle \langle \psi|$. Then \eqref{QPQPQ} means that $Q\psi \in \range \widehat{P} \subseteq \range Q$. So, $\psi$ can be split, $\psi=Q\psi + (I-Q)\psi$, into a part that lies in $\range \widehat{P}$ and a part orthogonal to $\range Q$. Hence, $\psi$ lies in the range of $\widehat{P}+(I-Q)$, and since the latter is a projection, \eqref{PPIQ} follows. 

For the general case, let $\{\psi_i\}$ be an orthonormal basis of $\range P$, so $P=\sum_i |\psi_i\rangle \langle \psi_i|$. Then for every $i$, $Q|\psi_i\rangle \langle \psi_i|Q \leq QPQ \leq \widehat{P}$, so $Q\psi_i\in \range \widehat{P}$ and, as before, $\psi_i \in \range(\widehat{P}+(I-Q))$. Using again that $\widehat{P}+(I-Q)$ is a projection, \eqref{PPIQ} follows.$\hfill\square$
	
\bigskip
 
We continue the proof of Proposition~\ref{prop:bounds}. In our case, $\widehat{P}_B(s)$ is indeed $\leq Q$ because by its definition \eqref{PhatBsdef} it is a projection onto a subspace of $\range Q=\range \sW_C^B$. Thus, \eqref{PPIQ} applies, 
\be
P_B(s) ~\leq~ \widehat{P}_B(s) + (I-Q) ~=:~ Q_i
\ee
with $I=I_B$, and \eqref{eq:sumpleqq} entails that
\begin{align}
\sum_{s:L} P_B(s) 
&\leq \bigvee_{s:L} \Bigl[ \widehat{P}_B(s) + (I-Q) \Bigr] \nonumber\\
&= \Biggl[ \bigvee_{s:L} \widehat{P}_B(s) \Biggr] + (I-Q) \nonumber\\
&= \Biggl[ \bigvee_{s:L} \sW_C^B\, P_C(\ncshat)\, \sW_B^C \Biggr] + (I-Q) \nonumber\\
&= \sW_C^B \Biggl[ \bigvee_{s:L}  P_C(\ncshat) \Biggr]  \sW_B^C + (I-Q) \nonumber\\
&= \sW_C^B \:  P_C\biggl(\bigcup_{s:L}\ncshat \biggr)\:  \sW_B^C + (I-Q)\,.
\end{align}
 
	As a consequence, using $Q_B = \mathscr{W}_C^B \mathscr{W}_B^C$, $I_C = \mathscr{W}_B^C \mathscr{W}_C^B$, and $\displaystyle \mslhat = \big(\bigcup_{s:L} \ncshat \big) \times \Gamma(\Sigma \backslash C)$,
	\begin{align}
		\sum_{s : L} P(s) &=  \Big[ \mathscr{W}_B^C \Big( \sum_{s : L} P_B(s) \Big) \mathscr{W}_C^B \Big] \otimes I_{\Sigma \backslash C} \nonumber\\
		&\leq  \Big[ \mathscr{W}_B^C \Big( \mathscr{W}_C^B P_C \Big(\bigcup_{s:L} \ncshat \Big) \mathscr{W}_B^C  + (I - Q) \Big) \mathscr{W}_C^B \Big] \otimes I_{\Sigma \backslash C}  \nonumber\\
		&= P_C \Big(\bigcup_{s:L} \ncshat \Big) \otimes I_{\Sigma \backslash C} \nonumber\\
		&= P_\Sigma(\mslhat)\,.
	\end{align}
	This ends the proof.
\end{proof}

\subsubsection{Convergence of the Bounds in the Limit $\varepsilon \rightarrow 0$} \label{sec:limit}

In order to take the limit $\varepsilon \rightarrow 0$, we first introduce new sets in configuration space which lead to looser bounds for $\Prob^{\psi,\varepsilon}_{\det,\sP}(L)$ that are, however, still tight enough for our purposes and easier to deal with. These sets are defined in a simpler way than the previous sets $\mslcheck, \mslhat$ and are conveniently nested within each other for two different values of $\varepsilon$. 
	Let $\mathcal{B}_\varepsilon(\vx) \subset \RRR^3$ denote the ball with radius $\varepsilon$ around $\vx \in \RRR^3$, recall that $\pi(x^0,\vx) = \vx$, and let $\pi_\Sigma$ be the restriction of $\pi$ to $\Sigma$ (which is a homeomorphism).

\begin{defn}
	For all $\varepsilon > 0$, we define the following shrunk and grown versions of the patches $\patch_\ell$ (see Figure~\ref{fig:grown_shrunk_patches}):
	\begin{align}
		\plcheck &= \bigl\{ x \in \patch_\ell : \mathcal{B}_\varepsilon(\pi(x)) \subseteq \pi(\patch_\ell) \bigr\},\nonumber\\
		\plhat &= \bigcup_{x \in \patch_\ell} \pi_\Sigma^{-1}\Bigl( \mathcal{B}_\varepsilon(\pi(x)) \Bigr), \nonumber\\
		\dpepsilon &= \bigcup_{\ell=1}^r \bigl( \plhat \setminus \plcheck \bigr).
	\end{align}
	For all $\varepsilon > 0$, we have that $\plcheck \subseteq \patch_\ell \subseteq \plhat$, as well as 
	\be \label{monotonepatch} 
	 \check{\patch}_\ell^{\varepsilon_2} \subseteq \check{\patch}_\ell^{\varepsilon_1}\,, \qquad
	 \widehat{\patch}_\ell^{\varepsilon_1} \subseteq \widehat{\patch}_\ell^{\varepsilon_2}\,, \qquad
	 \partial\patch^{\varepsilon_1}\subseteq \partial\patch^{\varepsilon_2}
	\ee
	for all $0 < \varepsilon_1 < \varepsilon_2$.
	Furthermore, let
	\begin{align}
		\mpllcheck &= \left\{ \begin{array}{l} \emptyset(\plhat) ~{\rm if}~L_\ell = 0,\\
							\exists(\plcheck)~{\rm if}~L_\ell =1, \end{array} \right. \nonumber\\
		\mpllhat &= \left\{ \begin{array}{l} \emptyset(\plcheck) ~{\rm if}~L_\ell = 0,\\
							\exists(\plhat)~{\rm if}~L_\ell =1. \end{array} \right.
	\end{align}
	In addition, we introduce
	\begin{align}
		\mplcheck &= \bigcap_{\ell = 1}^r \mpllcheck, \nonumber\\
		\mplhat &= \bigcap_{\ell = 1}^r \mpllhat.
	\end{align}
	Then, the relations $\emptyset(R_1) \supseteq \emptyset(R_2)$, $\exists(R_1) \subseteq \exists(R_2)$ for $R_1 \subseteq R_2 \subseteq \Sigma$ imply that, for all $\varepsilon >0$,
	\be
		\mplcheck \subseteq \msl \subseteq \mplhat,
		\label{eq:mpinclusions1}
	\ee
	and that, for all $0 < \varepsilon_1 < \varepsilon_2$,
	\be
		\check{M}_\patch^{\varepsilon_2}(L) \subseteq \check{M}_\patch^{\varepsilon_1}(L), ~~~~~		\widehat{M}_\patch^{\varepsilon_1}(L) \subseteq \widehat{M}_\patch^{\varepsilon_2}(L). \label{eq:mpinclusions2}
	\ee
\end{defn}

\begin{figure}[tp]
\centering
 \includegraphics[width=0.6\textwidth]{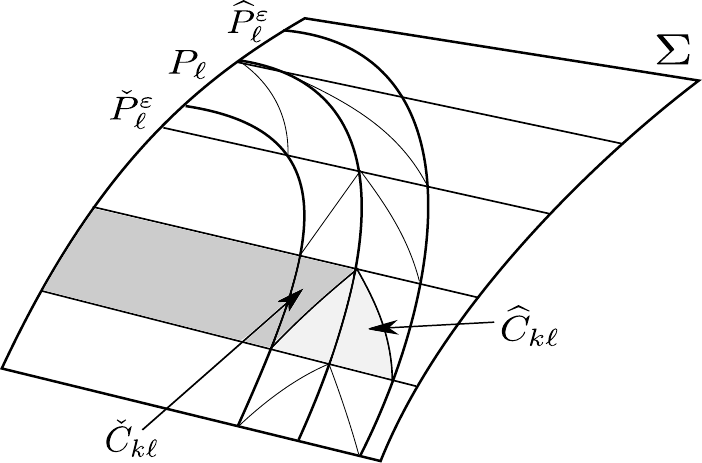}
 \caption{Illustration of the shrunk and grown patches $\plcheck, \plhat$ and their relation to the shrunk and grown sets $\check{C}_{k\ell}$ (dark grey region) and $\widehat{C}_{k\ell}$ (dark and light grey regions).}
 \label{fig:grown_shrunk_patches}
\end{figure}

\begin{prop}\label{prop:PlCkl}
	 $\displaystyle \plcheck \subseteq \bigcup_{k\in\ZZZ} \check{C}_{k\ell}$ and $\displaystyle \bigcup_{k\in \ZZZ} \widehat{C}_{k\ell} \subseteq \plhat$ for all $\ell$ (see Figure~\ref{fig:grown_shrunk_patches}).
\end{prop}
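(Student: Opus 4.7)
The plan is to prove the two inclusions separately; the key observation is that each slice $C_k = \{x \in \Sigma : (k-1)\varepsilon < x^0 \leq k\varepsilon\}$ has ``time thickness'' strictly less than $\varepsilon$, so any causal displacement between a point of $C_k$ and a point of the adjacent hyperplane $\Upsilon_k$ has spatial component strictly less than $\varepsilon$. Combined with the defining spatial $\varepsilon$-buffers of $\plcheck$ and $\plhat$, this will immediately link the two families of sets. Recall that the $C_k$ partition $\Sigma$, since each $x \in \Sigma$ belongs to exactly one such slice.

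For $\plcheck \subseteq \bigcup_k \check{C}_{k\ell}$, I would pick $x \in \plcheck$, let $k$ be the unique integer with $x \in C_k$, and show $x \in \check{C}_{k\ell}=\Sr(B_{k\ell},\Sigma)$, i.e.\ $\Gr(\{x\},\Upsilon_k) \subseteq B_{k\ell}$. Since $x^0 \leq k\varepsilon$, $\Gr(\{x\},\Upsilon_k)$ equals $\future(x) \cap \Upsilon_k$, which is the closed Euclidean disk of radius $r:=k\varepsilon-x^0$ around $\pi(x)$ at height $k\varepsilon$; this disk lies in $B_k=\Gr(C_k,\Upsilon_k)$ by definition of $B_k$. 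Because $r<\varepsilon$, I have $\overline{\mathcal{B}_r(\pi(x))}\subseteq \mathcal{B}_\varepsilon(\pi(x))\subseteq\pi(\patch_\ell)$ by the definition of $\plcheck$, so every point of the disk projects into $\pi(\patch_\ell)$ and the disk lies in $B_{k\ell}$, as required.

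For $\bigcup_k \widehat{C}_{k\ell}\subseteq\plhat$, I would take $y \in \widehat{C}_{k\ell}=\past(B_{k\ell})\cap C_k$ together with some $z \in B_{k\ell}$ satisfying $y \in \past(z)$. Then $z^0=k\varepsilon$ and causal past geometry gives $|\pi(z)-\pi(y)|\leq k\varepsilon-y^0<\varepsilon$. Since $z \in B_{k\ell}$, the point $\pi(z)\in \pi(\patch_\ell)$, so by bijectivity of $\pi_\Sigma$ there is a unique $x \in \patch_\ell$ with $\pi(x)=\pi(z)$. This places $\pi(y)$ in $\mathcal{B}_\varepsilon(\pi(x))$, and hence $y\in\pi_\Sigma^{-1}(\mathcal{B}_\varepsilon(\pi(x)))\subseteq\plhat$.

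There is no substantive obstacle; the only care needed is with the strict inequalities $r<\varepsilon$ and $k\varepsilon-y^0<\varepsilon$, both of which come from the half-open temporal structure of $C_k$ (the condition $y \notin \past(\Upsilon_{k-1})$ forces $y^0>(k-1)\varepsilon$ strictly). The rest is bookkeeping of the definitions of $B_k$, $B_{k\ell}$, and the grown/shrunk sets.
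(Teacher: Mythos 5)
Your proof is correct and rests on the same geometric fact as the paper's: since each $C_k$ lies in a slab of temporal thickness $\varepsilon$, causal growth or shrinkage toward $\Upsilon_k$ moves points, in the projection $\pi$, by strictly less than $\varepsilon$, which meshes with the $\varepsilon$-buffers defining $\plcheck$ and $\plhat$ and the injectivity of $\pi_\Sigma$. The only difference is presentational: the paper writes the second inclusion as a one-line computation of images under $\pi$ and declares the first ``analogous,'' whereas you carry out both inclusions pointwise, with the same careful attention to the strict inequality coming from the half-open slabs.
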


\begin{proof}
We prove the second relation, the first is analogous. The key is that $\widehat{C}_{k\ell} = C_k \cap \Gr(B_{k\ell},\Sigma)$, and since $C_k$ lies in the layer of thickness $\varepsilon$ between $\Upsilon_{k-1}$ and $\Upsilon_k$, any point of $B_{k\ell}$ can grow, in projection $\pi$, at most to a ball of radius $\varepsilon$. That is,
\begin{align}
\pi\Bigl(\bigcup_k \widehat{C}_{k\ell} \Bigr)
&=\bigcup_k \pi \bigl( C_k \cap \Gr(B_{k\ell},\Sigma) \bigr) 
= \bigcup_k\bigcup_{x\in B_{k\ell}} \pi\bigl( C_k \cap \past(x) \bigr) \nonumber\\
&\subseteq \bigcup_k \bigcup_{x\in B_{k\ell}} \mathcal{B}_\varepsilon(\pi(x)) 
= \bigcup_k \bigcup_{x\in C_{k\ell}} \mathcal{B}_\varepsilon(\pi(x)) \nonumber\\
&= \bigcup_{x\in \patch_{\ell}} \mathcal{B}_\varepsilon(\pi(x)) = \pi(\plhat)\,.
\end{align}
\end{proof}

\begin{prop}
\label{prop:mpproperties}
	For every $\varepsilon > 0$ the following inclusions hold:
	\begin{align}
		\mplcheck \cap \emptyset(\dpepsilon) &\subseteq \mslcheck, \label{eq:inclusion1}\\
		\mslhat &\subseteq \mplhat. \label{eq:inclusion2}		
	\end{align}
\end{prop}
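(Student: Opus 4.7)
My plan is to reduce both inclusions to a single geometric observation plus set-theoretic bookkeeping. The easy inclusion \eqref{eq:inclusion2} I would handle as follows: given $q \in \mslhat$, pick $s$ compatible with $L$ with $q \in \mcshat$. For each $\ell$, if $L_\ell=0$ then $s_{k\ell}=0$ for every $k$, so $q\cap\cklcheck = \emptyset$ for every $k$; combined with $\plcheck\subseteq\bigcup_k\cklcheck$ from Proposition~\ref{prop:PlCkl}, this yields $q \in \emptyset(\plcheck) = \mpllhat$. If $L_\ell=1$, then $s_{k_0\ell}=1$ for some $k_0$, so $q \cap \widehat{C}_{k_0\ell}\neq\emptyset$; since $\widehat{C}_{k_0\ell}\subseteq\plhat$, we get $q \in \exists(\plhat) = \mpllhat$. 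Intersecting over $\ell$ gives $q \in \mplhat$.

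The main work is for \eqref{eq:inclusion1}. The key geometric claim, call it $(\star)$, that I would establish first is
\be
	\cklhat \setminus \cklcheck \,\subseteq\, \plhat \setminus \plcheck \,\subseteq\, \dpepsilon
	\quad\text{for all }k,\ell.
\ee
The second containment is the definition of $\dpepsilon$; the first reduces (using $\cklhat\subseteq\plhat$ from Proposition~\ref{prop:PlCkl}) to $\cklhat \cap \plcheck \subseteq \cklcheck$. To prove this, let $x \in \cklhat \cap \plcheck$: $x \in C_k$ gives $0 \leq k\varepsilon - x^0 \leq \varepsilon$, and $x \in \plcheck$ gives $\mathcal{B}_\varepsilon(\pi(x)) \subseteq \pi(\patch_\ell)$. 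Every point of $\Gr(\{x\},\Upsilon_k)$ lies on $\Upsilon_k$ in the future of $x$, so its spatial distance from $\pi(x)$ is at most $k\varepsilon - x^0 \leq \varepsilon$; hence $\pi(\Gr(\{x\},\Upsilon_k))\subseteq \mathcal{B}_\varepsilon(\pi(x))\subseteq\pi(\patch_\ell)$. Combined with $\Gr(\{x\},\Upsilon_k)\subseteq \Gr(C_k,\Upsilon_k) = B_k$, this gives $\Gr(\{x\},\Upsilon_k)\subseteq B_{k\ell}$, i.e., $x \in \Sr(B_{k\ell},\Sigma) = \cklcheck$.

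Granted $(\star)$, for $q \in \mplcheck\cap\emptyset(\dpepsilon)$ I would define $s_{k\ell}:=1$ when $q\cap\cklcheck \neq \emptyset$ and $s_{k\ell}:=0$ otherwise. Then $q \in \mcsklcheck$ in every case: directly when $s_{k\ell}=1$, and via $q\cap\cklhat = (q\cap\cklcheck) \cup (q\cap(\cklhat\setminus\cklcheck)) = \emptyset$ using $(\star)$ when $s_{k\ell}=0$. Compatibility of $s$ with $L$ then follows: if $L_\ell=0$ then $q\cap\plhat=\emptyset$ forces $q\cap\cklcheck=\emptyset$ for every $k$, so $s_{k\ell}=0$ for all $k$; if $L_\ell=1$ then $q\cap\plcheck\neq\emptyset$ combined with $\plcheck\subseteq\bigcup_k\cklcheck$ gives $q\cap\check{C}_{k_0\ell}\neq\emptyset$ for some $k_0$, so $s_{k_0\ell}=1$. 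Hence $q \in \mcscheck\subseteq\mslcheck$.

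The only real obstacle is the geometric claim $(\star)$; the rest is set-theoretic bookkeeping on the $M$-sets. What makes $(\star)$ work is the matching of the parameter $\varepsilon$ between the temporal discretization width of the slabs $C_k$ and the spatial radius defining $\plcheck$ and $\plhat$: this matching forces the causal spread of any single point across $C_k$ to fit inside the ball used to widen or shrink the patches.
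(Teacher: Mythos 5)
Your proof is correct, and in substance it runs along the same lines as the paper's: both inclusions are reduced to Proposition~\ref{prop:PlCkl} together with the observation that $\dpepsilon$ absorbs the discrepancy between $\cklcheck$ and $\cklhat$. The difference is organizational. The paper argues per $\ell$ with set identities, computing $\bigcup_{s_\ell:L_\ell}\bigcap_k \mcsklcheck$ and $\bigcup_{s_\ell:L_\ell}\bigcap_k \mcsklhat$ and sandwiching them against the $\patch$-sets; the one nontrivial step there, $\bigcap_{k\neq k_0}\bigl(\emptyset(\cklhat)\cup\exists(\cklcheck)\bigr)\supseteq \emptyset(\dpepsilon)$, is precisely the content of your claim $(\star)$ and is used without separate proof (it also follows from Proposition~\ref{prop:PlCkl} plus the disjointness of the slabs $C_k$: $x\in\cklhat\cap\plcheck$ gives $x\in\check{C}_{k'\ell}$ for some $k'$, and $x\in C_k$ pins down $k'=k$). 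You instead fix a configuration $q$, construct the compatible witness $s$ explicitly via $s_{k\ell}=1$ iff $q\cap\cklcheck\neq\emptyset$, and prove $(\star)$ directly from the causal structure; this makes transparent both the glossed-over step and the decoupling over $\ell$ that the paper invokes when ``taking the intersection over $\ell$.'' One cosmetic point: $x\in C_k$ actually gives the strict bound $k\varepsilon-x^0<\varepsilon$ (since $x\notin\past(\Upsilon_{k-1})$), which is what you need if $\mathcal{B}_\varepsilon(\pi(x))$ is read as an open ball; your ``$\leq\varepsilon$'' should be sharpened accordingly, but this changes nothing of substance.
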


\begin{proof}
Let us write $s_\ell=(s_{k\ell})_{\kappa\leq k \leq K}$ and $s_\ell : L_\ell$ if a sequence $s_\ell$ is compatible with $L_\ell$ for a particular $\ell$. So, $s:L$ is equivalent to $\forall \ell: s_\ell : L_\ell$, and $\bigcup_{s:L}$ can be replaced by $\bigcup_{s_1:L_1} \cdots \bigcup_{s_r:L_r}$. 

Now consider one $\ell$ and suppose $s_\ell: L_\ell$. If $L_\ell=0$, then $s_{k\ell}=0$ for all $k$, so 
\be
\bigcap_k \mcsklhat \Big|_{s_{k\ell}=0}= \bigcap_k \emptyset(\cklcheck) = \emptyset\Bigl( \bigcup_k \cklcheck \Bigr) \subseteq \emptyset(\plcheck) = \mpllhat
\ee
by Proposition~\ref{prop:PlCkl}, and thus,
\be
\bigcup_{s_\ell:L_\ell=0} \bigcap_k \mcsklhat \subseteq \mpllhat\,.
\ee
(The union is actually trivial, that is, there is just one term, as there is just one $s_\ell$ with all $s_{k\ell}=0$.)
Likewise, 
$\mcsklcheck=\emptyset(\cklhat)\supseteq \emptyset(\plhat)=\mpllcheck$ for all $k$ by Proposition~\ref{prop:PlCkl}, and thus,
\be
\bigcup_{s_\ell:L_\ell=0} \bigcap_k \mcsklcheck \supseteq \mpllcheck \supseteq \mpllcheck \cap \emptyset(\dpepsilon) \,.
\ee
On the other hand, if $L_\ell=1$, then $s_{k\ell}=1$ for some $k=k_0$, so, by Proposition~\ref{prop:PlCkl} again,
\be
\bigcap_k \mcsklhat \subseteq \mcsklhat \Big|_{k=k_0}
= \exists(\cklhat \big|_{k=k_0}) \subseteq 
\exists \Bigl( \bigcup_k \cklhat \Bigr) \subseteq \exists (\plhat)=\mpllhat
\ee
and thus,
\be
\bigcup_{s_\ell:L_\ell=1} \bigcap_k \mcsklhat \subseteq \mpllhat\,.
\ee
Likewise, $\mcsklcheck = \exists(\cklcheck)$ for some $k=k_0$, so, by Proposition~\ref{prop:PlCkl} again,
\begin{align}
\bigcup_{s_\ell:L_\ell=1} \bigcap_k \mcsklcheck 
&= \bigcup_{k_0} \bigcup_{\substack{s_{k\ell}=0,1\\ s_{k_0,\ell}=1}} \bigcap_k \mcsklcheck \nonumber\\
&= \bigcup_{k_0} \biggl[ \mcsklcheck\Big|_{k=k_0} \cap  \bigcap_{k\neq k_0}\Bigl( \emptyset(\cklhat)\cup \exists(\cklcheck) \Bigr) \biggr] \nonumber\\
&\supseteq \bigcup_{k_0} \biggl[ \mcsklcheck\Big|_{k=k_0} \cap  \emptyset(\dpepsilon)  \biggr] \nonumber\\
&= \bigcup_{k} \exists(\cklcheck) \cap \emptyset(\dpepsilon) \nonumber\\
&= \exists \Bigl( \bigcup_k \cklcheck \Bigr)  \cap \emptyset(\dpepsilon) \nonumber\\ 
&\supseteq \exists (\plcheck) \cap \emptyset(\dpepsilon) =\mpllcheck \cap \emptyset(\dpepsilon)\,.
\end{align}
Thus, the same relations hold in the cases $L_\ell=0$ and $L_\ell=1$. Taking the intersection over $\ell$, we obtain the Proposition.
\end{proof}

\begin{proof}[\textbf{Proof of Theorem \ref{thm:prob}}]
Let $\Prob^{\psi_\Sigma}(\cdot)$ denote the probability measure $\langle \psi_\Sigma | P_\Sigma(\cdot) | \psi_\Sigma \rangle$ on $\Gamma(\Sigma)$.
The relations \eqref{eq:inclusion1}, \eqref{eq:inclusion2} imply $\Prob^{\psi_\Sigma}(\mplcheck\cap \emptyset(\dpepsilon)) \leq  \Prob^{\psi_\Sigma}(\mslcheck)$ and $ \Prob^{\psi_\Sigma}(\mslhat) \leq \Prob^{\psi_\Sigma}(\mplhat)$ for all $\varepsilon > 0$. Hence, by \eqref{eq:probbounds},
	\be
		\Prob^{\psi_\Sigma}\Bigl(\mplcheck \cap \emptyset(\dpepsilon) \Bigr) 
		\leq \Prob^{\psi,\varepsilon}_{\det,\sP}(L) 
		\leq  \Prob^{\psi_\Sigma}\Bigl(\mplhat\Bigr).
		\label{eq:probbounds2}
	\ee
	At the same time, the relations \eqref{eq:mpinclusions1} yield:
	\be
		\Prob^{\psi_\Sigma}\Bigl(\mplcheck \cap \emptyset(\dpepsilon) \Bigr) 
		\leq \Prob^{\psi_\Sigma}(\msl) 
		\leq  \Prob^{\psi_\Sigma}\Bigl(\mplhat \Bigr).
		\label{eq:probbounds3}
	\ee
	To establish the claim of the theorem, $\displaystyle \lim_{\varepsilon \rightarrow 0} \Prob^{\psi,\varepsilon}_{\det,\sP}(L) = \Prob^{\psi_\Sigma}(\msl)$,
	it thus suffices to show that
	\be
		\lim_{\varepsilon \rightarrow 0} \Prob^{\psi_\Sigma}\Bigl(\mplcheck \cap \emptyset(\dpepsilon) \Bigr) 
		= \Prob^{\psi_\Sigma}(\msl) 
		= \lim_{\varepsilon \rightarrow 0} \Prob^{\psi_\Sigma}\Bigl(\mplhat \Bigr).
		\label{eq:problimits}
	\ee
By \eqref{monotonepatch} and \eqref{eq:mpinclusions2}, $\mplcheck \cap \emptyset(\dpepsilon)$ [respectively, $\mplhat$] is a set that depends on $\varepsilon>0$ as a decreasing [increasing] function, so its measure is also decreasing [increasing] function of $\varepsilon$. Now, any decreasing [increasing] bounded function $f(\varepsilon)$ of $\varepsilon>0$ possesses a limit as $\varepsilon\to 0$, and
\be
\lim_{\varepsilon\to 0} f(\varepsilon) = \sup_{\varepsilon>0} f(\varepsilon) = \lim_{n\to \infty} f(1/n)
\ee
[with sup replaced by inf if $f$ is increasing]. So the limits in \eqref{eq:problimits} exist, and 
we need to show that they are both equal to $\Prob^{\psi_\Sigma}(\msl)$.
By $\sigma$-continuity of measures and the monotonicity of the sets,
	\be
		\lim_{n \rightarrow \infty} \Prob^{\psi_\Sigma}\Bigl(\check{M}^{1/n}_{\patch}(L) \cap \emptyset(\partial \patch^{1/n}) \Bigr) 
		= \Prob^{\psi_\Sigma} \Bigl( \bigcup_{n\in\NNN} \bigl[\check{M}^{1/n}_\patch(L) \cap \emptyset(\partial \patch^{1/n}) \bigr] \Bigr)\,.
	\ee
Using again that the set is a decreasing function of $\varepsilon$,
	\be
		 \bigcup_{n\in\NNN} \bigl[\check{M}^{1/n}_\patch(L) \cap \emptyset(\partial \patch^{1/n}) \bigr]
		=  \bigcup_{\varepsilon>0} \bigl[\mplcheck \cap \emptyset(\dpepsilon) \bigr] \,.
	\ee
Thus,
	\begin{align}
		\lim_{\varepsilon \rightarrow 0} \Prob^{\psi_\Sigma}\Bigl(\mplcheck \cap \emptyset(\dpepsilon) \Bigr) 
		&= \Prob^{\psi_\Sigma} \Bigl( \bigcup_{\varepsilon>0} \bigl[\mplcheck \cap \emptyset(\dpepsilon) \bigr] \Bigr) \nonumber\\
		\lim_{\varepsilon \rightarrow 0} \Prob^{\psi_\Sigma}\Bigl(\mplhat \Bigr)
		&= \Prob^{\psi_\Sigma} \Bigl( \bigcap_{\varepsilon>0} \mplhat \Bigr)\,.
		\label{eq:sigmalimits}
	\end{align}
Let $q\in \bigcap_{\varepsilon>0} \mplhat \setminus \msl$. By the definition of the sets $\msl$ and $\mplhat$, there exist $x \in q$ and $1 \leq \ell \leq r$ such that $\pi(x) \in \partial \pi(\patch_\ell)$.
Thus, as $\pi_\Sigma : \Sigma \rightarrow \RRR^3$ is a homeomorphism, 
\be
	\bigcap_{\varepsilon>0} \mplhat \setminus \msl  \subseteq S_\partial,
\ee
where
\be
	S_\partial := \Bigl\{ q \in \Gamma(\Sigma) : \exists \ell : q \cap \partial \patch_\ell \neq \emptyset\Bigr\}.
\ee
Now, as $\mu_\Sigma(\partial \patch_\ell) = 0 \, \forall \ell$ by the definition of an admissible partition, $\mu_{\Gamma(\Sigma)}(S_\partial) = 0$. Hence, by definition of hypersurface evolution, $P_\Sigma(S_\partial) = 0$, so $\Prob^{\psi_\Sigma}(S_\partial) = 0$ and
\be
\Prob^{\psi_\Sigma} \Bigl( \bigcap_{\varepsilon>0} \mplhat \Bigr) = \Prob^{\psi_\Sigma} \Bigl(\msl \Bigr)\,.
\ee
The argument works in the same way for $\displaystyle \bigcup_{\varepsilon>0} \bigl[\mplcheck \cap \emptyset(\dpepsilon) \bigr]$.
\end{proof}

\section{Conclusions and Outlook} \label{sec:conclusions}

In this paper, we have shown that one can, starting only from measurement postulates for equal times (i.e., on horizontal surfaces), justify the \textit{curved Born rule} for arbitrary Cauchy surfaces $\Sigma$. To ``justify'' means to give a suitable precise definition of a ``detection process'' and a rigorous mathematical derivation of the curved Born rule using this definition as well as a set of assumptions about the underlying wave function dynamics. 
We use two essential assumptions: ``interaction locality'' (IL) and ``propagation locality'' (PL). 
Our main result (Theorem \ref{thm:prob}) demonstrates, among other things, the direct physical significance of the hypersurface wave function $\psi_\Sigma$ as a probability amplitude in detection experiments. 
The curved Born rule has been often taken for granted in the literature, but to the best of our knowledge a justification had not been given before. 

Our research gives rise to two further questions. Firstly, one can well imagine a different setup for the detection process than ours (which aims at using measurement postulates only for horizontal surfaces). For example, assuming Poincar\'e invariance of the theory,  one may approximate $\Sigma$ by a surface that is piecewise flat (but non-horizontal) as in Figure~\ref{fig:detectiondefs}(a). 
We conjecture that such an approximation scheme leads to the same curved Born rule while avoiding the need for excluding double detections.

Secondly, the fact that the detection probabilities along a curved Cauchy surface $\Sigma$ are determined already by the evolution (unitary and through collapses) of the wave function $\psi_{\Sigma_t}$ on \emph{horizontal} surfaces suggests 
another conjecture: that also the hypersurface wave function $\psi_\Sigma$ is already determined by the evolution of $\psi_{\Sigma_t}$ on the horizontal surfaces together with the postulates (IL) and (PL).

These two questions shall be addressed in subsequent work.

\bigskip

\noindent{\it Acknowledgments.}
We thank Detlev Buchholz, Carla Cederbaum, Eddy Keming Chen, Sheldon Goldstein, S\"oren Petrat, Nicola Pinamonti, Reiner Sch\"atzle, and Stefan Teufel for helpful discussions.\\[1mm] 
\begin{minipage}{15mm}
\includegraphics[width=13mm]{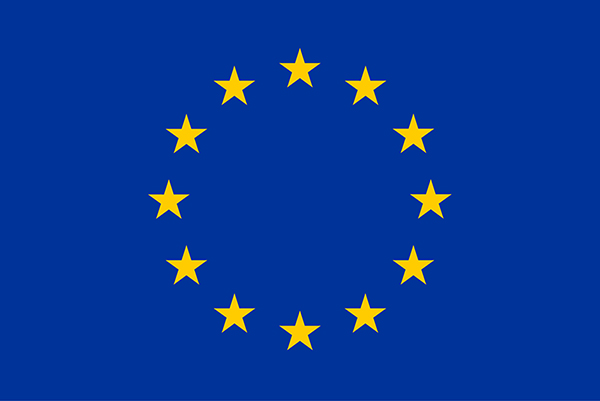}
\end{minipage}
\begin{minipage}{143mm}
This project has received funding from the European Union's Framework for Research and Innovation Horizon 2020 (2014--2020) under the Marie Sk{\l}odowska-
\end{minipage}\\[1mm]
Curie Grant Agreement No.~705295.

\end{document}